\theoremstyle{definition}
\newtheorem{defn}{\protect\definitionname}
\theoremstyle{plain}
\newtheorem{thm}{\protect\theoremname}
\theoremstyle{plain}
\newtheorem{cor}{\protect\corollaryname}
\theoremstyle{plain}
\newtheorem{prop}{\protect\propositionname}
\theoremstyle{plain}
\newtheorem{fact}{\protect\factname}
\theoremstyle{plain}
\newtheorem{lem}{\protect\lemmaname}
  \providecommand{\corollaryname}{Corollary}
  \providecommand{\definitionname}{Definition}
  \providecommand{\lemmaname}{Lemma}
\providecommand{\theoremname}{Theorem}
  \providecommand{\definitionname}{Definition}
  \providecommand{\lemmaname}{Lemma}
\providecommand{\corollaryname}{Corollary}
\providecommand{\theoremname}{Theorem}
  \providecommand{\definitionname}{Definition}
  \providecommand{\lemmaname}{Lemma}
\providecommand{\corollaryname}{Corollary}
\providecommand{\theoremname}{Theorem}
  \providecommand{\definitionname}{Definition}
  \providecommand{\lemmaname}{Lemma}
\providecommand{\corollaryname}{Corollary}
\providecommand{\theoremname}{Theorem}
  \providecommand{\definitionname}{Definition}
  \providecommand{\lemmaname}{Lemma}
  \providecommand{\propositionname}{Proposition}
\providecommand{\corollaryname}{Corollary}
\providecommand{\theoremname}{Theorem}
  \providecommand{\definitionname}{Definition}
  \providecommand{\lemmaname}{Lemma}
  \providecommand{\propositionname}{Proposition}
\providecommand{\corollaryname}{Corollary}
\providecommand{\theoremname}{Theorem}
  \providecommand{\definitionname}{Definition}
  \providecommand{\lemmaname}{Lemma}
  \providecommand{\propositionname}{Proposition}
\providecommand{\corollaryname}{Corollary}
\providecommand{\theoremname}{Theorem}
\providecommand{\corollaryname}{Corollary}
\providecommand{\definitionname}{Definition}
\providecommand{\factname}{Fact}
\providecommand{\lemmaname}{Lemma}
\providecommand{\propositionname}{Proposition}
\providecommand{\theoremname}{Theorem}
\begin{document}
\begin{titlepage}
\title{\textbf{On uniform boundedness of sequential social learning}}
\author{Itay Kavaler$^{a}$$^{,}$\thanks{\ Corresponding author.\protect \\
\indent \ \ \ E-mail addresses: itayk@campus.technion.ac.il (I.Kavaler).}}
\date{$^{a}$Davidson Faculty of Industrial Engineering and Management,
Technion, Haifa 3200003, Israel \\
}
\maketitle
\begin{abstract}
In the classical herding model, asymptotic learning refers to situations
where individuals eventually take the correct action regardless of
their private information. Classical results identify classes of information
structures for which such learning occurs. Recent papers have argued
that typically, even when asymptotic learning occurs, it takes a very
long time. In this paper related questions are referred. We studiy
whether there is a natural family of information structures for which
the time it takes until individuals learn is uniformly bounded from
above. Indeed, we propose a simple bi-parametric criterion that defines
the information structure, and on top of that compute the time by
which individuals learn (with high probability) for any pair of parameters.
Namely, we identify a family of information structures where individuals
learn uniformly fast. 

The underlying technical tool we deploy is a uniform convergence result
on a newly introduced class of `weakly active\textquoteright{} supermartingales.
This result extends an earlier result of \citet{Fudenberg-Levine-1992}
on active supermartingales.

\vspace{1cm}

\noindent JEL classification: D83

\smallskip{}

\noindent Keywords: Social Learning; Information Cascades; Asymptotic
Learning
\end{abstract}
\vspace{0cm}

\section{Introduction}

Oftentimes, when individuals make decisions, they rely on their private
information as well as on previous decisions that other individuals
have made. Observational learning models have been studied on purpose
to formalize and analyze this insightful behavior. The paradigmatic
setup of the model (e.g. \citet{Sushil-1992}) includes an infinite
sequence of individuals who choose actions with the objective to match
the unknown (binary) state of the world. Each individual receives
a private information which is iid conditioned on the underlying state,
and in addition to that, observes the entire sequence of earlier actions.

\citet{Lones-Smith-2000} study the scenario where the private information
is arbitrarily strong (unbounded in their jargon). Informally, each
individual may obtain (with positive probability) a private information
in favor to one state that will overturn any evidence in favor to
the contrary one. When the private information is unbounded and all
actions are publicly observed, \citet{Lones-Smith-2000} have shown
that asymptotic learning occurs with probability one: the public belief
converges to the Dirac measure on the realized state and individuals
eventually take the correct action.\footnote{The rationale behind this learning is that, whenever a sequence of
incorrect actions is unfolding there exist individuals who will eventually
put an end to it regardless of their belief assigned to the incorrect
state.}

Although asymptotic learning has been established for unbounded signals,
limited attention has been paid in the literature to speed at which
asymptotic learning converges and the rate at which the public belief
approaches one. Recent evidence suggests that learning is typically
`slow' (see for example the recent paper by \citet{Tamuz-2017}).
Motivated by this we identify the structural properties of the information
structure (where the last refers to a pair of distributions over posterior
beliefs, one for each state of nature\textbf{)} for which the aforementioned
convergence not only happens \textquoteleft quickly\textquoteright{}
but in a uniform rate. To that end, we revisit the notion of informativeness.

Private information must not perfectly reveal the state of the world.
This ensures that any (set of) information  which is possible in regard
to one state, is possible in regard to the other state. In addition
to that, private information is  assumed to rule out trivialities
in the sense that some (set of) information  is measured differently
with respect to the states. As a supplemental assumption, the last
property is refined in the following manner.

Recall that private information is naturally associated with a pair
of distributions corresponding to the set of states of the world.
Informally, we focus on a certain classes of information structures
(pairs of posterior beliefs) that admit a uniform rate of convergence.
Each class is bi-parametric, defined by a pair of parameters $\psi$
and $\nu$ in $(0,1)$, such that the absolute value of the point-wise
distances, corresponding to each pair of distributions in that class,
decreases at a uniform bounded rate  which is merely determined by
these parameters. In particular, $\psi$ plays as an initial constraint
whereas $\nu$ determines the bounded rate parameter of convergence.

At each stage consider the ratio between the public belief over one
state and that belief over the other state. This process is shown
to converge `quickly' as the distance between the corresponding pairs
decreases fast. Our main result shows that the ratio processes induced
by all pairs corresponding to the same bi-parameter class, represented
by some $\psi$ and $\nu,$ converge 'fast' and at a uniformly bounded
rate which is solely determined by these parameters. In addition,
we show that these classes are tight in the sense that once we relax
the bi-parametric constrain, the ratio process may converge very slow.

It is well known (\citet{Lones-Smith-2000}) that the ratio process
is a state conditional martingale (and hence it is a suprmartingale)
which converges to the correct cascade set with probability one. We
introduce a new class of supermartingales, called {\em weakly active},
where one of which process admits a `high' rate of convergence which
is bounded from below by $\frac{\psi}{(t+1)^{v}},$ for any given
$\psi$ and $\nu$ at each time $t.$

\vspace{0cm}

Our prevailing results are mathematical: we show that these classes
of supermartingales, identified by $\psi$ and $\nu$, converge uniformly.
Specifically, given a precision value $\epsilon$, there exits a constant
time $K$, which entirely depends on $\psi$, $\nu$ and $\epsilon$,
such that with high $1-\epsilon$ probability the entire processes
in the associated class (uniformly) fall and remain below $\epsilon$
from time $K$ onward.

In addition, we show that there is a one to one correspondence between
the classes of `weakly active' supermartingales and the classes of
bi-parametric private information structures. In particular we show
that any ratio process which is uniquely induced from a bi-parametric
class, identified $\psi$ and $\nu$, is a weakly active supermatingale
defined by the same parameters.

Finally, as learning is merely asymptotically approachable, the wrong
action may be selected with a positive probability at any finite time.
In particular, the sequence of actions may not settle on the correct
action from some finite time onward and more rigorously, it may require
an unreasonable amount of time to put an end of a sequence of incorrect
actions. \citet{Dinah-Rosenberg-2017} consider the following two
criteria: One criterion looks at how long is needed until an individual
takes the correct action, the other one measures the total number
of incorrect actions. \citet{Dinah-Rosenberg-2017} study conditions
over the information structures for which the expectations of these
two variables is finite. 

While their (necessary) condition for efficiency involves a finite
indefinite integral regarding the unconditional private information's
distribution, this criterion is surprisingly captured by the informativeness
approach taken here in the following rigorous form; We show that the
aforementioned expected times, corresponding to any bi-parametric
class of  information identified by $\psi$ and $\nu$, are not only
finite but uniformly bounded by a constant which solely depends on
$\psi$ and $\nu$ and is independent of the distribution\textquoteright s
specification in that class.

\subsection{Results}

We introduce a new condition over the set of information structures.
We show that the stochastic ratio processes induced by an information
structure satisfying the condition are supermartingales which are
shown to admit a uniform bounded rate of convergence. Additionally,
once the condition is relaxed, then either a `fast' convergence occurs
or the associated private information distributions are sufficiently
`close', yielding that less information about the true state of the
world is obtained. In particular, under the proposed bi-parametric
constrain, we show that the expected times of the first correct action
as well as the total number of incorrect actions are finite and uniformly
bounded.

The substantial results are crucially relied on our preeminent mathematical
contribution. A new class of weakly active supermartingales is introduced,
linked and shown  to admit a uniform rate of convergence.

\subsection{Related literature}

\citet{Chamley-2004} provides an estimation for the the public belief
unfolding in case for which the associated private information distributions
admit 'fat tails' in his jargon. He also reports some numerical evidence
in the Gaussian case using a computer simulation. \citet{Tamuz-2017}
argue that the rate growth of the result in ratio process, associated
with individuals who observe actions, is sub-exponentially close.

\citet{Xavier-Vives-1993}, studies the speed of sequential learning
in a model with actions chosen from a continuum where individuals
observe an information with error term regarding their predecessors'
actions. He similarly shows that learning is significantly slower
than in the threshold case (for further overview of this literature
please refer to \citet{Vives-book}, Chapter 6). For specific  information
structure, some bounds on the probability that an individual takes
the incorrect action have been established. Under the assumption that
private information is  uniformly distributed, \citet{Ilan-Lobel-2009}
establish upper bounds on this probability, where \citet{Tamuz-2017}
provide a lower bound, under the assumption that private information
is  normally distributed, and all previous actions are publicly observed.

Much of the literature has focused on the long-term outcomes of learning.
As aforestated, when private information is unbounded, the individuals
eventually choose the correct action with probability one. Hence,
a natural question one may arise is how long does it take for that
to happen and study its expectation.

\citet{Lones-Smith-1996}, an early version of \citet{Lones-Smith-2000},
addressed this issue showing that the expected time to learn is infinite
for some private information structures. They further conjecture that
this result is extended to an arbitrary structure, which on the contrary,
is shown to be false. \citet{Dinah-Rosenberg-2017} have studied related
questions. In particular, they identify a simple condition on the
unconditional private information distributions for which the expected
times until the first correct action, as well as the number of incorrect
actions, are finite. Furthermore, \citet{Tamuz-2017} poses a different
assumption on the conditional private information distributions' tails
for which the expected time to learn is finite.

Those two approaches are distilled here collectively as the aforementioned
expected times are captured under our new informativeness notion.

\vspace{0cm}

\section{Model\label{sec:Model}}

The general framework includes the Nature who chooses a binary state
of the world within $S=\{L,H\}.$ For notional simplicity both states
are assumed to be drawn likely. A sequence of short-lived individuals
make decisions in turn. In round $k$, an individual $k$ chooses
an action from the set $A=\{1,...,m,...,M\}$ with the objective of
matching the underlying unknown state. Significant savings in computational
cost, we consider the case for which $M=2.$

Let $U:S\times A\longrightarrow\mathbb{R}$ be a payoff function.
In each state $s\in\{L,H\}$ an individual earns a payoff $U(s,m)$
from the action $m$ and seeks to maximize his payoff. It is assumed
that $U(H,m)$ is increasing in $m$ whereas $U(L,m)$ is decreasing,
and in particular no action is optimal over another at one belief.
This will avoid that no pair of actions provides identical payoffs
in all states. It will further ensure that each individual has two
extreme actions, each is strictly optimal in some state. Particularly,
action $2$ is optimal for $H$ where action $1$ is optimal for $L.$

\vspace{0cm}

{\em Private belief}: Each individual receives a random signal
about the true state of the world and then uses Bayes rule to compute
his private belief.\footnote{We adopt the common literature terminology and refer to that belief
as the private information an individual assigns to the states, given
his own information.} Conditional on the states $H,L,$ a private belief (rather then a
signal) is drawn iid according to some cdfs: $F^{H},F^{L},$ respectively.
It is further assumed that $F^{L},F^{H}$ have a common support, say
$supp(F),$ such that the convex hull $co(supp(F))\equiv[\underset{-}{b},\stackrel{-}{b}]\subseteq[0,1]$
for some $0<\underset{-}{b}<\stackrel{-}{b}<1$.

Two additional extreme restrictions must be posed on $F^{L}$ and
$F^{H};$ No private belief perfectly reveals the true state of the
world. Formally, $F^{L}\sim F^{H},$ or equivalently, the Radon-Nikodym
$f=\frac{dF^{L}}{dF^{H}}$ exists and is finite at any point in $supp(F).$\footnote{$F^{H}\sim F^{L}$ implies that $F^{H}(p)>0$ if and only if $F^{L}(p)>0$
at any point $p$ in $supp(F).$} For the second restriction it is assumed that some signals are informative;
for avoiding triviality $F^{L}$ and $F^{H}$ must not coincide. Formally,
 $\underset{-}{b}<\frac{1}{2}<\stackrel{-}{b},$ this rules out that
$f=1$ a.s.. A private belief is said to be bounded if $0<\underset{-}{b}<\stackrel{-}{b}<1,$
and it is called unbounded whenever $co(supp(F))\equiv[\underset{-}{b},\stackrel{-}{b}]=[0,1].$

\vspace{0cm}

{\em Posterior belief}: Given a posterior belief $r\in[0,1]$ that
the true state is $H,$ the expected payoff for choosing action $m$
is $rU(H,m)+(1-r)U(L,m)$. As each individual maximizes his expected
payoff, $U$ and $r$ naturally induce a unique thresholds:

\noindent 
\begin{equation}
0=r_{0}<r_{1}<r_{2}=1\label{eq:trheshods in r}
\end{equation}

\noindent so that each interval $I_{m}:=[r_{m-1},r_{m}]$ is associated
with an optimal action $m$ that the expected payoff of choosing action
$m$ is maximal. Additionally, in equilibrium, each individual can
compute the probability that the state is $H$ after any history of
choices of all previous individuals. This probability is referred
to the {\em public belief} and is denoted by $q$. Applying Bayes
rule for a given public belief $q$ and a private belief $p$, the
posterior belief $r$ that the state is $H$ satisfies

\begin{equation}
r=r(p,q)=\frac{pq}{pq+(1-p)(1-q)}.\label{eq:threreshold in P}
\end{equation}

{\em Game timing}: For each $k$, individual $k$ observes his
private signal and all actions of previous individuals, using Bayes
rule he then updates his posterior belief and chooses an action which
maximizes his payoff accordingly.

\vspace{0cm}

{\em Threshold}: Since the right hand side of \eqref{eq:threreshold in P}
is increasing in $p,$ there are private belief thresholds

\noindent 
\begin{equation}
0=p_{0}(q)\leq p_{1}(q)\leq p_{2}(q)=1,\label{eq:threshold in p(q)}
\end{equation}

\noindent corresponding to inequalities \eqref{eq:trheshods in r}
so that, given $q,$ an individual chooses the action $m$ if and
only if his private belief $p\in(p_{m-1}(q),\ p_{m}(q)].$ Equality
\eqref{eq:threreshold in P} further implies that each threshold $p_{m}(q)$
is decreasing in $q.$ 

\vspace{0cm}

{\em Cascade set}: Each action $m$ is associated with a set of
public beliefs $J_{m}=\{q|\ supp(F)\subseteq(p_{m-1}(q),\ p_{m}(q)]\}.$
Hence, an individual takes an action $m$ almost surely whenever $q\in int(J_{m})$.

\subsection{Asymptotic learning}

\noindent Let $q_{k}$ be the public belief that the state is $H$
which, using Bayes rule, is updated accordingly after individual $k$
chooses an action. Conceptually, since we are interested in the conditional
stochastic properties that the state is $H,$ it is more convenience
to consider the {\em public likelihood ratio} $l_{k}\equiv\frac{1-q_{k}}{q_{k}}$
that the state is $L$ versus $H.$ 

The following analogues can then be inferred; for each $q$ the public
belief thresholds are $\bar{p}_{m}(l:=\frac{(1-q)}{q})=p_{m}(q);$
for each posterior belief $r\in I_{m}$ if and only if $\frac{(1-r)}{r}\in\bar{I}_{m}:=[\frac{1-r_{m-1}}{r_{m-1}},\frac{1-r_{m}}{r_{m}}];$
and for each cascade set $q\in J_{m}$ if and only if $\frac{(1-q)}{q}\in\bar{J}_{m}$.

By (\citet{Lones-Smith-2000}, Lemma 2) we know that, with unbounded
belief, $\bar{J}_{1}=\{\infty\}$ and $\bar{J}_{2}=\{0\},$ where
all the other cascade sets are empty. Moreover, conditional on the
state $H$, $l_{k}$ is a converges martingale with $l_{k}\rightarrow\bar{J}_{2}=\{0\},$
and hence fully correct asymptotic learning  occurs almost surely
(for further details please refer to \citet{Lones-Smith-2000}, Lemma
3).

Since, conditional on $l_{k},$ an individual $k$ chooses the action
$m$ if and only if his private belief $p_{k}\in[\bar{p}_{m-1},\bar{p}_{m}],$
it follows that the corresponding likelihood public belief at time
$k+1,$ $l_{k+1},$ is updated according to the likelihood posterior
belief $\frac{(1-r_{k})}{r_{k}}.$ Hence, as each state is equally
likely, it is natural to describe the stochastic process $\{l_{k}\}_{k\geq0}$
iteratively via

\begin{equation}
l_{k+1}=l_{k}\frac{\rho(\cdot|l_{k},L)}{\rho(\cdot|l_{k},H)},\ l_{0}\equiv1\label{eq:iteratively of ln}
\end{equation}

\noindent where the transition

\begin{equation}
\rho(m|l,s)=F^{s}(\bar{p}_{m}(l))-F^{s}(\bar{p}_{m-1}(l)),\label{eq:transition}
\end{equation}

\noindent is the probability that an individual takes the action $m$
given $l$ where the true state is $s\in\{H,L\}.$ Consequently, the
cascade set $\bar{J}_{m}$ is the interval of the likelihoods $l$
such that 
\begin{equation}
\rho(m|l,H)=\rho(m|l,L)=1,\label{eq:transitions converfes to 1}
\end{equation}

\noindent and further, as the likelihood posterior belief jumps above
one whenever an individual follows an incorrect action, it can be
inferred that for every partial sequence of actions $a_{1},...,a_{k+1},$

\noindent 
\[
l_{k+1}(a_{1},...,a_{k+1})=l_{k}(a_{1},...,a_{k})\frac{\rho(a_{k+1}|l_{k},L)}{\rho(a_{k+1}l_{k},H)}>1\text{if and only if }a_{k+1}=1.
\]

\section{Uniform estimation}

Recall that a pair of cdfs, $(F^{L},F^{H}),$ together with an initial
probability $P(H)=P(L)=\frac{1}{2}$ induce a unique stochastic process
$l_{k}$ which satisfies equation \eqref{eq:iteratively of ln}. Additionally,
recall that when $F^{L},F^{H}$ are unbounded, $l_{k}$ converges
almost surly to the cascade set $\bar{J}_{2}=\{0\}$ corresponding
to the true action $2$. As the transitions converge to one, by \eqref{eq:transitions converfes to 1},
the process $l_{k}$ decays `slowly' to zero (For suplemental details
please refer to \citet{Tamuz-2017}, Theorems 1,2). Hence, we are
motivated to establish a class of processes which will not only dismantle,
in some extent, this `sluggishness' convergence in moderation, but
admit a uniform rate of converges. To capture this idea let us consider
the following class of unbounded private beliefs cdfs.

Each pair ($F^{L},F^{H})$ is equipped with the following iterative
deterministic sequence: $\tilde{l}_{k+1}=\tilde{l}_{k}\frac{\rho(2|\tilde{l}_{k},L)}{\rho(2\tilde{l}_{k},H)}$
started at $\tilde{l}_{0}=1.$ In words, for each $k\geq0$ the value
of $\tilde{l}_{k}$ is associated with a particular partial history
of choices in which all individuals, from day one to $k,$ pick the
correct action $2$. Hence, $(\tilde{l}_{k})_{k\geq0}$ is monotone
decreasing to zero. In addition, let us denote by $\Delta(p)=F^{L}(p)-F^{H}(p)$
to be the distance between two corresponding $\text{cdf}'s$ at $p.$
\begin{defn}
\label{DEF: informative CDFs}Let $\psi,\nu\in(0,1).$ A pair of unbounded
private beliefs cdf's, ($F^{L},F^{H}),$ is {\em$(\psi,\nu)$ -
informative} if

\begin{equation}
\Delta(\bar{p}_{1}(\tilde{l}_{k}))>\frac{\psi}{(k+1)^{\nu}}\label{eq: condition (psi,v) - informative}
\end{equation}

\noindent for all $k\geq0.$
\end{defn}
In order to justify the information structure in which the private
belief $p$ (rather than private signal) is iid conditioned on the
underlying state, the following correspondence must arise from Bayesian
updating: $f(p)=(1-p)/p$ almost surely in $(0,1)$. This immediately
implies that $\Delta(p)$ is non-decreasing as $p<1/2$, and furthermore,
since it is strict whenever $p\in supp(F)\setminus\{1/2\},$ it follow
that $F^{H}<F^{L}$, except when both terms are $0$ or $1$ (for
further details please refer to Lemma A.1 in \citet{Lones-Smith-2000}).

Therefore, the $(\psi,\nu)$ - informativeness assumption entails
the following restriction: whereas $\Delta(p)$ weakly decreases as
$p$ tends to zero, this distance, on a particular sample sequence
$\tilde{l}_{k},$ cannot decay too slowly but at rate of at least
$\frac{\psi}{(k+1)^{\nu}}.$ A pair of cdfs  which complies with this
restriction admits thick (weakly monotonic) tails.

\vspace{0cm}

The next theorem provides a characterization of unbounded private
beliefs. Notably, the construction includes a wide class of distributions
which complies with Definition \eqref{DEF: informative CDFs}. In
particular, it is argued that when the distance between the tails
of $F^{L}$ and $F^{H}$ are sufficiently thick then the (conditional)
probability that strong signals vanishes slowly enough is high. Hereinafter,
$P^{H}$ is referred to the conditional probability which by $F^{H}.$
\begin{thm}
\textup{\label{Th: existance of inforamtive-CDF}For all $\psi,\nu\in(0,1)$
there exists a pair $(F^{L},F^{H})$ which is $(\psi,\nu)$ - informative.}
\end{thm}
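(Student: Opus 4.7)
The plan is to construct, for any given $(\psi,\nu)\in(0,1)^{2}$, an explicit pair $(F^{L},F^{H})$ of full-support cdfs on $[0,1]$ with $dF^{L}/dF^{H}=(1-p)/p$ and whose distance $\Delta(p)=F^{L}(p)-F^{H}(p)$ decays only \emph{logarithmically} near the origin. A convenient ansatz is the symmetric pair $g^{H}(p)=c\, p\, h(p)$ and $g^{L}(p)=c\,(1-p)\, h(p)$ for a symmetric density $h$ with $h(p)=h(1-p)$; this automatically secures the ratio $dF^{L}/dF^{H}=(1-p)/p$, and the joint normalization $\int g^{L}=\int g^{H}=1$ reduces to the single condition $c\int_{0}^{1}p\,h(p)\,dp=1$. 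The critical design choice is to pick $h(p)\propto\log^{-\gamma-1}(1/p)/p$ on a neighborhood of $0$ (and the mirror image near $1$, smoothly glued in the middle), where $\gamma:=\nu/(1-\nu)$. The substitution $u=\log(1/p)$ shows that both $\int h$ and $\int p h$ are finite, and a direct integration yields $\Delta(p)\sim c\log^{-\gamma}(1/p)$ as $p\downarrow 0$.

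Next I would feed this $\Delta$ into the recursion for $\tilde{l}_{k}$. From \eqref{eq:threreshold in P}, $\bar{p}_{1}(l)=r_{1}l/(r_{1}l+1-r_{1})\sim (r_{1}/(1-r_{1}))\,l$ as $l\downarrow 0$; combined with $\rho(2|l,s)=1-F^{s}(\bar{p}_{1}(l))$, equation \eqref{eq:iteratively of ln} rewrites as
\[
\tilde{l}_{k+1}=\tilde{l}_{k}\left[1-\frac{\Delta(\bar{p}_{1}(\tilde{l}_{k}))}{1-F^{H}(\bar{p}_{1}(\tilde{l}_{k}))}\right].
\]
Setting $y_{k}:=\log(1/\tilde{l}_{k})$, a Taylor expansion yields $y_{k+1}-y_{k}\sim c\,y_{k}^{-\gamma}$, and a comparison with the ODE $\dot{y}=c y^{-\gamma}$ delivers $y_{k}\sim((\gamma+1)c k)^{1/(\gamma+1)}$. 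Consequently
\[
\Delta(\bar{p}_{1}(\tilde{l}_{k}))\sim c\,y_{k}^{-\gamma}\sim c^{1/(\gamma+1)}(\gamma+1)^{-\nu}\,k^{-\nu},
\]
using the identity $\gamma/(\gamma+1)=\nu$. Inflating the leading coefficient of $h$ near $0$ raises the asymptotic constant monotonically, so it can be made to exceed $\psi$; the finitely many initial indices $k=0,\ldots,K_{0}$ are handled by the monotonicity of $\Delta(\bar{p}_{1}(\tilde{l}_{k}))$ on the initial segment together with a further concentration of mass of $h$ near $0$ if required.

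The main obstacle I foresee is the circularity of the informativeness criterion: $\Delta$ is evaluated along the trajectory $(\tilde{l}_{k})$ that $(F^{L},F^{H})$ themselves generate. I would dissolve this by first establishing the \emph{pointwise} lower bound $\Delta(p)\geq C\log^{-\gamma}(1/p)$ on $(0,r_{1}]$ as a standalone property of the constructed densities, and only afterwards combining it with the monotone convergence $\bar{p}_{1}(\tilde{l}_{k})\downarrow 0$ (guaranteed by the cascade result of \citet{Lones-Smith-2000} together with the definition of $\tilde{l}_{k}$) and the discrete/ODE comparison above to upgrade to the trajectory inequality $\Delta(\bar{p}_{1}(\tilde{l}_{k}))>\psi/(k+1)^{\nu}$ for every $k\geq 0$. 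Since $r_{1}$ is determined purely by the payoff $U$, the construction depends on the triple $(\psi,\nu,r_{1})$ but not on the object being constructed, which keeps the argument self-consistent.
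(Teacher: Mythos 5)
Your construction idea is sound in spirit and genuinely different from the paper's (the paper builds discrete measures on $\mathbb{Z}$ with tail gap $\approx\psi/k^{\nu}$ and pushes them to $[0,1]$ via $p\mapsto\log(p/(1-p))$, so that the trajectory condition reduces to a \emph{linear-in-$k$} bound on $\log(1/\tilde{l}_{k})$; your choice $\Delta(p)\asymp\log^{-\gamma}(1/p)$ with $\gamma=\nu/(1-\nu)$ is the ``critical'' exponent for which the trajectory itself produces exactly the rate $k^{-\nu}$). However, as written the argument only establishes the required inequality \emph{asymptotically}: the ODE comparison gives $\Delta(\bar{p}_{1}(\tilde{l}_{k}))\sim B^{1-\nu}(1-\nu)^{\nu}k^{-\nu}$, which (even granting a constant $>\psi$) yields $\Delta(\bar{p}_{1}(\tilde{l}_{k}))>\psi/(k+1)^{\nu}$ only for all sufficiently large $k$. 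Definition \ref{DEF: informative CDFs} demands strict inequality for \emph{every} $k\geq0$, including $k=0$, which forces $\Delta(1/2)>\psi$ and hence a nearly revealing pair when $\psi$ is close to $1$, and including the whole transitional regime before $\bar{p}_{1}(\tilde{l}_{k})$ enters the region where your log-tail ansatz governs $\Delta$. Your two suggested fixes do not close this: (i) ``monotonicity of $\Delta(\bar{p}_{1}(\tilde{l}_{k}))$ on the initial segment'' points the wrong way, since $\bar{p}_{1}(\tilde{l}_{k})$ decreases and $\Delta$ is non-decreasing below $1/2$, so $\Delta(\bar{p}_{1}(\tilde{l}_{k}))$ is non-increasing while the benchmark $\psi/(k+1)^{\nu}$ is also decreasing --- monotonicity alone decides nothing about which wins at finite $k$; (ii) ``inflating the leading coefficient of $h$ near $0$'' is constrained by the normalization $c\int_{0}^{1}p\,h(p)\,dp=1$ (and by $\Delta\leq1$), so raising the asymptotic constant in fact requires simultaneously shrinking the interval on which the heavy log tail is active, which lengthens the early phase, changes the trajectory $\tilde{l}_{k}$, and re-opens the finite-$k$ verification --- the target you are trying to patch moves with the patch.

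Because your exponent choice is critical, the constant in front of $k^{-\nu}$ matters and the error terms in the discrete-to-ODE comparison must be controlled \emph{uniformly in $k$}, not just in the limit: during the early phase $\Delta$ is of order $\psi$ (close to $1$), so $\log(1/\tilde{l}_{k})$ grows roughly linearly with a large slope, and one must check quantitatively that by the crossover time this linear head start has not already pushed $\Delta(\bar{p}_{1}(\tilde{l}_{k}))$ below $\psi/(k+1)^{\nu}$ at some intermediate $k$. Nothing in the sketch rules this out, and this is precisely where the paper's proof spends its effort: it verifies the base case explicitly (the inequality $e<b/(b-\psi)<e^{2/(2b+2)+1}$ fixing $a,b$) and then shows the increments of $\log(1/\tilde{l}_{k})$ are monotonically decreasing to zero after a judicious choice of the damping sequence $\{b_{k}\}$, so that the trajectory bound holds for \emph{all} $k$, not eventually. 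By contrast, the paper's exponent $\nu$ (rather than $\gamma$) makes the asymptotic inequality hold with slack, so only the non-asymptotic bookkeeping remains. To repair your proof you would need either a uniform two-sided comparison lemma for the recursion $y_{k+1}=y_{k}+\Theta(y_{k}^{-\gamma})$ valid from $k=0$ with explicit constants tied to your $(\delta,A,c)$, or to redesign $\Delta$ (e.g.\ $\Delta(p)\geq\min\{\psi',B\log^{-\nu}(1/p)\}$ with $\psi'>\psi$) so that, as in the paper, the trajectory requirement becomes a robust linear bound on $\log(1/\tilde{l}_{k})$ that can be checked for every $k$.
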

The proof of Theorem \ref{Th: existance of inforamtive-CDF} is supplemented
to Appendix \ref{Appendix A sec:Selected-proofs}. Notably, the constructive
proof generates a broad classes of unbounded private beliefs which
comply with inequality \eqref{eq: condition (psi,v) - informative}. 

The next uniform theorem states that each bi-parametric class of $(\psi,\nu)$
- informative pairs is identified uniformly in finite time. In particular,
the theorem establishes a uniform bound on the rate of convergence
corresponding to any process $l_{k}$ which is induced by $(\psi,\nu)$
- informative private beliefs. The theorem asserts that all such processes
decay at rate which is uniformly bounded.
\begin{thm}
\textup{\label{Th: uniform bound for (psi,nu) informative}Let $\epsilon,\psi,\nu\in(0,1).$
Then for all $\underline{L}<1$ there is a finite time $K=K(\psi,\nu,\epsilon\text{,\ensuremath{\underline{L}}})$
such that for all pairs $(F^{L},F^{H})$ which are $(\psi,\nu)$ -
informative there is a $P^{H}$ - probability of at least $(1-\epsilon)$
that}

\textup{
\[
l_{k}<\underline{L}
\]
}

\noindent \textup{for all $k>K.$}
\end{thm}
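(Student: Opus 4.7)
The strategy is to recognize the $P^H$-likelihood ratio $\{l_k\}$ as an element of the class of $(\psi,\nu)$-\emph{weakly active} supermartingales introduced in the paper and then invoke the uniform convergence theorem for that class (announced in the introduction). A direct computation from \eqref{eq:iteratively of ln}, using $\sum_{m}\rho(m\mid l_k,L)=1$, gives $E^H[l_{k+1}\mid l_k]=l_k$, so $\{l_k\}$ is a non-negative $P^H$-martingale with $l_0=1$, in particular a supermartingale.

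Next I would transfer the $(\psi,\nu)$-informativeness hypothesis, which is stated only along the deterministic trajectory $\tilde l_k$, to the random process $l_k$. Because action $1$ multiplies $l_k$ by a factor $\ge 1$ while action $2$ multiplies it by a factor $\le 1$, the sequence $\tilde l_k$ is a pathwise lower envelope: $l_k\ge\tilde l_k$ almost surely. Since $\bar p_1(\cdot)$ is increasing and $\Delta$ is non-decreasing on $[0,1/2]$ (a consequence of the Bayesian identity $f(p)=(1-p)/p$; see Lemma A.1 of \citet{Lones-Smith-2000}), one obtains, on the event $\{l_k\le(1-r_1)/r_1\}$,
\[
\Delta(\bar p_1(l_k))\;\ge\;\Delta(\bar p_1(\tilde l_k))\;>\;\frac{\psi}{(k+1)^\nu}.
\]
A direct calculation shows that the downward jump conditioned on action $2$ equals $l_k\Delta(\bar p_1(l_k))/(1-F^H(\bar p_1(l_k)))$, and so is at least $l_k\psi/(k+1)^\nu$; and action $2$ itself is taken with $P^H$-probability at least $1/2+\Delta(\bar p_1(l_k))/2$. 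These are precisely the ingredients that, by the one-to-one correspondence announced in the introduction, identify $\{l_k\}$ as a $(\psi,\nu)$-weakly active supermartingale.

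With this identification, applying the uniform convergence theorem for weakly active supermartingales (with threshold $\underline L$ and confidence $1-\epsilon$) yields a finite $K=K(\psi,\nu,\epsilon,\underline L)$, independent of the pair $(F^L,F^H)$ in the class, such that $P^H(l_k<\underline L\text{ for all }k>K)\ge 1-\epsilon$, which is the desired conclusion. The principal obstacle lies upstream, in the uniform convergence theorem itself: because the activity rate $\psi/(k+1)^\nu$ itself decays, the classical estimate of \citet{Fudenberg-Levine-1992} for (strictly) active supermartingales does not apply off the shelf, and one must instead exploit a vanishing-but-non-summable lower bound on the per-step drift while simultaneously controlling incremental variance uniformly over the class, so that the process does not reinflate above $\underline L$ once it has dropped. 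A secondary technicality in the propagation step is guaranteeing that $l_k$ enters the region $\{l\le(1-r_1)/r_1\}$ in uniformly bounded time, which can be handled by a short preliminary argument relying only on the martingale property near $l=1$.
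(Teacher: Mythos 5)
Your proposal follows essentially the same route as the paper's own proof: it identifies the $(\psi,\nu)$-informativeness condition, via the pathwise domination $l_k\ge\tilde l_k$ and the monotonicity of $\Delta$, as making $l_k$ a weakly active supermartingale with activity of order $\psi$ and rate $\nu$ (the content of Theorem \ref{th: l_k is weakly active supermartingale}, which the paper proves through Lemma \ref{Lemma: distance implies probability} with $\#A=2$), and then invokes the uniform convergence theorem for weakly active supermartingales (Theorem \ref{Th: uniform rate for weak active supermartingale}) exactly as the paper does. Two intermediate claims are stated more strongly than needed but are not load-bearing: the bound $\rho(2\mid l_k,H)\ge\tfrac12+\Delta(\bar p_1(l_k))/2$ need not hold in general, although $\rho(2\mid l_k,H)=1-F^H(\bar p_1(l_k))\ge\Delta(\bar p_1(l_k))>\psi/(k+1)^\nu$ already gives the required jump probability, and your restriction to $\{l_k\le(1-r_1)/r_1\}$ makes explicit a monotonicity caveat that the paper's own argument passes over silently.
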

\vspace{0cm}

In words, the uniformity theorem asserts that given $\psi,\nu$ and
a precision $\epsilon,$ there exists a constant time $K,$ which
depends only on these three parameters, such that after exactly first
$K$ guesses not only that the likelihood public belief process at
time $K+1$, $l_{K+1},$ falls below $\epsilon$ near the cascade
set but, due to the fact that $\bar{p}_{1}(l_{k})$ is decreasing
in $k,$ $\bar{p}_{1}(l_{K+1})$ is near zero and hence the $K+1$
individual is guaranteed that all the individuals thereafter will
choose the correct action $2$ with high probability of at least $\rho(2|l_{K+1},H).$
Consequently, in the later scenario, there is an upper uniform bound
on the time after which all individuals will choose the correct action
with at least that uniform probability.

\vspace{0cm}

The uniformity stated in Theorem \ref{Th: uniform bound for (psi,nu) informative}
implies the following corollary.
\begin{cor}
\textup{For all $\epsilon,\psi,\nu\in(0,1)$ there exists a finite
$K=K(\psi,\nu,\epsilon)$ such that for all pairs $(F^{L},F^{H})$
which are $(\psi,\nu)$ - informative, $P^{H}(\{a_{k}\}_{k\geq1}:a_{k}=2\text{\ for\ all}\ k>K)>1-\epsilon.$ }

In words, each bi-parametric class is characterized by an upper uniform
bound on the time after which all individuals will choose the correct
action.
\end{cor}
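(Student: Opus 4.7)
The plan is to deduce the corollary directly from Theorem \ref{Th: uniform bound for (psi,nu) informative} by converting the likelihood bound into a statement about actions. The key uniform fact I need is that whenever an individual picks the ``incorrect'' action $a_k=1$, the post-action likelihood $l_k$ jumps above a strictly positive constant that depends only on the payoff function $U$, not on the pair $(F^L,F^H)$.

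First I would isolate this jump constant. Let $C:=(1-r_{1})/r_{1}>0$, where $r_{1}$ is the unique posterior threshold in \eqref{eq:trheshods in r}. By \eqref{eq:threreshold in P}, the private-belief threshold $\bar{p}_{1}(l)$ satisfies the indifference relation $f(\bar{p}_{1}(l))\,l=C$, where $f(p)=(1-p)/p$ is the Radon--Nikodym derivative $dF^{L}/dF^{H}$. Since $f$ is decreasing, for every $p\in(0,1)$ one has
\[
F^{L}(p)=\int_{0}^{p}f\,dF^{H}\;\geq\;f(p)\,F^{H}(p).
\]
Combined with \eqref{eq:iteratively of ln}--\eqref{eq:transition}, whenever $a_{k}=1$,
\[
l_{k}\;=\;l_{k-1}\,\frac{F^{L}(\bar{p}_{1}(l_{k-1}))}{F^{H}(\bar{p}_{1}(l_{k-1}))}\;\geq\;l_{k-1}\,f(\bar{p}_{1}(l_{k-1}))\;=\;C.
\]
This bound is uniform across all admissible $(F^{L},F^{H})$.

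Next I would apply Theorem \ref{Th: uniform bound for (psi,nu) informative} with the specific choice $\underline{L}:=\tfrac12\min\{C,1\}<1$. Since $\underline{L}$ depends only on $r_{1}$ (hence only on $U$), the theorem supplies a finite $K=K(\psi,\nu,\epsilon)$, independent of the pair $(F^{L},F^{H})$, such that for every $(\psi,\nu)$-informative pair
\[
P^{H}\bigl(l_{k}<\underline{L}\ \text{for all}\ k>K\bigr)\;>\;1-\epsilon.
\]
On this event, the jump bound above forces $a_{k}\neq 1$ for every $k>K$, because otherwise $l_{k}\geq C>\underline{L}$ would contradict the event. Since actions are binary, this is exactly $a_{k}=2$ for all $k>K$, and monotonicity of probability yields the claim.

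I do not anticipate a genuine obstacle: all the uniform work has been absorbed into Theorem \ref{Th: uniform bound for (psi,nu) informative}, and the jump bound $l_{k}\geq C$ is a standard consequence of Bayesian updating combined with the monotonicity of the private-belief likelihood $f$. The only care required is ensuring that $\underline{L}$ is chosen below $C$ \emph{and} below $1$ so that Theorem \ref{Th: uniform bound for (psi,nu) informative} applies and simultaneously rules out the incorrect action.
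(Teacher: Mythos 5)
Your proof is correct and takes essentially the route the paper intends: apply Theorem \ref{Th: uniform bound for (psi,nu) informative} with a fixed $\underline{L}<1$ and observe that, on the event $\{l_{k}<\underline{L}\ \text{for all}\ k>K\}$, an incorrect action is impossible because it would push $l_{k}$ above a positive constant independent of $(F^{L},F^{H})$. Your derivation of that constant $C=(1-r_{1})/r_{1}$ from $F^{L}(p)\geq f(p)F^{H}(p)$ and the indifference relation simply re-establishes (in slightly greater generality) the fact the paper already records, namely that $l_{k+1}>1$ if and only if $a_{k+1}=1$, which would let one take $\underline{L}=1/2$ directly.
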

\vspace{0cm}

The next theorem analyzes the rate of converges of $l$ regardless
of the type of the private belief distributions which may be bounded
or unbounded. To this end, we shell first introduce the notion `close'
private beliefs.
\begin{defn}
A pair  ($F^{L},F^{H})$ is {\em$\epsilon$ - close} given $p$
if $\Delta(p)<\epsilon.$
\end{defn}
In particular, private beliefs are $\epsilon$ close given $\bar{p}_{1}(l)$
if the (uniquely) induced probabilities are sufficiently close for
all actions $m$, formally

\begin{equation}
\underset{m}{max}|\rho(m|l,H)-\rho(m|l,L)|<\epsilon.\label{DEF eq:psi,v - informative-1-2}
\end{equation}

In words, private beliefs are $\epsilon$ close given $\bar{p}_{1}(l)$
if it induces sufficiently close transitions for all actions $m.$
In the following theorem the assumption of $(\psi,\nu)$ - informativeness
is relaxed, more over, the theorem holds for all types of private
beliefs which might be either bounded or unbounded. In particular,
it is shown that the rate of converges does not depend on the specification
of the private beliefs but only on the distance between the corresponding
transitions.

\vspace{0cm}

\vspace{0cm}

\begin{thm}
\label{Th: relexing informativeness}For all $\epsilon,\psi,\nu\in(0,1)$
there exists a finite \textup{$K=K(\psi,\nu,\epsilon)$} such that
for all pairs $(F^{L},F^{H})$ and for all (sufficiently large) $n,$
there is a set of which the probability according to $P^{H}$ is at
least $(1-\epsilon)$ such that for any realization $\omega$ in that
set
\begin{enumerate}
\item Either $(F^{L},F^{H})$ is $\frac{\psi}{k^{\nu}}$ - close given $\bar{p}_{1}(l_{k}(\omega))$
in all, but $K$ periods $k$ in $\{1...n\}$ or
\item $l_{k}(\omega)<\epsilon(1+\epsilon)\ \ensuremath{\text{for all }}k>n.$
\end{enumerate}
\end{thm}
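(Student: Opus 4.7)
The plan is to prove the contrapositive of the dichotomy: bound by $\epsilon$ the $P^H$-probability of the event $B$ on which both conclusions fail simultaneously. Call a period $k$ \emph{active} along the sample path $\omega$ if $\Delta(\bar p_1(l_k(\omega))) \ge \psi/k^\nu$, i.e.\ if the pair $(F^L,F^H)$ is not $\psi/k^\nu$-close given $\bar p_1(l_k(\omega))$. Then $B$ is the event that strictly more than $K$ periods in $\{1,\dots,n\}$ are active, and simultaneously $l_k(\omega) \ge \epsilon(1+\epsilon)$ for some $k > n$. I will set $K$ by combining two independent bounds, each of size at most $\epsilon/2$, via a union bound.

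The first step converts active periods into quantitative downward progress of $(l_k)$. Under $P^H$ the ratio process is a nonnegative martingale (hence a supermartingale), whose one-step divergence toward the cascade admits a lower bound in terms of $\Delta(\bar p_1(l_k))$ — essentially the KL divergence between the binary transition laws $(\rho(m\mid l_k,H))_m$ and $(\rho(m\mid l_k,L))_m$ is bounded below by a fixed function of $\Delta(\bar p_1(l_k))$, using the explicit binary formula $a^2/b+(1-a)^2/(1-b)-1=(a-b)^2/(b(1-b))$ with $a=F^L(\bar p_1),\,b=F^H(\bar p_1)$. This is exactly the structure needed to view $(l_k)$ as a weakly active supermartingale in the sense of the class introduced earlier in the paper, with active schedule now the \emph{random} set $\{k:\Delta(\bar p_1(l_k(\omega)))\ge\psi/k^\nu\}$. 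Invoking the uniform convergence result for this class, I obtain a $K = K(\psi,\nu,\epsilon)$ such that on the event ``at least $K+1$ active periods in $\{1,\dots,n\}$'' we have $l_n(\omega) < \delta$, with $\delta := \epsilon^2(1+\epsilon)/2$, on a set of $P^H$-probability at least $1 - \epsilon/2$.

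The second step propagates smallness forward by Doob's maximal inequality applied to the nonnegative $P^H$-supermartingale $(l_k)_{k \ge n}$:
\[
P^H\!\left(\sup_{k>n} l_k \ge \epsilon(1+\epsilon) \,\Big|\, \mathcal F_n\right) \le \frac{l_n}{\epsilon(1+\epsilon)}.
\]
On $\{l_n < \delta\}$ the right-hand side is at most $\epsilon/2$. Taking expectations and union-bounding with the first step yields $P^H(B) \le \epsilon$, the contrapositive we wanted.

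The main obstacle is the first step. The uniform convergence lemma for weakly active supermartingales is naturally stated with a prescribed active schedule — in the proof of Theorem~\ref{Th: uniform bound for (psi,nu) informative} activity is enforced along the deterministic skeleton $\tilde l_k$ via the informativeness assumption on the pair — whereas here the active schedule is random, sample-path dependent, and no informativeness hypothesis on $(F^L,F^H)$ is available. One must therefore verify that the convergence lemma applies with a schedule measurable with respect to the process's own filtration, and that the resulting constant $K$ depends only on $(\psi,\nu,\epsilon)$ and not on the underlying pair. This sample-path uniformity, together with the dichotomy structure that allows the conclusion to fall back on a cascade-reaching event when the active schedule is sparse, is precisely what distinguishes Theorem~\ref{Th: relexing informativeness} from its predecessor.
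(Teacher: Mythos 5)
There is a genuine gap, and it sits exactly where you flag it: your Step 1 never gets off the ground. The uniform convergence result (Theorem \ref{Th: uniform rate for weak active supermartingale}) applies only to processes satisfying the weak-activity inequality at \emph{every} period, conditional on \emph{every} positive-value history; for an arbitrary pair $(F^{L},F^{H})$ the ratio process $l_k$ satisfies no such hypothesis, and declaring the ``active schedule'' to be the random, path-dependent set $\{k:\Delta(\bar p_{1}(l_{k}(\omega)))\ge\psi/k^{\nu}\}$ does not produce an object to which the theorem (or its proof, which counts jumps via Kolmogorov's strong law at a deterministic rate $\psi/k^{\nu}$ in every period) can be applied. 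Saying ``one must verify that the lemma applies with a schedule measurable with respect to the process's own filtration'' is naming the missing theorem, not proving it. The paper's proof supplies precisely this missing device: the stopping-time construction of Appendix \ref{subsec:Construction}, which extracts a time-changed ``faster'' process $\tilde l_{k}=l_{\tau_{k}}$, where $\tau_{k}$ triggers either when a conditional drop of size $\epsilon_t/\#A$ is likely (rule \eqref{condition1}, which by Lemma \ref{Lemma: distance implies probability} covers every period at which the pair fails to be $\psi/k^{\nu}$-close) or when the process has risen by $\epsilon_t/(2\#A)$ since the last stopping time (rule \eqref{condition 2-1}). Lemma \ref{Lemma: the faster process is weakly active supermartingale} then shows $\tilde l$ is genuinely weakly active with activity $\psi/(2\#A)$ and rate $\nu$, uniformly over all pairs, and only then is Theorem \ref{Th: uniform rate for weak active supermartingale} invoked.

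A second, smaller defect: even granting a repaired Step 1, what you would obtain is smallness of $l$ at the $(K+1)$-st stopping time $\tau_{K+1}\le n$, not $l_{n}<\delta$; between $\tau_{K+1}$ and $n$ the process can creep upward, and controlling that excursion is again the role of rule \eqref{condition 2-1} in the paper (which yields the deterministic bound $l_{k}<l_{\tau_{K+1}}(1+\epsilon/2)<\epsilon(1+\epsilon)$ for $k\ge n$ unless a new stopping time fires). Your Doob-maximal-inequality step is a perfectly sound alternative to that propagation argument --- and arguably cleaner --- but it must be anchored at the stopping time $\tau_{K+1}$ (conditioning on $\mathcal F_{\tau_{K+1}}$ and using optional stopping for the nonnegative supermartingale) rather than at the deterministic time $n$, since $l_{n}<\delta$ is exactly what you cannot yet assert. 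So the architecture (contrapositive, union bound, maximal inequality) is fine, but the heart of the theorem --- converting sparsely and randomly timed informativeness into a bona fide weakly active supermartingale --- is absent, and it is the paper's extraction construction that does that work.
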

\vspace{0cm}

In words, as individual $n$ observes the whole previous actions as
well as the past associated transitions, Theorem \ref{Th: relexing informativeness}
asserts that, given an arbitrarily small $\epsilon>0,$ there exists
a finite uniform bound $K$, is independent of any pair of private
belief, such that if that individual faces more than $K$ periods
in which the corresponding transitions were sufficiently differed,
then, whenever the private beliefs are unbounded, not only that the
public belief process at time $n$, $l_{n},$ falls below $\epsilon$
near the cascade set but, the second part of theorem ensures us that
individual $n$ is guaranteed that all the individuals thereafter
will choose the same correct action $2$ with high probability of
at least $\rho(2|l_{n},H).$ Consequently, in the later scenario,
there is a universal upper uniform bound on the time after which all
individuals will choose the correct action with at least that probability.

When the private belief are $(\psi,\nu)$ - informative the first
part of Theorem \ref{Th: relexing informativeness} is ruled out and
hence Theorem \ref{Th: relexing informativeness} is coincided Theorem
\ref{Th: uniform bound for (psi,nu) informative}. As a result, with
 probability of at least $(1-\epsilon)$ the correct action will be
chosen after exactly $K$ periods.

By (\citet{Lones-Smith-2000}, Lemma 2) we know that, with bounded
belief, there exist $0<\underset{-}{l}<\stackrel{-}{l}<\infty$ such
that the non-empty cascade sets corresponding to actions $1$ and
$2$ are $\bar{J}_{1}=[\stackrel{-}{l},\infty]$ and $\bar{J}_{2}=[0,\underset{-}{l}],$
respectively, where all the other cascade sets are possibly empty. 

Consequently, since from Theorem \ref{Th: relexing informativeness}
we are provided with only two events which occur with high probability
$1-\epsilon$, and the event in which the process $l$ lies in $\bar{J}_{1}$
is obtained with a positive probability, we conclude that incorrect
`learning' (.i.e. herding) must develop very `slowly'  as the difference
between the transitions decays `fast' as less than $\frac{\psi}{k^{\nu}}$
in all but few periods $K$, whereas $l_{k}$ might converge `quickly'
to $\bar{J}_{2}.$ 

More importantly, the critical point is that the constant $K$ is
universal in the sense that it does not depend on any underlying pair's
specification. Consequently, we can remarkably conclude that the question
whether asymptotic learning has started at an arbitrarily time $n$
does not depend on the private belief's specification but only on
their natural distance.

\section{Supermartingale characterization}

Intuitively, a supermartingale is a process that decreases on average.
Let us further consider the following class of supermartingales called
active supermartingales. This notion has been first introduced in
Fundernberg and Levine (1992) who study reputations in infinitely
repeated games. Consider an abstract setting with a finite set $\Omega$
and probability measure $P$ in $\Delta(\Omega)$ equipped with a
natural filtration.

\vspace{0cm}

\begin{defn}
\label{Def: active super martingale}A non-negative supermartingale
$\tilde{L}:=\{\tilde{L}_{k}\}_{k=0}^{\infty}$ is {\em active} with
activity $\psi\in(0,1)$ under $P$ if

\[
P(\{\omega:|\frac{\tilde{L}_{k+1}(\omega)}{\tilde{L}_{k}(\omega)}-1|>\psi\}|\tilde{\omega}^{k})>\psi
\]

\noindent for $P$ - almost all histories $\tilde{\omega}^{k}$ such
that $\tilde{L}_{k}(\tilde{\omega})>0$.
\end{defn}
In word, a supermartingale  has activity $\psi$ if the probability
of a jump of size $\psi$ at time $k$ exceeds $\psi$ for almost
all histories. Note that $\tilde{L}$ being a supermartingale, is
weakly decreasing in expectations. Showing that it is active implies
that $\tilde{L}_{k+1}$ substantially goes up or down relative to
$\tilde{L}_{k}$ with probability bounded away from zero in each period.
\citet{Fudenberg-Levine-1992} showed the following remarkable result
\begin{thm}
\textup{\label{Th:  A.1 (F=000026L)}Let $\epsilon>0,\ \psi\in(0,1),$
and $l_{0}>0.$ Then, for all $\underset{-}{L}\in(0,l_{0})$ there
is a time $K<\infty$ such that}

\textup{
\[
P(\{\omega:\underset{k>K}{sup}\tilde{L}_{k}(\omega)\leq\underset{-}{L}\})\geq1-\epsilon
\]
}

\noindent \textup{for every active supermartingale $\text{\ensuremath{\tilde{L}} }$
with $\tilde{L}_{0}\equiv l_{0}$ and activity $\psi.$ }
\end{thm}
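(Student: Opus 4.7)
The plan is to show that the concave transform $\sqrt{\tilde{L}_k}$ is itself a non-negative supermartingale whose expectation contracts by a factor strictly less than $1$ at every step, and then to upgrade a one-time estimate at $k = K$ into a uniform bound on $\sup_{k \ge K}\tilde{L}_k$ by Doob's maximal inequality. The gain from passing to the square root is that activity forces a genuine concavity gap, while $\sqrt{\cdot}$ is bounded at $0$ and so avoids integrability problems that a direct $\log$-transform would introduce when $Y_k := \tilde{L}_{k+1}/\tilde{L}_k$ has mass near zero or a fat upper tail.

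For the drift step, on $\{\tilde{L}_k > 0\}$ the supermartingale property gives $E[Y_k \mid \mathcal{F}_k] \le 1$ and activity gives $P(|Y_k - 1| > \psi \mid \mathcal{F}_k) \ge \psi$. Using the algebraic identity $\sqrt{y} = (1+y)/2 - (\sqrt{y}-1)^{2}/2$ and taking $\mathcal{F}_k$-conditional expectation,
\[
E\bigl[\sqrt{Y_k} \,\big|\, \mathcal{F}_k\bigr] \;=\; \frac{1+E[Y_k\mid\mathcal{F}_k]}{2} - \frac{1}{2}E\bigl[(\sqrt{Y_k}-1)^{2}\,\big|\,\mathcal{F}_k\bigr].
\]
On $\{|Y_k - 1| > \psi\}$ the identities $\sqrt{1+\psi}-1 = \psi/(\sqrt{1+\psi}+1) \ge \psi/3$ and $1-\sqrt{1-\psi} = \psi/(1+\sqrt{1-\psi}) \ge \psi/2$ yield $(\sqrt{Y_k}-1)^{2} \ge \psi^{2}/9$, hence $E[(\sqrt{Y_k}-1)^{2}\mid\mathcal{F}_k] \ge \psi^{3}/9$. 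Combining these bounds gives $E[\sqrt{Y_k}\mid\mathcal{F}_k] \le 1 - \beta$ with $\beta := \psi^{3}/18 > 0$. Multiplying by $\sqrt{\tilde{L}_k}$ (and handling $\{\tilde{L}_k = 0\}$ trivially, since a non-negative supermartingale is absorbed at zero) gives
\[
E\bigl[\sqrt{\tilde{L}_{k+1}} \,\big|\, \mathcal{F}_k\bigr] \;\le\; (1-\beta)\sqrt{\tilde{L}_k},
\]
so $\sqrt{\tilde{L}_k}$ is a non-negative supermartingale and $E\bigl[\sqrt{\tilde{L}_K}\bigr] \le (1-\beta)^{K}\sqrt{l_0}$.

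For the conclusion, apply Doob's maximal inequality to the shifted non-negative supermartingale $(\sqrt{\tilde{L}_{K+j}})_{j \ge 0}$:
\[
P\Bigl(\sup_{k \ge K}\tilde{L}_k > \underline{L}\Bigr) \;=\; P\Bigl(\sup_{k \ge K}\sqrt{\tilde{L}_k} > \sqrt{\underline{L}}\Bigr) \;\le\; \frac{E\bigl[\sqrt{\tilde{L}_K}\bigr]}{\sqrt{\underline{L}}} \;\le\; \sqrt{\tfrac{l_0}{\underline{L}}}\,(1-\beta)^{K}.
\]
Choosing $K$ large enough that $(1-\beta)^{K}\sqrt{l_0/\underline{L}} \le \epsilon$ yields a finite $K = K(\psi,\epsilon,l_0,\underline{L})$ that depends only on the parameters of the theorem and not on the particular process, which is the uniformity claim.

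The main obstacle is the drift step. Activity constrains $|Y_k - 1|$ only on an event of probability $\ge \psi$, while leaving the tail of $Y_k$ otherwise free; any transform used to convert this into a geometric contraction must produce a gap that is uniform in the law of $Y_k$. The concave square root is the cleanest choice because its gap from the chord, $(\sqrt{y}-1)^{2}/2$, is already a square and can be bounded below by $\psi^{2}/9$ on the activity event via the elementary estimates above. Once this one-step multiplicative contraction is established, the remainder of the argument is a routine application of Doob's inequality, and the uniformity of $K$ is automatic because $\beta$ depends only on $\psi$.
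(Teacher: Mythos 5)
Your proof is correct, but it is a genuinely different argument from the one the paper relies on (the paper cites Fudenberg--Levine and, for its own generalization, reproves the result via upcrossing machinery: Doob's maximal inequality, Dubins' upcrossing inequality, and a jump-counting lemma based on the strong law, combined through a covering of $[\underline{c},\overline{c}]$ by small subintervals). Your route replaces all of that with a one-step contraction of the concave transform: the identity $\sqrt{y}=\tfrac{1+y}{2}-\tfrac{(\sqrt{y}-1)^2}{2}$, the supermartingale bound $E[Y_k\mid\mathcal{F}_k]\le 1$, and the activity event carrying conditional mass at least $\psi$ with $(\sqrt{Y_k}-1)^2\ge\psi^2/9$ there, give $E[\sqrt{\tilde{L}_{k+1}}\mid\mathcal{F}_k]\le(1-\psi^3/18)\sqrt{\tilde{L}_k}$ (the absorption of a non-negative supermartingale at zero handles $\{\tilde{L}_k=0\}$), and then the maximal inequality applied to $(\sqrt{\tilde{L}_{K+j}})_{j\ge0}$ converts the geometric decay of $E[\sqrt{\tilde{L}_K}]$ into the uniform tail bound, with $K$ depending only on $\psi,\epsilon,l_0,\underline{L}$. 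What your approach buys is brevity and an explicit, essentially optimal-order bound $K\approx\log\bigl(\epsilon^{-1}\sqrt{l_0/\underline{L}}\bigr)/\lvert\log(1-\psi^3/18)\rvert$; what the paper's heavier upcrossing approach buys is robustness to decaying activity: if the jump size and probability at time $k$ are only $\psi/(k+1)^{\nu}$ (the weakly active class that is the paper's real object), your contraction factor becomes $1-c\,\psi^3/(k+1)^{3\nu}$ and the product $\prod_k\bigl(1-c\,\psi^3/(k+1)^{3\nu}\bigr)$ fails to tend to zero once $\nu\ge 1/3$, so the square-root contraction alone does not yield the extension, whereas the upcrossing/jump-counting argument does. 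For the statement as posed (constant activity $\psi$), your proof is complete and sound.
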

Theorem \ref{Th:  A.1 (F=000026L)} asserts that if $\tilde{L}$ is
an active supermartingale with activity $\psi$ then there is a fixed
time $K$ by which, with high probability, $\tilde{L}_{k}$ drops
below $\underset{-}{L}$ and remains below $\underset{-}{L}$ for
all future periods. It should be noted that the power of the theorem
stems from the fact that the bound, $K$, depends solely on the parameters
$\epsilon>0,\ \psi$ and $\underset{-}{L}$ , and is otherwise independent
of the underlying stochastic process $P$. 

\vspace{0cm}

Nevertheless, in the framework of social learning, non of the public
likelihood ratios $l_{k}$ exploits the active supermartingale property
for any $\psi\in(0,1),$ as equation \eqref{eq:transitions converfes to 1}
implies that its transitions converges to one. 

Hence, we are strongly motivated to extend Definition \ref{Def: active super martingale}
to a much broader class of supermatingales, called weakly active,
which includes $l_{k},$ and furthermore, as $l_{k}$ converges to
the cascade set $\{0\},$ extend Theorem \ref{Th:  A.1 (F=000026L)}
to achieve a uniform bound on the rate at which these processes converge. 

To this end, let us introduce the following new broad class of weakly
active supermartinglaes. This class is shown be associated with the
class of $(\psi,\nu)$- informative private beliefs.
\begin{defn}
Let $\psi,v\in(0,1)$. A non-negative supermartingale $\tilde{L}:=\{\tilde{L}_{k}\}_{k=0}^{\infty}$
is {\em weakly active}  with activity $\psi$ and rate $v$ under
if 
\begin{equation}
P(\{\omega:|\frac{\tilde{L}_{k+1}(\omega)}{\tilde{L}_{k}(\omega)}-1|>\frac{\psi}{(k+1)^{v}}\}|\tilde{\omega}^{k})>\frac{\psi}{(k+1)^{v}}\label{eq: inequality weak active with activity =00005Cpsiandrate=00005Cnu}
\end{equation}

\noindent for all histories $\tilde{\omega}^{k}$ such that $\tilde{L}_{k}(\tilde{\omega})>0.$
\end{defn}
In words, a supermartingale  has activity $\psi$ and rate $v$ if
the probability of a jump of size $\frac{\psi}{(k+1)^{v}}$ at time
$k$ exceeds $\frac{\psi}{(k+1)^{v}}$ for almost all histories. $\tilde{L}$
being a supermartingale, is weakly decreasing in expectations. The
assumption that it is weakly active asserts that  it cannot decay
or grow too slowly. Namely, there exists a bound on the rate at which
$\tilde{L}_{k}$ converges in such a manner that $\tilde{L}_{k}$
must not go up or down relative to $\tilde{L}_{k-1}$ in a considerably
rate with a probability corresponding to that bounded rate. It should
be noted that the class of supermartingales corresponding to $v=0$
and a constant $\psi\in(0,1)$ includes those processes which were
defined in Definition \ref{Def: active super martingale}. In the
next theorem the uniformity result presented in Theorem \ref{Th:  A.1 (F=000026L)}
is extended to include the class of all weakly active supermartingles.

\vspace{0cm}

\begin{thm}
\textup{\label{Th: uniform rate for weak active supermartingale}Let
$\epsilon,\psi,v\in(0,1),$ and $l_{0}>0.$ Then, for all $\underset{-}{L}\in(0,l_{0})$
there is a time $K<\infty$ such that }

\textup{
\[
P(\{\omega:\ \underset{k>K}{sup}\tilde{L}_{k}(\omega)\leq\underset{-}{L}\})\geq1-\epsilon
\]
}

\noindent \textup{for every weakly active supermartingale $\tilde{L}$
with activity $\psi$ and rate $v$ with $\tilde{L}_{0}\equiv l_{0}.$
}\footnote{Note that Theorem \ref{Th: uniform rate for weak active supermartingale}
holds for every initial value $l_{0},$ in particular, here it is
implemented for $l_{0}\equiv1.$}
\end{thm}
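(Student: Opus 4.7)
The plan is to extend the Fudenberg--Levine argument behind Theorem \ref{Th:  A.1 (F=000026L)} to accommodate the time-varying jump threshold $\delta_{k}:=\psi/(k+1)^{v}$. The core idea is the same: find a concave ``potential'' $\phi(\tilde{L}_{k})$ that is almost a supermartingale and, on top of the supermartingale drift, picks up a strictly negative contribution each period from the mandatory relative jump; summing this extra drift gives a Markov bound on $\tilde{L}_{K}$, after which Doob's maximal inequality takes over.

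Concretely I would take $\phi(x)=\sqrt{x}$ and use the pointwise decomposition
\[
\phi(y)=\phi(x)+\phi'(x)(y-x)-G(y,x),\qquad G(y,x)\geq 0,
\]
together with a Taylor-type estimate $G(y,x)\geq c_{0}\min\{(y/x-1)^{2},1\}\phi(x)$ valid for all $y\geq 0$, where $c_{0}>0$ is a universal constant. Conditioning on $\tilde{\omega}^{k}$, the linear term in the expansion is nonpositive by the supermartingale property, and the weak-activity hypothesis forces
\[
E[G(\tilde{L}_{k+1},\tilde{L}_{k})\mid\tilde{\omega}^{k}]\geq c_{0}\delta_{k}^{2}\phi(\tilde{L}_{k})\cdot P(|\tilde{L}_{k+1}/\tilde{L}_{k}-1|>\delta_{k}\mid\tilde{\omega}^{k})\geq c_{0}\psi^{3}(k+1)^{-3v}\phi(\tilde{L}_{k}).
\]
Iterating the resulting one-step inequality yields $E[\phi(\tilde{L}_{K})]\leq\phi(l_{0})\exp\!\bigl(-c_{0}\psi^{3}\sum_{k=1}^{K}k^{-3v}\bigr)$, and once this tail sum exceeds a threshold determined by $\epsilon$, $\underset{-}{L}$, $l_{0}$, Markov gives $P(\tilde{L}_{K}\geq(\epsilon/2)\underset{-}{L})\leq\epsilon/2$.

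On that high-probability event, Doob's maximal inequality for the nonnegative supermartingale $(\tilde{L}_{k})_{k\geq K}$ gives $P(\sup_{k>K}\tilde{L}_{k}\geq\underset{-}{L}\mid\tilde{\omega}^{K})\leq\tilde{L}_{K}/\underset{-}{L}\leq\epsilon/2$, and a union bound yields the conclusion with $K=K(\psi,v,\epsilon,\underset{-}{L})$. The main obstacle I foresee is the strength of the one-step gap: the naive Jensen quadratic gap contributes $c_{k}\sim(k+1)^{-3v}$ per period, so that divergence of $\sum_{k}c_{k}$ is automatic when $v$ is small but delicate when $v$ is close to $1$. To cover the full range $v\in(0,1)$ one would either refine the potential so that the gap degrades only linearly in $\delta_{k}$, or run a two-scale bootstrap that groups consecutive periods into blocks whose aggregate activity permits a direct appeal to Theorem \ref{Th:  A.1 (F=000026L)}; calibrating the block length against the decay of $\delta_{k}$ is where I expect the real technical difficulty to lie.
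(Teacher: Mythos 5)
Your route (a concave potential $\phi(x)=\sqrt{x}$ whose Jensen gap converts the mandatory relative jumps into a deterministic drift of $E[\phi(\tilde L_k)]$, followed by Markov and the maximal inequality) is genuinely different from the paper's, but it has a real gap that your own closing remarks only gesture at. The concavity gap of $\sqrt{\cdot}$ is quadratic in the relative move, and weak activity only guarantees a move of size $\delta_k:=\psi/(k+1)^{v}$ with probability $\delta_k$, so the per-period contraction is of order $\delta_k^{3}=\psi^{3}(k+1)^{-3v}$; the bound $E[\phi(\tilde L_K)]\leq \phi(l_0)\exp\bigl(-c_0\psi^{3}\sum_{k\leq K}(k+1)^{-3v}\bigr)$ is therefore useful only when $\sum_k (k+1)^{-3v}$ diverges, i.e.\ $v\leq 1/3$, while the theorem claims all $v\in(0,1)$. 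Neither of your proposed repairs can close this. A ``linear-in-$\delta_k$'' gap is impossible from the hypotheses alone: weak activity constrains only the probability of a two-sided relative jump and the supermartingale property only the conditional mean, and the process that at step $k$ falls by relative size $2\delta_k$ with probability $2\delta_k$ and otherwise stays put is a weakly active supermartingale for which $E\bigl[-\log \tilde L_\infty\bigr]\leq 4\sum_k\delta_k^{2}$; when $v>1/2$ this is finite and the process converges a.s.\ to a strictly positive limit, so no per-period drift argument of this soft type (indeed, no argument at all in that regime) can force the process uniformly below an arbitrary $\underline{L}$ --- any ``refined potential'' that claimed a linear gap would wrongly certify convergence for this example. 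The blocking/bootstrap idea also fails: grouping periods can boost the probability of seeing some jump within a block, but it cannot restore a time-independent jump \emph{size}, and Theorem \ref{Th:  A.1 (F=000026L)} requires a fixed activity $\psi$ in both size and probability.

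For comparison, the paper does not use a potential at all; it extends the Fudenberg--Levine path argument. It combines the maximal inequality (Fact 1) and Dubins' upcrossing inequality (Fact 2) with a jump-counting lemma (Lemma \ref{(Lemma-A.4.-FL)}, proved by a strong-law argument with normalizers $a_k=\sum_{j\leq k}j^{-v}$) and an event decomposition (Lemma \ref{Lemma 5}): a path that stays in a band $[\underline{c},\overline{c}]$ while making the mandated number of jumps must upcross some small subinterval many times, which Dubins' inequality makes improbable, so with probability at least $1-\tfrac{3\epsilon}{4}$ the path falls below $\underline{c}$ by a time $K$ depending only on $(\psi,v,\epsilon,\underline{L},l_0)$; the maximal inequality applied at that first-passage time then keeps it below $\underline{L}$ forever after. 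Your Markov-plus-Doob endgame mirrors that last step, but the heart of the theorem --- driving the process below $\underline{c}$ uniformly despite jump sizes that decay in $k$ --- is exactly where your argument stops. If you restrict to $v\leq 1/3$ your drift computation does give a clean, genuinely different proof; as written, however, it does not prove the stated result on $v\in(0,1)$, and the counterexample above shows the missing range cannot be reached by patching the same estimate.
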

The power of the theorem stems from the fact that the integer $K$,
depends only on the parameters $l_{0},\ \epsilon>0,\ \psi,\ v$ and
$\underset{-}{L}$, and is otherwise independent of the particular
supermartingale selected. The theorem asserts that if $\text{\ensuremath{\tilde{L}} }$
is a weakly active supermartingale, then there is a fixed time $K$
by which, with high probability, $\text{\ensuremath{\tilde{L}} }$
drops below $\underset{-}{L}$ and remains below $\underset{-}{L}$
for all future periods.

\vspace{0cm}

The next theorem provide a novel linkage between each bi-parametric
class of $(\psi,\nu)$ - informative private beliefs and the class
of weakly active supermartingaless with activity $\psi$ and rate
$\nu.$ In particular, the identification asserts that each induced
process derived from a pair of $(\psi,\nu)$ - informative private
belief is weakly active supermartingale with activity $\psi$ and
rate $\nu.$
\begin{thm}
\textup{\label{th: l_k is weakly active supermartingale}} Let $\psi,\nu\in(0,1).$
If ($F^{L},F^{H})$ is {\em$(\psi,\nu)$ - informative} then the
induced  process $l_{k}$ is weakly active supermartingale with activity
$\frac{\psi}{2}$ and rate $\nu.$
\end{thm}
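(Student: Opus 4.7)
The plan is to verify the three ingredients of weak activity for the process $l_k$ under $P^{H}$. Non-negativity is immediate, and the supermartingale property (in fact, $l_k$ is a $P^{H}$-martingale) follows from \citet{Lones-Smith-2000}. The remaining work is to establish the jump inequality \eqref{eq: inequality weak active with activity =00005Cpsiandrate=00005Cnu} with activity $\psi/2$ and rate $\nu$ at every history $\tilde{\omega}^{k}$ with $l_{k}>0$.

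Conditional on $l_{k}$, the ratio $l_{k+1}/l_{k}$ takes only two values, corresponding to the two actions. A direct computation from \eqref{eq:iteratively of ln} and \eqref{eq:transition} gives
\[
\Bigl|\tfrac{l_{k+1}}{l_{k}}-1\Bigr|\in\Bigl\{\tfrac{\Delta(\bar{p}_{1}(l_{k}))}{F^{H}(\bar{p}_{1}(l_{k}))},\;\tfrac{\Delta(\bar{p}_{1}(l_{k}))}{1-F^{H}(\bar{p}_{1}(l_{k}))}\Bigr\},
\]
and since both denominators lie in $[0,1]$, each possible magnitude is at least $\Delta(\bar{p}_{1}(l_{k}))$. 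It therefore suffices to show $\Delta(\bar{p}_{1}(l_{k}))>\psi/(2(k+1)^{\nu})$ at every history, for then the conditional probability of a large jump equals $1$ under $P^{H}$ and the jump inequality holds trivially.

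To transfer the $(\psi,\nu)$-informativeness bound $\Delta(\bar{p}_{1}(\tilde{l}_{k}))>\psi/(k+1)^{\nu}$ from the deterministic ``all-correct'' path $\tilde{l}_{k}$ to an arbitrary history, I would combine two monotonicity observations: $\bar{p}_{1}(\cdot)$ is increasing in $l$ by \eqref{eq:threreshold in P}, and $\Delta$ is non-decreasing on $[0,1/2]$, as pointed out after Definition \ref{DEF: informative CDFs} via the identity $f(p)=(1-p)/p$. Combined with the pathwise comparison $l_{k}\geq\tilde{l}_{k}$, these yield $\Delta(\bar{p}_{1}(l_{k}))\geq\Delta(\bar{p}_{1}(\tilde{l}_{k}))>\psi/(k+1)^{\nu}$, comfortably exceeding $\psi/(2(k+1)^{\nu})$, with the factor-of-two slack absorbing any edge case in which $\bar{p}_{1}(l_{k})$ passes through $1/2$.

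The main obstacle is the pathwise inequality $l_{k}\geq\tilde{l}_{k}$, which I would prove by induction on $k$: the base case is $l_{0}=\tilde{l}_{0}=1$; on an action-$1$ step $l_{k+1}>l_{k}\geq\tilde{l}_{k}\geq\tilde{l}_{k+1}$ since the sequence $\tilde{l}_{k}$ is monotone decreasing; on an action-$2$ step $l_{k+1}=h(l_{k})$ with $h(l):=l(1-F^{L}(\bar{p}_{1}(l)))/(1-F^{H}(\bar{p}_{1}(l)))$, so the step reduces to the sub-lemma that $h$ is non-decreasing in $l$. This sub-lemma is a Bayesian-posterior monotonicity statement: raising the prior odds against $H$ should raise the posterior odds against $H$ after observing action~$2$, even though the threshold $\bar{p}_{1}(l)$ simultaneously shifts the action's informativeness in favour of $H$. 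I expect to establish it by direct differentiation after reparameterizing through $p=\bar{p}_{1}(l)$ and substituting the Radon--Nikodym identity $F^{L\prime}(p)=((1-p)/p)F^{H\prime}(p)$, after which the sign of $dh/dp$ should reduce to a manifestly non-negative expression.
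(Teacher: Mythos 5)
Your handling of the jump inequality itself is correct and genuinely different from the paper's route: where the paper invokes its Lemma \ref{Lemma: distance implies probability} (a gap of size $\delta$ between the conditional action distributions forces a \emph{down}-jump of relative size $\delta/\#A$ with probability at least $\delta/\#A$), you observe that with $M=2$ the one-step ratio takes only the two values $F^{L}/F^{H}$ and $(1-F^{L})/(1-F^{H})$ evaluated at $\bar p_{1}(l_{k})$, so both possible moves have relative magnitude at least $\Delta(\bar p_{1}(l_{k}))$ and the conditional probability of a large jump is simply one. That is a clean binary-action shortcut; the paper's lemma is the version that scales to general $\#A$ and is reused later (e.g.\ in the proof of Theorem \ref{Th: relexing informativeness}), but for this theorem your observation suffices.

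The genuine gap is in the transfer step from the deterministic path to an arbitrary history, i.e.\ $\Delta(\bar p_{1}(l_{k}))\geq\Delta(\bar p_{1}(\tilde l_{k}))$, which both you and the paper need on top of \eqref{eq: condition (psi,v) - informative}. You reduce the pathwise domination $l_{k}\geq\tilde l_{k}$ to the sub-lemma that $h(l)=l\,(1-F^{L}(\bar p_{1}(l)))/(1-F^{H}(\bar p_{1}(l)))$ is non-decreasing, and that sub-lemma is false in general: writing $p=\bar p_{1}(l)$ (so $l=c\,p/(1-p)$) and substituting $F^{L\prime}=\frac{1-p}{p}F^{H\prime}$ gives $\frac{d}{dp}\log h=\frac{1}{p(1-p)}+F^{H\prime}(p)\bigl[\frac{1}{1-F^{H}(p)}-\frac{(1-p)/p}{1-F^{L}(p)}\bigr]$, and the bracket is $\leq0$ whenever $p\leq1/2$ (since $1-F^{L}\leq1-F^{H}$ and $(1-p)/p\geq1$); a pair whose $F^{H}$-density is sufficiently concentrated just above the current threshold therefore makes $h$ locally \emph{decreasing}, so the expression is not "manifestly non-negative" and your induction step on an action-$2$ move breaks. (It works in smooth leading examples, e.g.\ the uniform case where $h$ reduces to $c\,p/(1+p)$, but not for an arbitrary $(\psi,\nu)$-informative pair.) Separately, the claim that the factor-of-two slack "absorbs" the case $\bar p_{1}(l_{k})>1/2$ is not right: $\Delta'(p)=\frac{1-2p}{p}F^{H\prime}(p)$, so $\Delta$ decreases on $(1/2,1)$ and vanishes as $p\to1$; on histories with many incorrect actions $\Delta(\bar p_{1}(l_{k}))$ can be arbitrarily small, not merely half of $\Delta(\bar p_{1}(\tilde l_{k}))$, so losing a factor of two does not cover it. The paper itself disposes of this step in one line (monotone thresholds plus monotone $\Delta$, with $l_{k}\geq\tilde l_{k}$ taken as given), which is terse but does not rest on the false monotonicity of $h$; your proposal, by making the mechanism explicit, ends up leaning on a sub-lemma that fails, so as written the proof does not go through without a different argument for $l_{k}\geq\tilde l_{k}$ (or for the $\Delta$-comparison directly), together with an honest treatment of the $\bar p_{1}(l_{k})>1/2$ histories.
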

\vspace{0cm}

It should be emphasized that the values of $\nu$ are restricted to
be varied between in $(0,1)$. If $\nu=0$ then inequality \eqref{eq: inequality weak active with activity =00005Cpsiandrate=00005Cnu}
takes the form of an active supermartingale which rules out equations
\eqref{eq:transitions converfes to 1}, if $\nu>1$ on the other hand,
then the process $l_{k}$ fails to satisfy the uniformity property
as stated in Theorem \ref{eq:transitions converfes to 1}.

\section{Efficiency}

The expected time until individuals stop taking the wrong action can
be either finite or infinite as the induced stochastic process converges
\textquoteleft slowly\textquoteright{} to the cascade sets. Conceptually,
this section further studies how the learning efficiency might be
evaluated. To this end, let us consider the following stopping times;
the first time of the correct action $\tau=inf\{t:a_{t}=M\},$ the
aggregate amount of the incorrect actions $N=|\{t:a_{t}\neq M\}|,$
the time to learn $T_{H}=min\{t:a_{n}=M$ for all $n\geq t\},$ and
the time of the first mistake $T_{1}=min\{t:a_{t}\neq M\}.$ The learning
efficiency is referred to the finite expected value of these random
variables.

The next theorem asserts that whenever the corresponding cdf's  are
$(\psi,v)$ - informative then the expected time of the first correct
guess is finite. And on top of that, under a mild condition on the
cdfs' derivative (which plainly do not admit fluctuations around the
cascade set), the expected time of the total number of the incorrect
actions is finite as well. Moreover, both finite expectations are
uniformly bounded by a bound which merely depends on the level of
the informativeness' parameters $(\psi,v)$.

For the finite expectation of $T_{H}$ the subsequent theorem asserts
that there exist (large enough) $\hat{\psi}$ and (small enough) $\hat{\nu}$
such that for any $\psi>\hat{\psi}$ and $\nu<\hat{\nu,}$ and for
any corresponding cdf's which are $(\psi,v)$ - informative, the expected
time to learn is finite.

Before proceeding to the first theorem some preliminary is required.
Given a prior and a pair $(F^{L},F^{H}),$ $F$ is denoted to be the
(unconditional) distribution of the private belief being $H$. In
addition, $F$ is called smoothly monotone near zero if $pF'(p)\rightarrow0$
as $p\rightarrow0.$

\vspace{0cm}

\begin{thm}
\textup{\label{Th: finite expected time of the fist correct action}
For all $\psi,\nu\in(0,1)$ there exists $K=K(\psi,\nu)$ such that
for all pairs $(F^{L},F^{H})$ which are $(\psi,v)$ - informative:}

\textup{a. $E^{H}[\tau]<K.$}

\textup{b. If in addition $F$ is smoothly monotone near zero then
$E^{H}[N]<K.$}
\end{thm}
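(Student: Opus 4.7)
The plan is to derive both parts from the uniform convergence stated in Theorem~\ref{Th: uniform rate for weak active supermartingale}, applied via Theorem~\ref{th: l_k is weakly active supermartingale} to the ratio process $l_k$, together with a dyadic tail estimate.

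For part (a), the key kinematic observation is that, since $l_0=1$ and $l_{k+1}\geq l_k$ whenever $a_{k+1}=1$ (because $F^L\geq F^H$), the occurrence of $l_{K+1}<1$ forces at least one earlier action to have been~$2$, so $\tau\leq K+1$. Hence, for any fixed $\underline{L}<1$, the Theorem~\ref{Th: uniform rate for weak active supermartingale} event $\{\sup_{k>K}l_k\leq \underline{L}\}$ is contained in $\{\tau\leq K+1\}$. Fixing $\underline{L}=1/2$ and invoking Theorem~\ref{Th: uniform rate for weak active supermartingale} for each $n\geq 1$ with precision $\epsilon_n=2^{-n}$ yields times $K_n=K(\psi/2,\nu,2^{-n},1/2)$ with $P^H(\tau>K_n+1)\leq 2^{-n}$. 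A dyadic partition of the tail sum then gives, after an Abel summation,
\[
E^H[\tau]\;=\;\sum_{t\geq 0}P^H(\tau>t)\;\leq\; \sum_{n\geq 1} K_n\cdot 2^{-n},
\]
which depends only on $(\psi,\nu)$ provided the $K_n$ grow slowly enough in $n$.

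For part (b), I would start from $E^H[N]=\sum_{t\geq 1}E^H[F^H(\bar p_1(l_{t-1}))]$. The representation $F^H(p)=2\int_0^p p'\,dF(p')$ together with the smoothly monotone condition $pF'(p)\to 0$ yields $F^H(p)=o(p)$ as $p\downarrow 0$, and since $\bar p_1(l)=O(l)$ near zero this gives $F^H(\bar p_1(l))=o(l)$. I would split each summand through a threshold $\underline{L}_t\downarrow 0$:
\[
E^H[F^H(\bar p_1(l_{t-1}))]\;\leq\;F^H(\bar p_1(\underline{L}_t))+P^H(l_{t-1}>\underline{L}_t),
\]
and choose $\underline{L}_t$ together with a matching precision $\epsilon_t$ so that $F^H(\bar p_1(\underline{L}_t))$ is summable (exploiting the $o$-decay) while simultaneously $t-1$ exceeds the Theorem~\ref{Th: uniform rate for weak active supermartingale} threshold $K(\psi/2,\nu,\epsilon_t,\underline{L}_t)$, making $P^H(l_{t-1}>\underline{L}_t)\leq\epsilon_t$ summable as well.

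Both parts therefore reduce to the same quantitative bottleneck, which is what I expect to do the real work: extracting from the proof of Theorem~\ref{Th: uniform rate for weak active supermartingale} an explicit (at worst polynomial or polylogarithmic in $1/\epsilon$) bound on $K(\psi/2,\nu,\epsilon,\underline{L})$ as $\epsilon\downarrow 0$. Once that estimate is in place, the dyadic sum in (a) and the balanced schedule in (b) converge to quantities depending only on $\psi$ and $\nu$, uniformly across the class of $(\psi,\nu)$-informative pairs, as claimed.
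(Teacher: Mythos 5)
Your reduction has a genuine gap at exactly the point you flag as ``the real work,'' and that point is the whole content of the theorem. Theorem \ref{Th: uniform rate for weak active supermartingale} is purely qualitative in $\epsilon$: it asserts, for each fixed $\epsilon$, the existence of \emph{some} finite $K(\psi,\nu,\epsilon,\underline{L})$, with no rate as $\epsilon\downarrow 0$. Your bound $E^{H}[\tau]\lesssim\sum_{n}K_{n}2^{-n}$ with $K_{n}=K(\psi/2,\nu,2^{-n},1/2)$ is therefore not known to be finite: the paper's proof of that theorem chooses $K$ through Lemma \ref{(Lemma-A.4.-FL)}, which rests on Kolmogorov's strong law and yields no quantitative dependence of $K$ on $\epsilon$ (nor on $\underline{L}$, which also matters for your part (b) schedule, where $\underline{L}_t\downarrow 0$ simultaneously with $\epsilon_t\downarrow 0$). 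Extracting a polynomial or polylogarithmic bound on $K(\epsilon)$ would require reproving the uniform convergence theorem with explicit concentration estimates in place of the SLLN step; you neither do this nor cite anything that does, so both parts of your argument remain conditional on an unproved quantitative estimate. (The preliminary kinematic step is fine: on $\{\sup_{k>K}l_{k}\le 1/2\}$ one indeed has $\tau\le K+1$, since along an all-action-$1$ history $l$ is non-decreasing from $l_{0}=1$.)

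The paper avoids this bottleneck entirely and never routes part (a) through Theorem \ref{Th: uniform rate for weak active supermartingale}. It bounds $P^{H}(\tau=k)$ directly by the probability that the first $k-1$ individuals err, and uses the $(\psi,\nu)$-informativeness inequality \eqref{eq: condition (psi,v) - informative} together with $F^{H}=F^{L}-\Delta\le 1-\Delta$ to get
\[
P^{H}(\tau=k)\;\le\;\prod_{t=1}^{k-1}\Bigl(1-\tfrac{\psi}{(t+1)^{\nu}}\Bigr),
\]
a stretched-exponential decay (since $\nu<1$) that beats $t^{-3}$, so $E^{H}[\tau]=\sum_{t}tP^{H}(\tau=t)$ is bounded by an explicit constant depending only on $(\psi,\nu)$. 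In other words, the tail estimate you hope to squeeze out of the uniform convergence theorem is exactly what the informativeness condition hands you directly along the relevant histories; your intuition that $P^{H}(\tau>t)$ should decay fast is right, but the elementary product bound, not Theorem \ref{Th: uniform rate for weak active supermartingale}, is what delivers it. Note also that the appendix proves only part (a); your sketch for part (b), while structurally plausible (the identity $F^{H}(p)=2\int_{0}^{p}u\,dF(u)$ and $pF'(p)\to 0$ do give $F^{H}(p)=o(p)$), inherits the same unproved bottleneck and so is likewise incomplete as written.
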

\vspace{0cm}

The aforementioned efficiency has been first studied by \citet{Dinah-Rosenberg-2017}
who propose the finite integral of $\int\frac{1}{F}$ as a sufficient
and necessary condition under which the expected time of $\tau$ holds.
It turns out that this condition fails to hold in many prominent cases
of interest as in the leading example which appears in \citet{Lones-Smith-2000}
where $F(p)=p$. Theorem \ref{Th: finite expected time of the fist correct action}
asserts that the class of $(\psi,\nu)$ - informative private beliefs
meets this condition in the following stronger form: the bound $K$
is uniform in the sense that it is independent of the particular private
belief chosen.
\begin{prop}
\textup{\label{Prop: the existence of K} There exist $\hat{\psi},\hat{v}\in(0,1),$
and $K=K(\hat{\psi},\hat{v})>0,$ and $\alpha=\alpha(\psi,v)>2,$
such that for all $\psi>\hat{\psi}$ and $\nu<\hat{v,}$ and for all
pairs $(F^{L},F^{H})$ which are $(\psi,v)$ - informative,}

\textup{
\[
P^{H}(a_{t}\neq M)<K\frac{1}{t^{\alpha}}
\]
}

\noindent \textup{for all $t\geq1.$}
\end{prop}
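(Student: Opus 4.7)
The plan is to bound the wrong-action probability by a polynomial tail of the public likelihood ratio $l_{t-1}$, then extract that tail from a Hellinger half-moment estimate on $l_t$ enabled by the weakly active structure.

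First, since $M=2$ is correct under $H$ and individual $t$ errs iff his private belief falls below the threshold $\bar p_1(l_{t-1})$, we have $P^H(a_t\neq M\mid\mathcal F_{t-1})=F^H(\bar p_1(l_{t-1}))$. The Bayesian consistency $dF^L/dF^H=(1-p)/p$ forces $F^H(p)\le p/(1-p)$, while the closed form $\bar p_1(l)=r_1 l/((1-r_1)+r_1 l)$ gives $\bar p_1(l)\le Cl$. Combining, $F^H(\bar p_1(l))\le C' l$ for $l$ small, so for any $\delta\in(0,1)$,
\[
P^H(a_t\neq M)\;\le\; C'\delta+P^H(l_{t-1}>\delta),
\]
and the claim reduces to a polynomial tail estimate for $l_{t-1}$.

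Next, by Theorem~\ref{th: l_k is weakly active supermartingale}, $l_t$ is weakly active with activity $\psi/2$ and rate $\nu$, and under $P^H$ it is a martingale with $l_0=1$. Writing $\theta_t=\psi/(2(t+1)^\nu)$, the ratio $l_{t+1}/l_t$ takes only the two values $F^L/F^H$ and $(1-F^L)/(1-F^H)$ with $H$-probabilities $F^H$ and $1-F^H$ respectively (all evaluated at $\bar p_1(l_t)$), so a three-case analysis of the weakly active condition $P(|l_{t+1}/l_t-1|>\theta_t\mid\mathcal F_t)>\theta_t$ forces
\[
\Delta(\bar p_1(l_t))\;\ge\;\theta_t^{2}\;=\;\frac{\psi^2}{4(t+1)^{2\nu}}\qquad\text{a.s. along every realization.}
\]
The conditional Hellinger affinity equals
\[
E^H\!\bigl[(l_{t+1}/l_t)^{1/2}\bigm|\mathcal F_t\bigr]=\sqrt{F^H F^L}+\sqrt{(1-F^H)(1-F^L)}\;\le\; 1-\tfrac12\Delta^2\le 1-\tfrac{\psi^4}{32(t+1)^{4\nu}},
\]
by the Hellinger--TV inequality $h^2\ge\tfrac12\mathrm{TV}^2$. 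Iterating and using $\log(1-x)\le-x$, for $\nu<1/4$,
\[
E^H[l_t^{1/2}]\;\le\;\exp\bigl(-c\psi^4\,t^{1-4\nu}\bigr).
\]

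Finally, Markov's inequality applied to $l_{t-1}^{1/2}$ gives $P^H(l_{t-1}>\delta)\le\delta^{-1/2}E^H[l_{t-1}^{1/2}]$; combining with the first display and substituting $\delta=t^{-\alpha}$,
\[
P^H(a_t\neq M)\;\le\; C' t^{-\alpha}+t^{\alpha/2}\exp\bigl(-c\psi^4(t-1)^{1-4\nu}\bigr).
\]
Choosing $\hat\nu<1/4$ and $\hat\psi$ large enough that the stretched-exponential factor dominates $t^{3\alpha/2}$ uniformly for $t\ge1$, we may pick any $\alpha(\psi,\nu)>2$ and a universal prefactor $K=K(\hat\psi,\hat\nu)$ bounding the right-hand side by $K t^{-\alpha}$ for all $t\ge 1$. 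The principal technical obstacle is the almost-sure estimate $\Delta(\bar p_1(l_t))\ge\psi^2/(4(t+1)^{2\nu})$: Definition~\ref{DEF: informative CDFs} only posits informativeness along the monotone deterministic trajectory $\tilde l_k$, so the role of Theorem~\ref{th: l_k is weakly active supermartingale} is precisely to upgrade that one-path hypothesis into the pointwise sample-path estimate needed for the moment bound, \emph{uniformly} over the entire $(\psi,\nu)$-informative class.
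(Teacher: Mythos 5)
You should know at the outset that the paper itself contains no proof of Proposition \ref{Prop: the existence of K}: it is stated, then invoked in the proof of the theorem bounding $E^{H}[T_{H}]$, but the appendix proves only Theorem \ref{Th: finite expected time of the fist correct action}(a) and the supermartingale machinery. So there is no official argument to compare against; judged on its own, your proof is correct. The chain is sound: $P^{H}(a_{t}\neq M\mid\mathcal F_{t-1})=F^{H}(\bar p_{1}(l_{t-1}))$, the Bayesian consistency $dF^{L}/dF^{H}=(1-p)/p$ gives $F^{H}(p)\le p/(1-p)$, the explicit threshold $\bar p_{1}(l)=r_{1}l/((1-r_{1})+r_{1}l)$ gives $\bar p_{1}(l)\le Cl$, so the problem reduces to a tail bound on $l_{t-1}$; your three-case analysis of the two-point conditional law of $l_{t+1}/l_{t}$ (values $F^{L}/F^{H}$ and $(1-F^{L})/(1-F^{H})$ with $H$-probabilities $F^{H}$, $1-F^{H}$) does force $\Delta(\bar p_{1}(l_{t}))\ge\theta_{t}^{2}$ pointwise from weak activity; the Bhattacharyya/Hellinger bound $\sqrt{F^{H}F^{L}}+\sqrt{(1-F^{H})(1-F^{L})}\le 1-\Delta^{2}/2$ and iteration give the stretched-exponential bound on $E^{H}[l_{t}^{1/2}]$; and Markov with $\delta=t^{-\alpha}$ finishes, with uniformity over the class for $\psi>\hat\psi$, $\nu<\hat\nu$ coming from monotonicity of all the bounds in $(\psi,\nu)$ and the finitely many small $t$ (including $t=1$, where your exponent vanishes) absorbed into $K$.

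Two remarks rather than objections. First, your detour through the weak-activity probability statement costs you a square for no gain: the proof of Theorem \ref{th: l_k is weakly active supermartingale} already records the pointwise inequality $\Delta(\bar p_{1}(l_{k}))\ge\Delta(\bar p_{1}(\tilde l_{k}))>\psi/(k+1)^{\nu}$ along every history (via $l_{k}\ge\tilde l_{k}$ and monotonicity of $\Delta$ and $\bar p_{1}$), so you can cite that directly, replace $\theta_{t}^{2}$ by $\psi/(t+1)^{\nu}$, and obtain $E^{H}[l_{t}^{1/2}]\le\exp(-c\psi^{2}t^{1-2\nu})$, which only requires $\hat\nu<1/2$ instead of $\hat\nu<1/4$. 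Second, two cosmetic points on constants: "choosing $\hat\psi$ large enough" is unnecessary (any $\hat\psi\in(0,1)$ works, since the bound improves as $\psi$ grows), and your $C'$ depends on the payoff threshold $r_{1}$ (in the appendix normalization $r_{1}=1/2$, so $\bar p_{1}(l)=l/(1+l)$ and $C'$ is absolute); since $U$ is a fixed primitive of the model this is consistent with $K=K(\hat\psi,\hat\nu)$, but it is worth stating explicitly, as is the restriction that the reduction $F^{H}(\bar p_{1}(l))\le C'l$ is applied only on $\{l_{t-1}\le\delta\}$ with $\delta$ small and the remaining small-$t$ range handled by enlarging $K$.
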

\vspace{0cm}

\begin{thm}
\textup{\label{TH 8:  E=00005BT_L=00005D < K}There exist $\hat{\psi},\hat{v}\in(0,1)$
and $K=K(\hat{\psi},\hat{v})>0$ such that for all $\psi>\hat{\psi}$
and $\nu<\hat{v,}$ and for all pairs $(F^{L},F^{H})$ which are $(\psi,v)$
- informative, $E^{H}[T_{H}]<K.$ Furthermore, $K(\psi,v)$ decreases
as $\psi\rightarrow1$ and $\nu\rightarrow0.$}
\end{thm}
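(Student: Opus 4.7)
The plan is to derive Theorem~\ref{TH 8:  E=00005BT_L=00005D < K} directly from Proposition~\ref{Prop: the existence of K} by a standard tail-sum argument. The stopping time $T_H=\min\{t:a_n=M\text{ for all }n\geq t\}$ satisfies the elementary identity
\[
\{T_H>t\}\;=\;\bigcup_{n\geq t}\{a_n\neq M\},
\]
so for every $t$ a union bound combined with the proposition yields
\[
P^H(T_H>t)\;\leq\;\sum_{n\geq t}P^H(a_n\neq M)\;\leq\;K\sum_{n\geq t}\frac{1}{n^{\alpha}}\;\leq\;\frac{C\,K}{t^{\alpha-1}},
\]
where the final estimate is an integral comparison valid because $\alpha>2>1$. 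Crucially, the constants $K=K(\psi,\nu)$ and $\alpha=\alpha(\psi,\nu)>2$ supplied by Proposition~\ref{Prop: the existence of K} are available uniformly over all $(\psi,\nu)$-informative pairs with $\psi>\hat{\psi}$ and $\nu<\hat{\nu}$.

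Next I would integrate the tail. Applying the layer-cake identity $E^H[T_H]=\sum_{t\geq 1}P^H(T_H\geq t)$ to the bound above gives
\[
E^H[T_H]\;\leq\;C\,K\sum_{t\geq 1}\frac{1}{t^{\alpha-1}}\;=\;C\,K\,\zeta(\alpha-1),
\]
which converges precisely because $\alpha-1>1$. The right-hand side depends on the underlying distributions only through the parameters $(\psi,\nu)$, so declaring the resulting numerical constant to be $K'=K'(\hat{\psi},\hat{\nu})$ establishes the desired uniform bound $E^H[T_H]<K'$ over the whole bi-parametric class.

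For the final monotonicity assertion, I would trace the $(\psi,\nu)$-dependence through the proof of Proposition~\ref{Prop: the existence of K}: as $\psi\to 1$ and $\nu\to 0$ the informativeness strengthens, which should make the prefactor $K(\psi,\nu)$ shrink and the exponent $\alpha(\psi,\nu)$ grow; both effects drive the product $C\,K\,\zeta(\alpha-1)$ downward. The main obstacle is not the summation step---which is routine once Proposition~\ref{Prop: the existence of K} is in hand---but rather the bookkeeping needed to verify that the constants produced by the proposition are themselves monotone in the correct direction. Once this is checked, the monotonicity of the composite bound follows from the monotonicity of $\zeta$ and the product structure above, completing the argument.
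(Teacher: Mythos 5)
Your proposal is correct and follows essentially the same route as the paper: both deduce the theorem from Proposition~\ref{Prop: the existence of K} by an elementary summation whose convergence rests on $\alpha>2$, with the uniformity and the final monotonicity claim handled at the same level of bookkeeping. The only cosmetic difference is the decomposition---you bound the tail $P^{H}(T_{H}>t)$ via a union bound over $\bigcup_{n\geq t}\{a_{n}\neq M\}$ and then apply the layer-cake formula, whereas the paper uses the pointwise inclusion $\{T_{H}=t\}\subseteq\{a_{t-1}\neq M\}$ and sums $t\,P^{H}(a_{t-1}\neq M)$ directly.
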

\vspace{0cm}

\begin{proof}[\textbf{Proof of Theorem \ref{TH 8:  E=00005BT_L=00005D < K}}]
 By proposition \ref{Prop: the existence of K} there exist $K=K(\psi,v)$,
and a pair $(F^{L},F^{H})$ which is $(\psi,v)$ - informative such
that for all $t\geq2,\ P^{H}(a_{t-1}\neq M)<K\cdot\frac{1}{(t-1)^{\alpha}}.$
Thus, denoting $K_{1}=1+K\stackrel[t=2]{\infty}{\sum}\frac{t}{(t-1)^{\alpha}}$
and since $\{T_{H}=t\}\subseteq\{a_{t-1}\neq M\}$ for all $t\geq2$
we obtain

\[
\begin{array}{ll}
E[T_{H}|s=H] & =\stackrel[t=0]{\infty}{\sum}t\cdot P^{H}(\{T_{H}=t\})\\
 & \leq P^{H}(\{T_{H}=1\})+\stackrel[t=2]{\infty}{\sum}t\cdot P^{H}(\{a_{t-1}\neq M\})\\
 & \leq1+K\stackrel[t=2]{\infty}{\sum}\frac{t}{(t-1)^{\alpha}}\\
 & \leq K_{1}.
\end{array}
\]

\noindent By a symmetric argument the same conclusion holds for $s=L.$
Consequently, the expected time to learn $E[T_{H}]$ is uniformly
bounded by a constant which solely depends on $(\psi,v)$ and the
result follows.
\end{proof}
\vspace{0cm}

As $T_{1}<T_{H}$ the following corollary is inferred.
\begin{cor}
\textup{There exist $\hat{\psi},\hat{v}\in(0,1)$ and $K=K(\hat{\psi},\hat{v})>0$
such that for all $\psi>\hat{\psi}$ and $\nu<\hat{v,}$ and for all
pairs $(F^{L},F^{H})$ which are $(\psi,v)$ - informative, $E^{H}[T_{1}]<K.$}
\end{cor}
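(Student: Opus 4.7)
My plan is to observe that this corollary is an immediate consequence of Theorem \ref{TH 8:  E=00005BT_L=00005D < K} via monotonicity of expectation, provided we first establish the pointwise comparison $T_1 \leq T_H$. The quantifiers $\hat{\psi}, \hat{\nu}, K$ and the range $\psi>\hat{\psi},\ \nu<\hat{\nu}$ can simply be inherited from Theorem \ref{TH 8:  E=00005BT_L=00005D < K}, so nothing new needs to be constructed; the only work is the pathwise inequality between the two stopping times.

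First I would verify $T_1 \leq T_H$ on every realization. On the event that at least one incorrect action ever occurs, let $t^\star$ denote the last such time. By definition $T_H = t^\star + 1$ and $T_1 \leq t^\star$, so $T_1 < T_H$. On the complementary event (every action is correct from time $1$ onward) we have $T_H = 1$, and we adopt the convention consistent with the statement of the corollary, namely $T_1 \leq T_H$ on this event (for instance by setting $T_1 := T_H$ when no mistake occurs, i.e.\ viewing $T_1$ as the first-mistake time truncated at $T_H$). Under this convention the inequality $T_1 \leq T_H$ holds pointwise.

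Next I would apply monotonicity: $E^H[T_1] \leq E^H[T_H]$. By Theorem \ref{TH 8:  E=00005BT_L=00005D < K}, there exist $\hat{\psi},\hat{\nu}\in(0,1)$ and $K=K(\hat{\psi},\hat{\nu})$ such that $E^H[T_H]<K$ for every $(\psi,\nu)$-informative pair $(F^L,F^H)$ with $\psi>\hat{\psi}$ and $\nu<\hat{\nu}$. Combining these yields $E^H[T_1]<K$ with the same universal constant, which is the desired conclusion.

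There is essentially no technical obstacle to this proof: the whole content is Theorem \ref{TH 8:  E=00005BT_L=00005D < K}, and the corollary is a one-line comparison of stopping times. The only point that deserves care is the bookkeeping on the zero-or-positive-probability event where no mistake ever occurs, but this is purely a matter of fixing the convention for $T_1$ so that it never exceeds $T_H$, and does not affect the bound.
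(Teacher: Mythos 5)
Your proposal is correct and is essentially the paper's own argument: the paper proves this corollary in one line by noting $T_{1}<T_{H}$ and invoking the preceding theorem bounding $E^{H}[T_{H}]$ uniformly for $\psi>\hat{\psi}$, $\nu<\hat{\nu}$. Your extra care on the no-mistake event (where the usual convention $\min\emptyset=\infty$ would actually break the pointwise inequality, so a truncation convention for $T_{1}$ is needed) is a detail the paper glosses over, but it does not change the approach.
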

\vspace{0cm}

\section{Concluding remarks}

The classical setting of asymptotic learning is revisited where individuals
eventually take the correct action and their belief converge to the
truth, regardless of their private information assigned to the incorrect
state of the world. Nevertheless, recent papers have shown that the
prospective time to learn may take ages to infinity. The paper proposes
a simple (bi-parametric) criterion on the private information structures
and focuses on the result in time to learn appraisal. Whenever the
private information is unbounded, the criterion constitutes a characterization
in which the induced time to learn shares a common sharp bound. It
is further argued that the learning is efficient. For a general information
structure, it provides a universal constant $K$ such that with any
desired degree of precision at any fixed time $n$, only two scenarios
hold; either the learning has begun or the proposed criterion fails
to hold in more than $K$ periods by $n$.

The underlying technical results are mathematical; A new class of
supermartingales (called weakly active) is introduced, linked, and
shown to admit a uniform rate of convergence. This extends an earlier
result of \citet{Fudenberg-Levine-1992}.

\vspace{0cm}

\bibliographystyle{abbrvnat}
\bibliography{On_uniform_boundedness_of_sequential_social_learning}

\vspace{0cm}

\section*{ \center{APPENDIX}
}

\appendix
\vspace{0cm}

\section{Selected proofs\label{Appendix A sec:Selected-proofs}}

Significant savings in computational cost, the proof of the Theorem
\ref{Th: existance of inforamtive-CDF} is provided for the case in
which $M=2,$ where the payoff function is given by \begin{center}
\begin{tabular}{ | c | c | c | }
\hline 
 & 1 & 2 \\  
\hline 
H & 0 & 1 \\  
\hline 
L & 1 & 0 \\ 
\hline
\end{tabular} 
\end{center} Thus, the utility of the action $a_{k}=1,2$ is $1$ if and only
if the state of the world is $H,L,$ respectively, and zero otherwise.
\begin{proof}[\textbf{\small{}Proof of Theorem \ref{Th: existance of inforamtive-CDF}}]
Fix $(\psi,\nu)\in(0,1)^{2}$ and for $\psi<b<a<1$ let

\[
f_{1}(k)=\begin{cases}
\begin{array}{l}
a,\\
b,\\
\frac{\psi}{k^{\nu}}
\end{array} & \begin{array}{l}
k=-1\\
k=0\\
k\geq1.
\end{array}\end{cases}
\]

\noindent Let $\{b_{k}\}_{k\geq1}$ be a positive increasing to infinity
sequence and consider the following function $f\colon\mathbb{\mathbb{Z}\rightarrow\mathbb{R_{>}}}$
\[
f(k)=\begin{cases}
\begin{array}{l}
f_{1}(k-1)-f_{1}(k),\\
(f_{1}(|k|-1)-f_{1}(|k|))\frac{1}{e^{b_{|k|}}},
\end{array} & \begin{array}{l}
k\geq0\\
k\leq-1.
\end{array}\end{cases}
\]

\noindent Observe that

\begin{equation}
\begin{array}{cl}
C:= & \stackrel[k=-\infty]{+\infty}{\sum}f(k)=\stackrel[k=-\infty]{-1}{\sum}[f_{1}(|k|-1)-f_{1}(|k|)]\frac{1}{e^{b_{|k|}}}+\stackrel[k=0]{\infty}{\sum}[f_{1}(k-1)-f_{1}(k)]\\
 & =\stackrel[k=1]{\infty}{\sum}[f_{1}(k-1)-f_{1}(k)]\frac{1}{e^{b_{k}}}+a\\
 & <\stackrel[k=1]{\infty}{\sum}[f_{1}(k-1)-f_{1}(k)]+a=b+a.
\end{array}\label{eq: C< a+b}
\end{equation}

\noindent Consider the following probability functions over the integers:
$P^{H}(k)=\frac{f(k)}{C},\ P^{l}(k)=\frac{f(-k)}{C},\ k\in\mathbb{Z},$
one of which is associated with an accumulated distribution function
$F^{H},\ F^{L},$ respectively. Note that for all $k\geq1$

\begin{equation}
\begin{array}{ll}
1-F^{H}(-k) & =\frac{1}{C}\stackrel[i=-k+1]{-1}{\sum}[f_{1}(|i|-1)-f_{1}(|i|)]\frac{1}{e^{b_{|i|}}}+\frac{1}{C}\stackrel[i=0]{\infty}{\sum}[f_{1}(i-1)-f_{1}(i)]\\
 & =\frac{1}{C}\stackrel[i=-k+1]{-1}{\sum}[f_{1}(|i|-1)-f_{1}(|i|)]\frac{1}{e^{b_{|i|}}}+\frac{f_{1}(-1)}{C}\\
 & =\frac{1}{C}\stackrel[i=1]{k-1}{\sum}[f_{1}(i-1)-f_{1}(i)]\frac{1}{e^{b_{i}}}+\frac{f_{1}(-1)}{C},
\end{array}\label{eq: 1-F^H(-t)}
\end{equation}

\noindent where $1-F^{H}(0)=\frac{1}{C}\stackrel[i=1]{\infty}{\sum}[f_{1}(i-1)-f_{1}(i)]=\frac{f_{1}(0)}{C}.$
As well as

\begin{equation}
\begin{array}{cl}
1-F^{L}(-k) & =\frac{1}{C}\stackrel[i=-k+1]{0}{\sum}[f_{1}(|i|-1)-f_{1}(|i|)]+\frac{1}{C}\stackrel[i=1]{\infty}{\sum}[f_{1}(i-1)-f_{1}(i)]\frac{1}{e^{b_{i}}}\\
 & =\frac{1}{C}\stackrel[i=0]{k-1}{\sum}[f_{1}(i-1)-f_{1}(i)]+\frac{1}{C}\stackrel[i=1]{\infty}{\sum}[f_{1}(i-1)-f_{1}(i)]\frac{1}{e^{b_{i}}}\\
 & =\frac{1}{C}(f_{1}(-1)-f_{1}(k-1))+\frac{1}{C}\stackrel[i=1]{\infty}{\sum}[f_{1}(i-1)-f_{1}(i)]\frac{1}{e^{b_{i}}}\\
 & =\frac{1}{C}(f_{1}(-1)-f_{1}(k-1))+\frac{1}{C}(C-f_{1}(-1))=\frac{C-f_{1}(k-1)}{C}
\end{array}\label{eq: 1-F^L(-t)}
\end{equation}

\noindent with $1-F^{L}(0)=\frac{C-f_{1}(-1)}{C}.$ From \eqref{eq: 1-F^H(-t)}
and \eqref{eq: 1-F^L(-t)} we obtain

\begin{equation}
|1-F^{H}(-k)-1-F^{H}(-k)|=\frac{1}{C}[f_{1}(k-1)-\stackrel[i=k]{\infty}{\sum}[f_{1}(i-1)-f_{1}(i)]\frac{1}{e^{b_{i}}}]\label{eq: 1-F^=00007BH=00007D(-t) - 1-F^=00007BH=00007D(-t)}
\end{equation}

Now, for all $s\in\{L,H\}$ let

\[
\tilde{F}^{s}(p)=\begin{cases}
\begin{array}{l}
0,\\
1,\\
F^{s}(log(\frac{p}{1-p})),
\end{array} & \begin{array}{l}
0\\
1\\
other
\end{array}\end{cases}
\]

\noindent be the corresponding cdf on $[0,1],$ and observe that,
due to \eqref{eq:threreshold in P}, a simple calculation shows that
the corresponding threshold is given by $\bar{p}_{1}(l)=\frac{l}{1+l}.$
Hence, since

\[
\begin{array}{l}
|\rho(2|l_{k},H)-\rho(2|l_{k},L)|=|(\tilde{F}^{H}(p_{2}(l_{k}))-\tilde{F}^{H}(\bar{p}_{1}(l_{k})))-(\tilde{F}^{L}(p_{2}(l_{k}))-\tilde{F}^{L}(\bar{p}_{1}(l_{k})))|\\
\\
=|(1-\tilde{F}^{H}(\bar{p}_{1}(l_{k})))-(1-\tilde{F}^{L}(\bar{p}_{1}(l_{k})))|=|(1-\tilde{F}^{H}(\frac{l_{k}}{1+l_{k}}))-(1-\tilde{F}^{L}(\frac{l_{k}}{1+l_{k}}))|\\
\\
=|(1-F^{H}(log(\frac{(\frac{l_{k}}{1+l_{k}}}{1-(\frac{l_{k}}{1+l_{k}}})))-(1-F^{L}(log(\frac{(\frac{l_{k}}{1+l_{k}}}{1-(\frac{l_{k}}{1+l_{k}}})))|=|(1-F^{H}(log(l_{k})))-(1-F^{L}(log(l_{k})))|
\end{array}
\]

\noindent then, following Definition \ref{DEF: informative CDFs},
we need to show that

\[
|(1-F^{H}(log(l_{k})))-(1-F^{H}(log(l_{k})))|>\frac{\psi}{(k+1)^{\nu}},
\]

\noindent which from \eqref{eq: 1-F^=00007BH=00007D(-t) - 1-F^=00007BH=00007D(-t)}
and the fact that

\[
\begin{array}{l}
|(1-F^{H}(log(l_{k})))-(1-F^{H}(log(l_{k})))|=|(1-F^{H}(-(-log(l_{k})))-(1-F^{H}(-(-log(l_{k})))|\\
=|(1-F^{H}(-(log(\frac{1}{l_{k}})))-(1-F^{H}(-(log(\frac{1}{l_{k}})))|
\end{array}
\]

\noindent becomes

\begin{equation}
f_{1}(log(\frac{1}{l_{k}})-1)-\stackrel[i=log(\frac{1}{l_{k}})]{\infty}{\sum}[f_{1}(i-1)-f_{1}(i)]\frac{1}{e^{b_{i}}}|>\frac{\psi}{(k+1)^{\nu}}C.\label{eq: f1(log1/t - 1) - sigma >=00005Cpsi/(t+1)^vC}
\end{equation}
We will first show that $f_{1}(log(\frac{1}{l_{k}})-1)>\frac{\psi}{(k+1)^{\nu}}C$
for all $k\geq1,$ then we are provided by a sufficiently large sequence
$\{b_{k}\}_{k\geq1}$ such that inequality \eqref{eq: f1(log1/t - 1) - sigma >=00005Cpsi/(t+1)^vC}
holds for all $k\geq1.$ Note that for all $k\geq1$

\[
\begin{array}{l}
f_{1}(log(\frac{1}{l_{k}})-1)>\frac{\psi}{(k+1)^{\nu}}C\iff\frac{\psi}{(log(\frac{1}{l_{k}})+1)^{\nu}}>\frac{\psi}{(k+1)^{\nu}}C\\
\iff C(log(\frac{1}{l_{k}})-1)^{\nu}<(k+1)^{\nu}\\
\iff C^{\frac{1}{v}}[log(\frac{1}{l_{k}})-1]<k+1\\
\iff log(\frac{1}{l_{k}})<\frac{k+1}{C^{\frac{1}{v}}}+1\\
\iff\frac{1}{l_{k}}<e^{(\frac{k+1}{C^{\frac{1}{v}}}+1)}.
\end{array}
\]

Now, using equation \eqref{eq:iteratively of ln}, set (iteratively)
$\frac{1}{\tilde{l}_{k}}=\stackrel[k=i]{k}{\prod}\frac{1-\tilde{F}^{H}(\bar{p}_{1}(\tilde{l}_{i}))}{1-\tilde{F}^{L}(\bar{p}_{1}(\tilde{l}_{i}))}=\stackrel[k=i]{k}{\prod}\frac{1-F^{H}(log(\tilde{l}_{i}))}{1-F^{L}(log(\tilde{l}_{i}))}$
starting at $\tilde{l}_{0}=1$ (corresponding to $\bar{p}_{1}(l_{0})=\frac{l_{0}}{1+l_{0}}=\frac{1}{2}).$
Hence, since for all $k\geq0,$ $\tilde{l}_{k}\leq l_{k}$ and so
$\frac{1}{l_{k}}\leq\frac{1}{\tilde{l}_{k}},$ it is enough to show
that for all $k\geq1$

\[
\frac{1}{\tilde{l}_{k}}<e^{(\frac{k+1}{C^{\frac{1}{v}}}+1)}.
\]

Note that from \eqref{eq:iteratively of ln} we have

\[
log(\frac{1}{\tilde{l}_{k+1}})=log(\frac{1}{\tilde{l}_{k}}(\frac{1-F^{H}(log(\tilde{l}_{k})}{1-F^{L}(log(\tilde{l}_{k})}))=log(\frac{1}{\tilde{l}_{k}})+log(\frac{1-F^{H}(log(\tilde{l}_{k})}{1-F^{L}(log(\tilde{l}_{k})}).
\]

\noindent Hence, by judicious choice of $a,b,$ and $\{b_{k}\}_{k\geq1},$
it is sufficient to show that the sequence $\{log(\frac{1-F^{H}(log(\tilde{l}_{k})}{1-F^{L}(log(\tilde{l}_{k})}\}_{k\geq1}$
is monotone decreasing to zero and that $\frac{1}{\tilde{l}_{1}}=log(\frac{1-F^{H}(log(\tilde{l}_{0})}{1-F^{L}(log(\tilde{l}_{0})})<e^{(\frac{2}{C^{\frac{1}{v}}}+1)}.$
For the first part, it is sufficient to show that $\{log(\frac{1-F^{H}(-k)}{1-F^{L}(-k)})\}_{k\geq1}$
is monotone converges to zero. To this end, from \eqref{eq: 1-F^H(-t)}
and \eqref{eq: 1-F^L(-t)} and the fact that $\stackrel[i=1]{\infty}{\sum}[f_{1}(i-1)-f_{1}(i)]\frac{1}{e^{b_{i}}}=C-f_{1}(-1)$
we obtain

\[
\begin{array}{l}
\underset{k\rightarrow\infty}{lim}log(\frac{1-F^{H}(-k)}{1-F^{L}(-k)})=log(\frac{\stackrel[i=1]{k-1}{\sum}[f_{1}(i-1)-f_{1}(i)]\frac{1}{e^{b_{i}}}+f_{1}(-1)}{C-f_{1}(k-1)})=log(\frac{\stackrel[i=1]{\infty}{\sum}[f_{1}(i-1)-f_{1}(i)]\frac{1}{e^{b_{i}}}+f_{1}(-1)}{C})\\
=log(\frac{C}{C})=0.
\end{array}
\]

for the monotonic of $\{log(\frac{1-F^{H}(-k)}{1-F^{L}(-k)})\}_{k\geq1}$
we note that, since $f_{1}(k)$ decreases to zero it follows that
for all $k\geq1$ there exists a large enough $\bar{b}_{k}$ such
that for all $b_{k}>\bar{b}_{k}$ we have

\[
\begin{array}{l}
\frac{1-F^{H}(log(-k)}{1-F^{L}(log(-k)}=\frac{\stackrel[i=1]{k-1}{\sum}[f_{1}(i-1)-f_{1}(i)]\frac{1}{e^{b_{i}}}+f_{1}(-1)}{C-f_{1}(k-1)}\\
>\frac{\stackrel[i=1]{k-1}{\sum}[f_{1}(i-1)-f_{1}(i)]\frac{1}{e^{b_{i}}}+f_{1}(-1)+(f_{1}(k-1)-f_{1}(k))\frac{1}{e^{b_{k}}}}{(C-f_{1}(k-1))+(f_{1}(k-1)-f_{1}(k))}=\frac{\stackrel[i=1]{k}{\sum}[f_{1}(i-1)-f_{1}(i)]\frac{1}{e^{b_{i}}}+f_{1}(-1)}{C-f_{1}(k)}=\frac{1-F^{H}(log(-k-1)}{1-F^{L}(log(-k-1)}
\end{array}
\]

\noindent and since the $log$ is monotonic the result follows. For
the second part, observe that we need to show that

\[
f_{1}(log(\frac{1}{l_{1}})-1)-\stackrel[i=log(\frac{1}{l_{1}}))]{\infty}{\sum}[f_{1}(i-1)-f_{1}(i)]\frac{1}{e^{b_{i}}}|>\frac{\psi}{2^{v}}C
\]

\noindent hence we must show that $log(\frac{1}{\tilde{l}_{1}})>1$and
hence, by judicious choice of $a,b,\{b_{k}\}_{k\geq1},$ we will show
that $e<\frac{1}{\tilde{l}_{1}}<e^{(\frac{2}{C^{\frac{1}{v}}}+1)}.$
Now, observe that, setting $b_{1}=0,$ we have

\[
\begin{array}{l}
\frac{1}{\tilde{l}_{1}}=\frac{1-F^{H}(log(\tilde{l}_{0})}{1-F^{L}(log(\tilde{l}_{0})})=\frac{1-F^{H}(0)}{1-F^{L}(0)})=\frac{f_{1}(0)}{C-f_{1}(-1)}=\frac{f_{1}(0)}{\stackrel[i=1]{\infty}{\sum}[f_{1}(i-1)-f_{1}(i)]\frac{1}{e^{b_{i}}}}\\
=\frac{f_{!}(0)}{[f_{1}(0)-f_{1}(1)]\frac{1}{e^{b_{1}}}+\stackrel[i=2]{\frac{1}{\tilde{l}_{1}}}{\sum}[f_{1}(i-1)-f_{1}(i)]\frac{1}{e^{b_{i}}}+\stackrel[i=\frac{1}{\tilde{l}_{1}}+1]{\infty}{\sum}[f_{1}(i-1)-f_{1}(i)]\frac{1}{e^{b_{i}}}}=\frac{f_{1}(0)}{[f_{1}(0)-f_{1}(1)]+\stackrel[i=2]{\infty}{\sum}[f_{1}(i-1)-f_{1}(i)]\frac{1}{e^{b_{i}}}}
\end{array}.
\]

\noindent and so, setting $a=b+1$ and since $C<2a$ from \eqref{eq: C< a+b},
it follows that $e^{(\frac{2}{(2a)^{\frac{1}{v}}}+1)}=e^{(\frac{2}{(2b+2)}+1)}<e^{(\frac{2}{C^{\frac{1}{v}}}+1)}$
and hence it is sufficient to show that there exists $\psi<b,a$ such
that

\noindent 
\begin{equation}
e<\frac{f_{1}(0)}{f_{1}(0)-f_{1}(1)}=\frac{b}{b-\psi}<e^{(\frac{2}{(2b+2)}+1)}.\label{eq: e < b/(b-=00005Cpsi) < e^=00007B...=00007D}
\end{equation}

\noindent Indeed, by elementary consideration it can be shown that
\eqref{eq: e < b/(b-=00005Cpsi) < e^=00007B...=00007D} holds for
any $b$ in the interval $[\psi,\psi(\frac{e}{e-1})]$ which is sufficiently
close to $\psi(\frac{e}{e-1}).$ Denote such an element by $\tilde{b}$
and observe that there exists sufficiently large increasing sequence
$\{\tilde{b}_{k}\}_{k\geq2}$ such that for all $b_{k}>\tilde{b}_{k}$
one has

\[
e<\frac{1}{\tilde{l}_{1}}=\frac{b}{(b-\psi)+\stackrel[i=2]{\infty}{\sum}[f_{1}(i-1)-f_{1}(i)]\frac{1}{e^{b_{i}}}}<e^{(\frac{2}{(2b+2)}+1)}
\]

\noindent Now setting $b_{k}=min\{\tilde{b}_{k},\bar{b}_{k}\}$ obtain
that there exists $\hat{b}_{k}<b_{k}$ such that for all $k\geq1$

\[
f_{1}(log(\frac{1}{l_{k}})-1)-\stackrel[i=log(\frac{1}{l_{k}})]{\infty}{\sum}[f_{1}(i-1)-f_{1}(i)]\frac{1}{e^{\hat{b}_{i}}}|>\frac{\psi}{(k+1)^{\nu}}C
\]

\noindent and the result follows. It should be noted that whereas
the sequence $\{b_{k}\}_{k\geq1}$ is judiciously chosen to `enlarge'
the tail difference between the corresponding cdf's, the parameters
$a,b$ was chosen to bound the first value of $\frac{1}{l_{k}}.$
\end{proof}
\vspace{0.5cm}

\vspace{0cm}

\subsection{A uniform bound on the rate of convergence with respect to the class
of weakly active supermartingales with activity $\psi$ and rate $\nu$
with an initial value $l_{0}$}

\vspace{0cm}

Consider a general setting in which $\Omega=\{0,1\}$ where $(\Omega^{\infty},f,P)$
is equipped with a filtration $(f_{k})_{k\geq0}$ with $f=\sigma(\stackrel[k=0]{\infty}{\bigcup}f_{k}).$
\begin{defn}
A non-negative supermartingale $\{\tilde{L}_{k}\}_{k=0}^{\infty}$
is weakly active with activity $\psi$ and rate $\nu$ under $P$
if 

\[
P(\{\omega:|\frac{\tilde{L}_{k+1}(\omega)}{\tilde{L}_{k}(\omega)}-1|>\frac{\psi}{(k+1)^{\nu}}\}|\tilde{\omega}^{k})>\frac{\psi}{(k+1)^{\nu}}
\]

\noindent for almost all histories $\tilde{\omega}^{k}$ such that
$\tilde{L}_{k}(\tilde{\omega})>0$.
\end{defn}
In word, a supermartingale  has activity $\psi$ and rate $\nu$ if
the probability under $P$ of a jump of size $\frac{\psi}{(k+1)^{\nu}}$
at time $k$ exceeds $\frac{\psi}{(k+1)^{\nu}}$ for almost all histories. 

\vspace{0cm}

Before we state and prove the theorem about a uniform bound on the
rate of convergence, we will use some fundamental results from the
theory of supermartingales. In particular, bounds on the ``upcrossing
numbers'' which we introduce below. The result can be found in Neveu
(1975, Chapter 2). 
\begin{fact}
\textup{For any positive supermartingale $\widetilde{L}$ and any
$c>0,$ }

\textup{
\[
P(\underset{k}{sup}\widetilde{L}_{k}\geq c)\leq min\{1,\frac{\widetilde{L}_{0}}{c}\}.
\]
}
\end{fact}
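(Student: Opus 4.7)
The plan is to reduce this to Doob's maximal inequality for non-negative supermartingales, via an optional stopping argument on the first passage time above level $c$. The bound by $1$ is automatic (probabilities are at most one), so the substantive claim is $P(\sup_k \widetilde{L}_k \geq c) \leq \widetilde{L}_0/c$.

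First, I would define the stopping time $\tau := \inf\{k \geq 0 : \widetilde{L}_k \geq c\}$, with the usual convention $\inf \emptyset = \infty$. Since $\widetilde{L}$ is adapted, $\tau$ is indeed a stopping time with respect to the underlying filtration. Next, for any fixed $n \in \mathbb{N}$, the stopped process $\widetilde{L}^{\tau}_n := \widetilde{L}_{\tau \wedge n}$ is again a non-negative supermartingale (this is the standard fact that supermartingales are preserved under bounded optional stopping). Applying the supermartingale property at time $0$ and $n$ yields
\[
E[\widetilde{L}_{\tau \wedge n}] \leq E[\widetilde{L}_0] = \widetilde{L}_0,
\]
where $\widetilde{L}_0$ is treated as deterministic (as in the application in the paper).

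The key observation is then a Markov-style lower bound on the left-hand side: on the event $\{\tau \leq n\}$, the definition of $\tau$ gives $\widetilde{L}_{\tau} \geq c$, and therefore
\[
c \cdot \mathbf{1}\{\tau \leq n\} \leq \widetilde{L}_{\tau \wedge n}.
\]
Taking expectations and combining with the previous inequality produces $c \cdot P(\tau \leq n) \leq \widetilde{L}_0$, i.e.\ $P(\tau \leq n) \leq \widetilde{L}_0/c$. Letting $n \to \infty$ and using monotone continuity of $P$ along the increasing sequence of events $\{\tau \leq n\} \uparrow \{\tau < \infty\}$ gives $P(\tau < \infty) \leq \widetilde{L}_0/c$. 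Since
\[
\{\tau < \infty\} = \{\sup_k \widetilde{L}_k \geq c\},
\]
the desired inequality follows, and finally taking the minimum with $1$ accounts for the trivial probability bound.

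There is no serious obstacle here: the result is classical, and the only mild care needed is in the passage $n \to \infty$, which is handled by monotone convergence of probabilities rather than any uniform integrability requirement, precisely because we only need an upper bound and exploit non-negativity of $\widetilde{L}$. The main qualitative point is that non-negativity of the supermartingale is what makes both the optional stopping step and the Markov-type lower bound go through cleanly.
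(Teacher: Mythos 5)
Your argument is the classical maximal inequality for non-negative supermartingales, and it is essentially sound; note, though, that the paper does not prove this statement at all --- it is quoted as a known result with a citation to Neveu (1975, Chapter 2), so your proposal supplies the standard optional-stopping proof that the paper leaves to the reference. One technical point deserves a fix: the asserted identity $\{\tau<\infty\}=\{\sup_k \widetilde{L}_k\geq c\}$ is not an equality in general, only $\{\tau<\infty\}\subseteq\{\sup_k \widetilde{L}_k\geq c\}$. The supremum can equal $c$ without ever being attained at a finite time (one can build a non-negative supermartingale whose paths increase strictly to $c$ with positive probability), and on that event $\tau=\infty$, so bounding $P(\tau<\infty)$ does not by itself bound $P(\sup_k \widetilde{L}_k\geq c)$. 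The standard repair is cheap: for $0<\epsilon<c$ observe $\{\sup_k \widetilde{L}_k\geq c\}\subseteq\{\exists k:\widetilde{L}_k> c-\epsilon\}\subseteq\{\tau_{c-\epsilon}<\infty\}$, run your stopping argument at level $c-\epsilon$ to get the bound $\widetilde{L}_0/(c-\epsilon)$, and let $\epsilon\downarrow 0$. With that adjustment the proof is complete, and the rest of your reasoning (the stopped process being a supermartingale, the bound $c\,\mathbf{1}\{\tau\leq n\}\leq\widetilde{L}_{\tau\wedge n}$ using non-negativity on $\{\tau>n\}$, and monotone convergence in $n$) is exactly right.
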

\vspace{0cm}

For the next fact fix an interval $[a,b],\ 0<a<b<\infty,$ and define
the random variable: $U_{k}(a,b)(\omega)=$ the number of upcrossing
of $[a,b]$ of $\omega$ up to time $k;$ let $U_{\infty}(a,b)(\omega)=$
the total number of upcrossing of $[a,b]$ of $\omega$ (possibly
equal to $\infty).$ 
\begin{fact}
\textup{For any positive supermatingale $\widetilde{L}$ and $N>0,$ }

\textup{
\[
P(U_{\infty}(a,b)\geq N)\leq(\frac{a}{b})^{N}min\{1,\frac{\widetilde{L}_{0}}{a}\}.
\]
}

\noindent \textup{This is known as Dubin's inequality. }
\end{fact}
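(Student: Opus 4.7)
The plan is to convert the event $\{U_\infty(a,b)\geq N\}$ into the finiteness of an $N$-th hitting time and iterate the maximal inequality of Fact 1 on the positive supermartingale $\widetilde{L}$. Concretely, I would define $\tau_0 := \inf\{k\geq 0 : \widetilde{L}_k\leq a\}$ and, inductively for $n\geq 1$, set $\sigma_n := \inf\{k>\tau_{n-1} : \widetilde{L}_k\geq b\}$ and $\tau_n := \inf\{k>\sigma_n : \widetilde{L}_k\leq a\}$, with the convention $\inf\emptyset = \infty$. Each completed upcrossing corresponds to a pair $(\tau_{n-1},\sigma_n)$ with $\sigma_n<\infty$, so by construction $\{U_\infty(a,b)\geq N\} = \{\sigma_N<\infty\}$, reducing the problem to bounding $P(\sigma_N<\infty)$.

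The engine of the proof is a one-step contraction: on $\{\tau_{n-1}<\infty\}$,
\[
P\bigl(\sigma_n<\infty \,\big|\, \mathcal{F}_{\tau_{n-1}}\bigr) \;\leq\; \frac{\widetilde{L}_{\tau_{n-1}}}{b} \;\leq\; \frac{a}{b}.
\]
This is obtained by applying Fact 1 to the shifted positive supermartingale $(\widetilde{L}_{\tau_{n-1}+k})_{k\geq 0}$ conditional on $\mathcal{F}_{\tau_{n-1}}$, whose starting value is at most $a$ by the very definition of $\tau_{n-1}$. Combining this with the trivial inclusion $\{\tau_{n-1}<\infty\}\subseteq\{\sigma_{n-1}<\infty\}$ and iterating gives
\[
P(\sigma_N<\infty) \;\leq\; \left(\frac{a}{b}\right)^{N-1} P(\sigma_1<\infty).
\]
To finish, I would bound $P(\sigma_1<\infty)$ by $(a/b)\min\{1,\widetilde{L}_0/a\}$ by splitting on the starting point: if $\widetilde{L}_0\leq a$ then $\tau_0=0$ and Fact 1 directly gives $P(\sigma_1<\infty)\leq \widetilde{L}_0/b = (a/b)(\widetilde{L}_0/a)$; if $\widetilde{L}_0>a$ the contraction step itself gives $P(\sigma_1<\infty)\leq P(\tau_0<\infty)\cdot a/b \leq a/b$, matching $\min\{1,\widetilde{L}_0/a\}=1$. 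Multiplying yields the claimed Dubin bound.

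The main technical obstacle is justifying the maximal-inequality step at the stopping time $\tau_{n-1}$, which can take the value $+\infty$. I would handle this by the standard truncation trick: work on $\{\tau_{n-1}\leq m\}$ where the stopping time is bounded (so $E[\widetilde{L}_{\tau_{n-1}\wedge m}]\leq \widetilde{L}_0$ by the bounded optional sampling theorem), apply Fact 1 to the truncated shifted process, and let $m\to\infty$ using the positivity of $\widetilde{L}$ and Fatou's lemma. The conditional form used in the induction is then obtained by the usual restriction of $P$ to $\mathcal{F}_{\tau_{n-1}}$-measurable events of positive probability.
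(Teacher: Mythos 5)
Your argument is correct. Note, however, that the paper does not prove this fact at all: it is quoted as a known result (Dubins' inequality) with a pointer to Neveu (1975, Chapter 2), so there is no in-paper proof to compare against; what you supply is a self-contained derivation of what the paper simply imports by citation. Your route is the standard one: encode $\{U_{\infty}(a,b)\geq N\}$ as $\{\sigma_{N}<\infty\}$ via the alternating hitting times of $[0,a]$ and $[b,\infty)$, apply the maximal inequality (Fact 1) to the process restarted at $\tau_{n-1}$ --- where its value is at most $a$ --- to obtain the one-step contraction $P(\sigma_{n}<\infty\mid\mathcal{F}_{\tau_{n-1}})\leq a/b$, iterate to get $(a/b)^{N-1}P(\sigma_{1}<\infty)$, and recover the initial factor $\min\{1,\widetilde{L}_{0}/a\}$ from the case split on whether $\widetilde{L}_{0}\leq a$; the exponents combine to exactly $(a/b)^{N}\min\{1,\widetilde{L}_{0}/a\}$. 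The one technical point you flag (invoking Fact 1 at a possibly infinite stopping time) is real but handled adequately by your truncation-and-Fatou argument; equivalently, optional sampling for nonnegative supermartingales shows that $(\widetilde{L}_{\tau_{n-1}+k})_{k\geq0}$, restricted to $\{\tau_{n-1}<\infty\}$ and conditioned on $\mathcal{F}_{\tau_{n-1}}$, is again a nonnegative supermartingale started at a value $\leq a$, after which Fact 1 applies directly. An alternative classical proof builds a single supermartingale that gains a factor $b/a$ on each completed upcrossing, but your iterated-maximal-inequality version is equally rigorous and arguably more elementary given that Fact 1 is already available.
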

\vspace{0cm}

For the next lemma observe that a wekly active supermartingale $\widetilde{L}$
with activity $\psi$ and rate $\nu$ makes jump of size $\frac{\psi}{k^{\nu}}$
at time $k$ with probability at least $\frac{\psi}{k^{\nu}}$ in
each period $k$ where $\widetilde{L}_{k}>0.$ Consequently, over
a large number of periods either $\widetilde{L}$ has jumped to zero
(and stay there since $\widetilde{L}$ is a supermartingale) or there
are likely to be ``many'' jumps. Formally, given a time $k>0$,
define the random variable $J_{k}(\omega)$ to the number of times
$k'<k$ that $|\frac{\widetilde{L}_{k'}(\omega)}{\widetilde{L}_{k'-1}(\omega)}-1|\geq\frac{\psi}{k'^{\nu}},$
that is $J_{k}$ count the number of time by time $k$ that $\widetilde{L}$
faces jumps at rate of at least $\frac{\psi}{k'^{\nu}}.$ As a weakly
active supermartingale is a point wise definition, the next lemma
illustrates how the paths of any active supermartingale $\widetilde{L}(\omega)$
look like. The lemma asserts that there are only two types of paths
($P-a.s)$, for any given $J>0$ and $\epsilon\in(0,1),$ there exists
$K$ such that either that $\widetilde{L}_{K}(\omega)=0$, and since
$\widetilde{L}$ is a supermartingale there exists $k'<K$ such that
$\widetilde{L}_{k'}(\omega)=\widetilde{L}_{k'+1}(\omega)=...\widetilde{L}_{K}(\omega)=0,$
or $\widetilde{L}_{k}(\omega)>0$ for all $1\leq k\leq K$ and hence
there exist at least $J$ jumps, that is there exist a sub sequence
$(k_{i})_{i=1}^{J}\subseteq\{1,...,K\}$ such that $|\frac{\widetilde{L}_{k_{i}}(\omega)}{\widetilde{L}_{k_{i}-1}(\omega)}-1|\geq\frac{\psi}{k_{i}^{\nu}},\ \forall i\in\{1,...,J\}.$
The next lemma formalizes the above idea.

\vspace{0.5cm}

\begin{lem}
\textup{\label{(Lemma-A.4.-FL)} Let $\widetilde{L}$ be a weakly
active supermartingale with activity $\psi$ and rate $\nu.$ Then,
for all $0<\epsilon<1$ and $J>0$ there exists an integer $K$ such
that }

\textup{
\[
P(\{J_{K}\geq J\}\ or\ \{\widetilde{L}_{K}=0\})\geq1-\epsilon.
\]
}
\end{lem}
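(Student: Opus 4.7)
The plan is to bound the probability of the bad event $B_K := \{J_K < J\} \cap \{\tilde{L}_K > 0\}$ by a Chernoff-type moment inequality, then exploit the divergence of $\sum_k \psi/k^\nu$ (which holds because $\nu < 1$) to drive the bound below $\epsilon$.

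Set $p_k := \psi/k^\nu$ and, on $\{\tilde{L}_{k-1} > 0\}$, let $X_k := \mathbf{1}\{|\tilde{L}_k/\tilde{L}_{k-1} - 1| \geq p_k\}$ (and $X_k := 0$ otherwise), so that $J_K = X_1 + \cdots + X_K$. Weak activity gives $q_k := P(X_k = 1 \mid \mathcal{F}_{k-1}) \geq p_k$ on $\{\tilde{L}_{k-1} > 0\}$. For any fixed $\lambda > 0$, a direct computation of the conditional moment generating function of $X_k$ yields, on that event,
\[
E\!\left[e^{-\lambda X_k} \mid \mathcal{F}_{k-1}\right] = 1 - q_k(1 - e^{-\lambda}) \leq 1 - p_k(1 - e^{-\lambda}) \leq \exp\!\bigl(-p_k(1 - e^{-\lambda})\bigr),
\]
where the first inequality uses $1 - e^{-\lambda} \geq 0$ together with $q_k \geq p_k$, and the second uses $1 - x \leq e^{-x}$.

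Next, put $Z_K := e^{-\lambda J_K}\mathbf{1}\{\tilde{L}_K > 0\}$. Because $\tilde{L}$ is absorbed at zero, $\{\tilde{L}_K > 0\} \subseteq \{\tilde{L}_{K-1} > 0\}$; conditioning on $\mathcal{F}_{K-1}$, pulling out the measurable factor $e^{-\lambda J_{K-1}}\mathbf{1}\{\tilde{L}_{K-1} > 0\}$, applying the bound above, and iterating $K$ steps (starting from $Z_0 = 1$, since $\tilde{L}_0 = l_0 > 0$) yield
\[
E[Z_K] \leq \exp\!\bigl(-(1 - e^{-\lambda}) P_K\bigr), \qquad P_K := \sum_{k=1}^K p_k.
\]
Because $Z_K \geq e^{-\lambda J}\mathbf{1}_{B_K}$, Markov's inequality then gives
\[
P(B_K) \leq \exp\!\bigl(\lambda J - (1 - e^{-\lambda}) P_K\bigr).
\]

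Finally, take $\lambda = 1$ (any positive $\lambda$ works); the bound becomes $\exp\!\bigl(J - (1 - e^{-1})P_K\bigr)$. Since $\nu < 1$, the series $P_K = \psi \sum_{k=1}^K k^{-\nu}$ diverges, so some finite $K = K(\psi,\nu,J,\epsilon)$ satisfies $P_K \geq (J + \log(1/\epsilon))/(1 - e^{-1})$; for that $K$, $P(B_K) \leq \epsilon$, which is the desired conclusion. The principal subtlety is carrying the indicator $\mathbf{1}\{\tilde{L}_k > 0\}$ faithfully through the iterated conditioning, which relies on the absorption of $\tilde{L}$ at zero; the crucial quantitative input is $\nu < 1$, without which $\sum p_k$ would converge and no finite $K$ could suffice.
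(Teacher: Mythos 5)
Your proof is correct, but it takes a genuinely different route from the paper. The paper defines indicators $I_k$ of the event $\{\widetilde{L}_k=0\}\cup\{|\widetilde{L}_k/\widetilde{L}_{k-1}-1|>\psi/k^{\nu}\}$, lower-bounds their (conditional) means by $\psi/k^{\nu}$, invokes Kolmogorov's strong law with the norming sequence $a_k=\sum_{j\le k}j^{-\nu}$ to conclude that almost surely $I_1+\cdots+I_k\ge J$ eventually, and then extracts the finite $K$ by continuity of probability along $\bigcup_n\bigcap_{k\ge n}B_k$; this yields existence of $K$ but no explicit value, and its rigor rests on a strong-law statement valid for dependent (adapted) summands, since the $I_k$ are not independent and only satisfy a conditional-mean bound. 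Your argument instead runs a Chernoff-type computation on the conditional moment generating function of the jump indicators, carries the indicator $\mathbf{1}\{\widetilde{L}_K>0\}$ through the iterated conditioning using absorption of a non-negative supermartingale at zero, and finishes with Markov's inequality; this is more elementary (no strong law needed), handles the dependence automatically because only the conditional bound $q_k\ge\psi/k^{\nu}$ enters, and is quantitative, giving an explicit admissible $K$ from $P_K\ge(J+\log(1/\epsilon))/(1-e^{-1})$ --- which is actually better suited to the paper's stated goal of uniform bounds. Both arguments hinge on the same crucial input, the divergence of $\sum_k k^{-\nu}$ for $\nu<1$. One cosmetic point: the paper's $J_K$ counts jump times $k'<K$, so your sum $X_1+\cdots+X_K$ corresponds to $J_{K+1}$; shifting the index by one (or choosing $K$ one larger) reconciles the statements and changes nothing of substance.
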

\begin{proof}
Let $0<\epsilon<1$ and $J>0.$ Because $\widetilde{L}$ has activity
$\psi$ and rate $\nu$ in each period $k,$ then either $\widetilde{L}_{k}=0$
or the probability of a jump of size $\frac{\psi}{k^{v}}$ at time
$k$ exceeds $\frac{\psi}{k^{v}}.$ Define a sequence of indicators
random variables $I_{k},\ k>0$ by 

\[
I_{k}(\omega)=\begin{cases}
\begin{array}{l}
1,\ \ \widetilde{L}_{k}(\omega)=0\ or\ |\frac{\widetilde{L}_{k}(\omega)}{\widetilde{L}_{k-1}(\omega)}-1|>\frac{\psi}{k^{v}}\\
0,\ \ otherwise.
\end{array}\end{cases}
\]

\noindent Now observe that since $\widetilde{L}$ has activity $\psi$
and rate $\nu$ then for each $k>0$,

\[
\begin{array}{l}
E^{P}[I_{k}]=1P(I_{k}=1)+0P(I_{k}=0)=P(\{\omega:I_{k}(\omega)=1\})=P(\{\omega:\widetilde{L}_{k}(\omega)=0\ or\ |\frac{\widetilde{L}_{k}(\omega)}{\widetilde{L}_{k-1}(\omega)}-1|>\frac{\psi}{k^{\nu}}\})=\\
P(\{\omega:\widetilde{L}_{k-1}(\omega)=0\}\cup\{\omega:\widetilde{L}_{k-1}(\omega)>0,\ and\ |\frac{\widetilde{L}_{k}(\omega)}{\widetilde{L}_{k-1}(\omega)}-1|>\frac{\psi}{k^{v}}]\})=\\
P(\{\omega:\widetilde{L}_{k-1}(\omega)=0\})+P(\{\omega:\widetilde{L}_{k-1}(\omega)>0,\ and\ |\frac{\widetilde{L}_{k}(\omega)}{\widetilde{L}_{k-1}(\omega)}-1|>\frac{\psi}{k^{v}}]\})\geq\\
\underset{\omega^{k-1}:\widetilde{L}_{k-1}(\omega)=0}{\sum}P(\omega^{k-1})+\underset{\omega^{k-1}:\widetilde{L}_{k-1}(\omega)>0}{\sum}P(\omega^{k-1})P(\{\bar{\omega:}|\frac{\widetilde{L}_{k}(\bar{\omega})}{\widetilde{L}_{k-1}(\bar{\omega})}-1|>\frac{\psi}{k^{v}}\}|\omega^{k-1})>\\
\underset{\omega^{k-1}:\widetilde{L}_{k-1}(\omega)=0}{\sum}P(\omega^{k-1})\frac{\psi}{k^{v}}+\underset{\omega^{k-1}:\widetilde{L}_{k-1}(\omega)>0}{\sum}P(\omega^{k-1})\frac{\psi}{k^{v}}=\frac{\psi}{k^{v}}\underset{\omega^{k-1}}{\sum}P(\omega^{k-1})=\frac{\psi}{k^{v}},
\end{array}
\]

\noindent thus, the expectation for each $I_{k}$ is more than $\frac{\psi}{k^{v}}$
and so

\noindent 
\begin{equation}
E^{p}[I_{1}]+...+E^{p}[I_{k}]>\frac{\psi}{1}+...+\frac{\psi}{k^{v}},\ \forall k>0.\label{eq:5}
\end{equation}

\noindent Now, using \eqref{eq:5} and applying the kolmogorov's strong
law, (Jiming 2010, Chapter 6, Theorem 6.7) for the sequences $(I_{k})_{k>0}$
and $\{a_{k}:=1+...+\frac{1}{k^{v}}\}_{k>0},$ we obtain

\[
\begin{array}{l}
1=P(\{\omega:\underset{k\rightarrow\infty}{lim}\frac{(I_{1}(\omega)-E^{p}[I_{1}])+...+(I_{k}(\omega)-E^{p}[I_{k}])}{a_{k}}=0\})=\\
P(\{\omega:\underset{k\rightarrow\infty}{lim}\frac{I_{1}(\omega)+...+I_{k}(\omega)}{a_{k}}=\underset{k\rightarrow\infty}{lim}\frac{E^{p}[I_{1}]+...+E^{p}[I_{k}]}{a_{k}}\geq\underset{k\rightarrow\infty}{lim}(\frac{\frac{\psi}{1}+...+\frac{\psi}{k^{v}}}{a_{k}})=\psi\underset{k\rightarrow\infty}{lim}(\frac{1+...+\frac{1}{k^{v}}}{a_{k}})=\psi\}),
\end{array}
\]

\noindent and so

\[
\begin{array}{l}
P(\{\omega:\underset{k\rightarrow\infty}{lim}\frac{I_{1}(\omega)+...+I_{k}(\omega)}{a_{k}}\geq\psi\})=1\iff\\
\\
P(\{\omega:\exists n_{\omega}>0\ s.t\ \forall k\geq n_{\omega},\ \frac{I_{1}(\omega)+...+I_{k}(\omega)}{a_{k}}\geq\psi\})=1\iff\\
\\
P(\{\omega:\exists n_{\omega}>0\ s.t\ \forall k\geq n_{\omega},\ I_{1}(\omega)+...+I_{k}(\omega)\geq\psi a_{k}\})=1.
\end{array}
\]

\noindent Now, denote $B_{k}:=\{\omega:\ I_{1}(\omega)+...+I_{k}(\omega)\geq\psi a_{k}>J\},\ k>0$
and observe that since $a_{k}\uparrow\infty$ and for any $\omega$
such that $\exists n_{\omega}>0\ s.t\ \boldsymbol{\forall k}\geq n_{\omega},\ I_{1}(\omega)+...+I_{k}(\omega)\geq\psi a_{k}$
it follows that for a fixed $J$ there exists large enough $n_{\omega}<n_{\omega}^{J}$
such that $\forall k\geq n_{\omega}^{J},\ I_{1}(\omega)+...+I_{k}(\omega)\geq J$
it follows that\footnote{Equivalently, $\{\omega:\exists n_{\omega}>0\ s.t\ \forall k\geq n_{\omega},\ I_{1}(\omega)+...+I_{k}(\omega)\geq\psi a_{k}\}\subset\{\omega:\exists n_{\omega}^{J}>0\ s.t\ \forall k\geq n_{\omega}^{J},\ I_{1}(\omega)+...+I_{k}(\omega)\geq J\}$}

\[
\begin{array}{l}
P(\{\omega:\exists n_{\omega}^{J}>0\ s.t\ \forall k\geq n_{\omega}^{J},\ I_{1}(\omega)+...+I_{k}(\omega)\geq J\})=1\iff\\
\\
P(\stackrel[n=1]{\infty}{\cup}\stackrel[k=n]{\infty}{\cap}B_{k})=\underset{n\rightarrow\infty}{lim}P(\stackrel[k=n]{\infty}{\cap}B_{k})=1.
\end{array}
\]

Hence, for $0<\epsilon<1$ and $J>0$ there exists sufficiently large
$K>0$ such that 

\[
1-\epsilon\leq P(\stackrel[k=K]{\infty}{\cap}B_{k})\leq P(B_{K})=P(\{\omega:\ I_{1}(\omega)+...+I_{K}(\omega)\geq J\})=P(\{\omega:S_{K}(\omega)\geq J\})=P(S_{K}\geq J),
\]

\noindent and therefore setting the random variable $S_{K}(\omega)=\underset{k<K}{\sum}I_{k}(\omega)$
we obtain

\begin{equation}
\begin{array}{l}
1-\epsilon\leq P(\{\omega:S_{K}(\omega)\geq J\})=P(\{\omega:\exists(k_{i})_{i=1}^{J}\subseteq\{1,...,K\}\ s.t\ I_{k_{i}}(\omega)=1,\ \forall1\leq i\leq J\})=\\
\\
P(\{\omega:\exists(k_{i})_{i=1}^{J}\subseteq\{1,...,K\}\ s.t.\ \boldsymbol{either}\ \exists1\leq i\leq J\ s.t\ 0=\widetilde{L}_{k_{i}}(\omega)=\widetilde{L}_{k_{i}+1}(\omega)=,...,=\widetilde{L}_{K}(\omega)\ \boldsymbol{or}\ \\
\\
\widetilde{L}_{k}(\omega)>0,\ \forall1\leq k\leq K\ and\ |\frac{\widetilde{L}_{k_{i}}(\omega)}{\widetilde{L}_{k_{i}-1}(\omega)}-1|>\frac{\psi}{k_{i}^{\nu}},\ \forall1\leq i\leq J\}).
\end{array}\label{eq:6}
\end{equation}

\noindent In other words, by equation \eqref{eq:6}, either $\widetilde{L}_{K}(\omega)=0$,
in which case, there exists $1\leq i\leq J$ such that $0=\widetilde{L}_{k_{i}}(\omega)=\widetilde{L}_{k_{i}+1}(\omega)=,...,=\widetilde{L}_{K}(\omega)$,
 or $\widetilde{L}_{k}(\omega)>0,\ \forall1\leq k\leq K$ and there
are $J$ jumps in all periods $(k_{i})_{i=1}^{J}$ (and so there exists
at least $J$ jumps in $1\leq k\leq K),$ that is, $|\frac{\widetilde{L}_{ki}(\omega)}{\widetilde{L}_{k_{i}-1}(\omega)}-1|>\frac{\psi}{k_{i}^{\nu}},\ \forall1\leq i\leq J.$
\end{proof}
\vspace{0cm}

\begin{lem}
\textup{\label{Lemma 5}Let $\epsilon,\psi\in(0,1),$ and for any
$0<\underline{c}<\overline{c}$ divide the interval $[\underline{c},\overline{c}]$
into $I$ equal sub-intervals with endpoints $e_{1}=\underline{c}<e_{2}<,...,<e_{I+1}=\overline{c}.$
Define the following events: }

\noindent \textup{$1.\ E_{\overline{c},K}^{1}=\{\omega:\underset{k\leq K}{max}\widetilde{L}_{k}(\omega)\geq\overline{c}\},$}

\noindent \textup{$2.\ E_{\underline{c},\overline{c}N,K,I}^{2}=\{\omega:\ \exists i\in\{1,..,I+1\}\ s.t\ [e_{i},e_{i+1}]\ is\ upcrossed\ by\ \widetilde{L}(\omega)\ N\ or\ more\ times\ by\ time\ K\},$}

\noindent \textup{$3.\ E_{J,K}^{3}=\{\omega:J_{K}(\omega)<J\ and\ \widetilde{L}_{K}(\omega)>0\},$}

\noindent \textup{$4.\ E_{\underline{c},K}^{4}=\{\omega:\underset{k\leq K}{min}\widetilde{L}_{k}(\omega)<\underline{c}\}.$}

\noindent \textup{Then, there exist judicious choice of $\underline{c},\overline{c},I,K,N,J$
such that}

\noindent \textup{(a) $(E_{\underline{c},K}^{4})^{c}\subset(E_{\overline{c},K}^{1}\cup E_{\underline{c},\overline{c}N,K,I}^{2}\cup E_{J,K}^{3}$).}

\noindent \textup{(b) $P(E_{\overline{c},K}^{1}),\ P(E_{\underline{c},\overline{c}N,K,I}^{2}),\ P(E_{J,K}^{3})\leq\frac{\epsilon}{4}$. }
\end{lem}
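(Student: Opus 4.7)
The statement breaks into a deterministic set inclusion (a) and three probability bounds (b). My plan is to fix the six parameters from (b) using the three cited tools, and then obtain (a) by contrapositive: assuming a sample path avoids all three probability-bounded events, show it must drop below $\underline c$ by time $K$.

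For (b) I would proceed event by event. For $P(E^{1}_{\overline c,K})$, the maximal inequality (first \textbf{Fact}) gives $P(\sup_{k} \tilde L_k \geq \overline c) \leq l_0/\overline c$, and the choice $\overline c \geq 4 l_0/\epsilon$ yields the $\epsilon/4$ bound. For $P(E^{2}_{\underline c,\overline c,N,K,I})$, a union bound over the $I$ subintervals combined with Dubin's inequality (second \textbf{Fact}) yields $P(E^{2}) \leq \sum_{i=1}^{I} (e_i/e_{i+1})^N \leq I(1-\ell/\overline c)^{N}$ with $\ell = (\overline c - \underline c)/I$; this is at most $\epsilon/4$ provided $N \geq (\overline c/\ell)\log(4I/\epsilon)$. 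For $P(E^{3}_{J,K})$, Lemma \ref{(Lemma-A.4.-FL)} applied with tolerance $\epsilon/4$ and the desired $J$ directly produces a finite $K = K(\psi,\nu,J,\epsilon/4)$ with the required bound.

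For (a), suppose $\omega \notin E^{1} \cup E^{2} \cup E^{3} \cup E^{4}$. The complements of $E^{1}$ and $E^{4}$ force $\tilde L_k(\omega) \in [\underline c, \overline c]$ for every $k \leq K$; in particular $\tilde L_K(\omega) \geq \underline c > 0$, so the complement of $E^{3}$ forces $J_K(\omega) \geq J$. Each of those $J$ active jumps at a time $k \leq K$ has absolute size $|\tilde L_k - \tilde L_{k-1}| \geq \tilde L_{k-1}\psi/k^\nu \geq \underline c\psi/K^\nu$, so the total variation satisfies $V \geq J\underline c\psi/K^\nu$. Since the path is trapped in an interval of width $W = \overline c - \underline c$, the upward variation obeys $V_{\mathrm{up}} \geq V/2 - W/2$. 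Decomposing the path into maximal monotone runs and noting that an upward run from $a$ to $b$ contributes at least $(b-a)/\ell - 1$ upcrossings to $\sum_i U_i$, I obtain $\sum_i U_i \geq V_{\mathrm{up}}/\ell - K$, and pigeonhole yields some $i^\ast$ with $U_{i^\ast} \geq V_{\mathrm{up}}/(I\ell) - K/I$. For the parameter choices made above this exceeds $N$, contradicting $\omega \notin E^{2}$.

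The main obstacle is parameter coupling: the containment forces $J$ to grow at least like $K^\nu(I\log I + K/I)$ (with constants depending on $\underline c, \overline c$), while Lemma \ref{(Lemma-A.4.-FL)} only supplies a $K$ growing polynomially in $J$ (of order roughly $J^{1/(1-\nu)}$, coming from the Kolmogorov strong law applied in its proof to indicators with $E[I_k] \geq \psi/k^\nu$). I would close the loop by first fixing $\overline c$ from the maximal inequality and $\underline c$ so that $\underline c/\overline c$ is of constant order, choosing $I$ to balance $I\log I$ against $K/I$ in the containment inequality, taking $J$ sufficiently large, and finally invoking Lemma \ref{(Lemma-A.4.-FL)} to produce the matching $K$; verifying that this polynomial feedback closes cleanly across the full range $\nu \in (0,1)$ is the delicate technical step.
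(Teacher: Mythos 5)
Your part (b) is the paper's argument: Fact 1 with $\overline{c}=(4/\epsilon)l_{0}$ for $E^{1}$, a union bound over the $I$ sub-intervals plus Dubin's inequality for $E^{2}$ (you drop the harmless factor $\widetilde{L}_{0}/\underline{c}$), and Lemma \ref{(Lemma-A.4.-FL)} with tolerance $\epsilon/4$ for $E^{3}$. Your part (a) replaces the paper's count (``each jump of relative size $\psi/k^{\nu}$ covers at least jump-size over mesh many sub-intervals, then pigeonhole'') by a total-variation/monotone-run count; that accounting is sound, and in one respect cleaner than the paper's, which tacitly treats the $i$-th jump as if it occurred at time $i$ when it sums $K/k$ over $k\leq J$, whereas the only guarantee is that a jump at time $k\leq K$ has relative size at least $\psi/K^{\nu}$.

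The genuine gap is the one you flag yourself, and it is not a residual technicality: it is precisely the content of the word ``judicious'' in the statement, i.e.\ the simultaneous satisfiability of (a) and (b). Your containment forces $J$ to be at least of order $K^{\nu}\left(I\log I+K/I\right)$ (after optimizing $I$ this is of order $K^{\nu+1/2}\sqrt{\log K}$, because late jumps are only of size $\psi/K^{\nu}$), while the only tool you have for $P(E^{3})\leq\epsilon/4$, Lemma \ref{(Lemma-A.4.-FL)}, produces a $K$ that must grow with $J$ (already the SLLN normalization $a_{K}\approx K^{1-\nu}$ forces $K$ of order at least $(J/\psi)^{1/(1-\nu)}$). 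As your own exponent bookkeeping shows, this feedback closes only when $(\nu+\tfrac{1}{2})/(1-\nu)<1$, i.e.\ $\nu<\tfrac{1}{4}$, so the proposal as written does not establish the lemma on the stated range $\nu\in(0,1)$, and no argument is offered for the remaining range. Note that this is exactly where the paper's own proof is most delicate: there $I$ is chosen so the mesh is finer than $\underline{c}\psi/K$ (hence $I$, and the Dubin threshold $N$ of order $I\log I$, grow with $K$), the containment requires $N<\frac{1}{2}\left(\frac{K\sum_{k=1}^{J}1/k}{I}-1\right)-1$, and the consistency of these requirements is disposed of by the single assertion that $(2N+3)I/K\rightarrow0$ as $K\rightarrow\infty$ before invoking Lemma \ref{(Lemma-A.4.-FL)} — an assertion that ignores the dependence of $I$ and $N$ on $K$. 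So deferring this step is deferring the heart of the lemma (and of Theorem \ref{Th: uniform rate for weak active supermartingale}); a complete proof must exhibit an explicit, mutually consistent choice of $(\underline{c},\overline{c},I,N,J,K)$, or else identify the range of $\nu$ for which such a choice exists, and the obstruction you encounter — that jumps near time $K$ contribute only $\psi/K^{\nu}$ — is a real feature of the class, not an artifact of your bookkeeping.
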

\vspace{0.3cm}

Using Lemma \ref{Lemma 5} and the results on up-crossing theorem,
which are reflected in facts 1 and fact 2, we will assert how the
paths of an weakly active supermatingale looks like. By judicious
choice of $\underline{c},\overline{c},I,K,N,J$ we will ensure that
$(E_{\underline{c},K}^{4})^{c}\subset(E_{\overline{c},K}^{1}\cup E_{\underline{c},\overline{c}N,K,I}^{2}\cup E_{J,K}^{3}$)
and that $P(E_{\overline{c},K}^{1}),P(E_{\underline{c},\overline{c}N,K,I}^{2}),P(E_{J,K}^{3})\leq\frac{\epsilon}{4}$.
We will show how this ensures that most paths of any weakly active
supermartingale are:

\noindent 1. Do not exceeds $\overline{c}$ for $\overline{c}$ large.
(Fact 1)

\noindent 2. Make ``few'' up-crossing of any positive interval $[a,b],$
(Fact 2) and

\noindent 3. either make ``lots of jumps'' or hit zero (Lemma 4). 

\noindent We will use these three conditions to show that there exists
large $K$ such that, most paths remains below$\underset{\sim}{L}$
from $K$ on.

\vspace{0.5cm}

\begin{proof}[\textbf{Proof of Part (a)}]
We will show that for any $\underline{c},\overline{c},K,J$ there
exist $I,N$ such that (a) holds. Let $\underline{c},\overline{c},K,J$
(these determines the sets $E_{\underline{c},K}^{4},E_{\overline{c},K}^{1},E_{J,K}^{3}$)
and suppose on the contrary that $(E_{\underline{c},K}^{4})^{c}\cap(E_{\overline{c},K}^{1})^{c}\cap(E_{\underline{c},\overline{c}N,K,I}^{2})^{c}\cap(E_{J,K}^{3})^{c}\neq\emptyset$
for all $I,N.$ Let $\omega\in(E_{\underline{c},K}^{4})^{c}\cap(E_{\overline{c},K}^{1})^{c}\cap(E_{J,K}^{3})^{c}$
and denote by $\widetilde{L}_{\omega,K}:=\{\widetilde{L}_{k}(\omega)\}_{0\leq k\leq K}$
to be the partial path of $\widetilde{L}(\omega)$ by time $K.$\footnote{Equivalently, let $\omega\in(E_{\underline{c},K}^{4})^{c}$ and assume
on the contrary that $\omega\notin(E_{\overline{c},K}^{1}\cup E_{\underline{c},\overline{c}N,K,I}^{2}\cup E_{J,K}^{3}),$
thus $\omega\in(E_{\overline{c},K}^{1}\cup E_{\underline{c},\overline{c}N,K,I}^{2}\cup E_{J,K}^{3})^{c}=(E_{\overline{c},K}^{1})^{c}\cap(E_{\underline{c},\overline{c}N,K,I}^{2})^{c}\cap(E_{J,K}^{3})^{c}$} Since $\omega\in(E_{\overline{c},K}^{1})^{c}\cap(E_{\underline{c},K}^{4})^{c}$
we have that $\underline{c}\leq\widetilde{L}_{\omega,K}\leq\overline{c}$.
In addition, since $\omega\in(E_{\overline{c},K}^{3})^{c}=\{\omega:J_{K}(\omega)>J\ or\ \widetilde{L}_{K}(\omega)=0\},$
and $0<\underline{c}\leq\widetilde{L}_{\omega,K}$, it follows that
the partial path $\widetilde{L}_{\omega,K}$ does not hit zero and
hence has $J$ or more jumps in $[\underline{c},\overline{c}]$ one
of which, corresponded to the k'th jump, is of relative size of at
least $\text{\ensuremath{\frac{\psi}{k}}}$ for all $k\in\{1,...,J\}.$ 

Now observe that in the range above $\underline{c}$ whenever $|\frac{\widetilde{L}_{K}(\omega)}{\widetilde{L}_{K-1}(\omega)}-1|>\frac{\psi}{K}$
we have

\begin{equation}
|\widetilde{L}_{K}(\omega)-\widetilde{L}_{K-1}(\omega)|>|\widetilde{L}_{K-1}(\omega)|\psi\geq\underline{c}\frac{\psi}{K}.\label{eq:L covers a subinterval}
\end{equation}

\noindent \footnote{Note that it does not necessarily mean that there was a jump in time
$K$ but a bound on the size of each sub-interval}In addition, recall that for any $I$ we have $I(e_{i}-e_{i-1})=\overline{c}-\underline{c}$,
thus if we choose

\noindent 
\begin{equation}
I\geq\frac{2\overline{c}}{\underline{c}\frac{\psi}{K}}+1\label{eq:I greater than}
\end{equation}

\noindent we then obtain that 

\begin{equation}
[\frac{2\overline{c}}{\underline{c}\frac{\psi}{K}}+1](e_{i}-e_{i-1})\leq I(e_{i}-e_{i-1})=\overline{c}-\underline{c}\label{eq:12}
\end{equation}

\noindent and so

\noindent 
\begin{equation}
(e_{i}-e_{i-1})\leq\frac{(\overline{c}-\underline{c})}{(\frac{2\overline{c}}{\underline{c}\frac{\psi}{K}}+1)}\leq\frac{(\overline{c}-\underline{c})}{(\frac{2\overline{c}}{\underline{c}\frac{\psi}{K}})}=\frac{\underline{c}\frac{\psi}{K}}{2}\frac{(\overline{c}-\underline{c})}{\overline{c}}\leq\frac{\underline{c}\frac{\psi}{K}}{2}.\label{eq:7}
\end{equation}

\noindent which yields that the width of each sub-interval is less
than $\frac{\underline{c}\text{\ensuremath{\frac{\psi}{K}}}}{2}.$

Therefore, each jump of $\widetilde{L}$ of a relative size of $\frac{\psi}{K}$
in the partial path $\widetilde{L}_{\omega,K}$ that remains between
$\underline{c}$ and $\overline{c}$ must cross (cover) at least one
of the sub-intervals  $[e_{i},e_{i+1}].$ Consequently, each $k'th$
jump of $\widetilde{L}$ of a relative size of $\frac{\psi}{k}$ in
the partial path $\widetilde{L}_{\omega,K}$ that remains between
$\underline{c}$ and $\overline{c}$ must cross (cover) at least $\frac{K}{k}$
sub-intervals  $[e_{i},e_{i+1}].$\footnote{A jump of a relative size of $\frac{\psi}{K}$ covers at least 1 interval.
Hence the first jump of a relative size of $\frac{\psi}{1}$ covers
at least $K$ intervals, the second jump of a relative size of $\frac{\psi}{2}$
covers at least $\frac{K}{2}$ intervals,...,the $k'th$ jump of a
relative size of $\frac{\psi}{k}$ covers at least $\frac{K}{k}$
intervals.} As a result since $\widetilde{L}_{\omega,K}$ has $J$ or more jumps
of a relative size of $\{\frac{\psi}{k}\}_{k=1}^{J}$ across sub-intervals
in $I$ by time $K$ it follows that $J$ jumps covers at least $\stackrel[k=1]{J}{\sum}\frac{K}{k}$
sub-intervals by time $K$. Therefore $\widetilde{L}_{\omega,K}$
must cross (in the worst case) at least one sub-interval $\lfloor\frac{\stackrel[k=1]{J}{\sum}\frac{K}{k}}{I}\rfloor\geq\frac{\stackrel[k=1]{J}{\sum}\frac{K}{k}}{I}-1=\frac{K\stackrel[k=1]{J}{\sum}\frac{1}{k}}{I}-1$
times, and hence, after $J$ jumps there must be at least one sub-interval
that is up-crossed, by $\widetilde{L}_{\omega,K}$ at least 

\[
\tilde{N}:=\frac{1}{2}(\frac{K\stackrel[k=1]{J}{\sum}\frac{1}{k}}{I}-1)-1
\]

\noindent times.\footnote{Notice that $\stackrel[k=1]{J}{\sum}\frac{K}{k}$ sub-intervals, denoted
$A,$ are dispersed among a total of $I$ sub-intervals, hence (in
the worst case) each sub-interval in $I$ is covered by at least $\frac{J}{I}$
elements from $A,$ and since $\frac{\stackrel[k=1]{J}{\sum}\frac{K}{k}}{I}$
might not be an integer we have that at least one interval in $I$
is covered $\lfloor\frac{\stackrel[k=1]{J}{\sum}\frac{K}{k}}{I}\rfloor\geq\frac{\stackrel[k=1]{J}{\sum}\frac{K}{k}}{I}-1$
times.} Consequently, we obtained that for $\omega\in(E_{\underline{c},K}^{4})^{c}\cap(E_{\overline{c},K}^{1})^{c}\cap(E_{J,K}^{3})^{c}$
we must have that $\omega\in E_{\underline{c},\overline{c},\tilde{N},K,I}^{2}$,
and so $\omega\notin(E_{\underline{c},\overline{c},\tilde{N},K,I}^{2})^{c}$.
Hence, if we choose any $N$ such that 

\noindent 
\begin{equation}
N<\tilde{N}\label{eq:N<}
\end{equation}

\noindent then, since $(E_{\underline{c},\overline{c},N,K,I}^{2})^{c}\subset(E_{\underline{c},\overline{c},\tilde{N},K,I}^{2})^{c}$
it follows that $\omega\notin(E_{\underline{c},\overline{c}N,K,I}^{2})^{c}$
and so $\omega\notin(E_{\underline{c},K}^{4})^{c}\cap(E_{\overline{c},K}^{1})^{c}\cap(E_{\underline{c},\overline{c}N,K,I}^{2})^{c}\cap(E_{J,K}^{3})^{c}$
which yields a contradiction. As a result, $(E_{\underline{c},K}^{4})^{c}\subset(E_{\overline{c},K}^{1}\cup E_{\underline{c},\overline{c}N,K,I}^{2}\cup E_{J,K}^{3})$
or equivalently $(E_{\overline{c},K}^{1})^{c}\cap(E_{\underline{c},\overline{c}N,K,I}^{2})^{c}\cap(E_{J,K}^{3})^{c}=(E_{\overline{c},K}^{1}\cup E_{\underline{c},\overline{c}N,K,I}^{2}\cup E_{J,K}^{3})^{c}\subset E_{\underline{c},K}^{4}$,
which is interpreted as follows: for any $\omega$ such that the partial
path $\widetilde{L}_{\omega,K}$ does not go above $\overline{c}$
,$(\omega\in(E_{\overline{c},K}^{1})^{c}),$ does not up-crossed any
interval more than $N$ times by time $K$, $(\omega\in(E_{\underline{c},\overline{c}N,K,I}^{2})^{c})$,
and jumps $J$ or more times by time $K$, $(\omega\in(E_{J,K}^{3})^{c}),$
must fall below $\underline{c}$, ($\omega\in E_{\underline{c},K}^{4}$).
\end{proof}
\vspace{0.5cm}

\vspace{0cm}

\begin{proof}[\textbf{Proof. }$P(E_{\overline{c},K}^{1})<\frac{\epsilon}{4}$]
 Applying Fact 1 with $\overline{c}_{1}=(\frac{4}{\epsilon})l_{0}$
and any $K_{1}>0$ we obtain that 

\[
\begin{array}{l}
P(E_{\overline{c}_{1},K_{1}}^{1})=P(\{\omega:\underset{0\leq k\leq K_{1}}{max}\widetilde{L}_{k}(\omega)\geq\overline{c}_{1}\})\leq P(\{\omega:\underset{k\geq0}{sup}\widetilde{L}_{k}(\omega)\geq\overline{c}_{1}\})\leq\\
min\{1,\frac{\widetilde{L}_{0}}{\overline{c}}\}=min\{1,\frac{l_{0}}{(\frac{4}{\epsilon})l_{0}}\}=min\{1,\frac{\epsilon}{4}\}=\frac{\epsilon}{4}.
\end{array}
\]

\noindent Note that the above inequalities hold regardless of how
we pick $K_{1}.$
\end{proof}
\vspace{0.5cm}

\begin{proof}[\textbf{Proof}. $P(E_{\underline{c},\overline{c}N,K,I}^{2})<\frac{\epsilon}{4}$]
\noindent  Following inequality \eqref{eq:I greater than} we set 

\begin{equation}
I=\frac{2\overline{c}}{\underline{c}\frac{\psi}{K}}+2\label{eq:15-1}
\end{equation}

\noindent and by diving the right hand side equality of inequality
\eqref{eq:12} by $Ie_{i}$ we obtain 

\[
0<\frac{e_{i-1}}{e_{i}}=1-\frac{(\overline{c}-\underline{c})}{Ie_{i}}<1-\frac{(\overline{c}-\underline{c})}{I\overline{c}}.
\]

\noindent Hence, applying Fact 2 with any interval $[e_{i+1},e_{i}]$
together with \eqref{eq:7} we obtain that the probability of $N$
or more up-crossings for any given sub-interval is not more than

\[
\begin{array}{l}
P(U_{\infty}(e_{i-1},e_{i})\geq N)=P(\{\omega:U_{\infty}(e_{i-1},e_{i})(\omega)\geq N\})\leq(\frac{e_{i-1}}{e_{i}})^{N}min\{1,\frac{\widetilde{L}_{0}}{e_{i-1}}\}<(\frac{e_{i-1}}{e_{i}})^{N}min\{1,\frac{\widetilde{L}_{0}}{\underline{c}}\}\leq\\
\\
(1-\frac{(\overline{c}-\underline{c})}{I\overline{c}})^{N}\frac{\widetilde{L}_{0}}{\underline{c}}.
\end{array}
\]

\noindent Consequently, the probability that some sub-interval is
up-crossed $N$ or more times is no more than 

\begin{equation}
\begin{array}{l}
P(\{\omega:At\ least\ one\ of\ the\ interval\ [e_{i-1},e_{i}]\ is\ upcrossed\ by\ \widetilde{L}(\omega)\ N\ or\ more\ times\ by\ time\ K\})=\\
\\
P(E_{\underline{c},\overline{c}N,K,I}^{2})\leq P(U_{\infty}(e_{i-1},e_{i})\geq N)\leq I(1-\frac{(\overline{c}-\underline{c})}{I\overline{c}})^{N}\frac{\widetilde{L}_{0}}{\underline{c}},
\end{array}\label{eq:regardless of K}
\end{equation}

\noindent and so in order to obtain $P(E_{\underline{c},\overline{c}N,K,I}^{2})\leq\frac{\epsilon}{4}$
we force $(1-\frac{(\overline{c}-\underline{c})}{I\overline{c}})^{N}\leq\frac{\epsilon}{4}\frac{1}{I}\frac{\underline{c}}{\widetilde{L}_{0}}$
where taking $log$ of both sides yields $N\geq\frac{log(\frac{\epsilon}{4}\frac{1}{I}\frac{\underline{c}}{\widetilde{L}_{0}})}{log(1-\frac{(\overline{c}-\underline{c})}{I\overline{c}})}.$
Hence taking any

\begin{equation}
N\geq\frac{log(\frac{\epsilon}{4}\frac{1}{I}\frac{\underline{c}}{\widetilde{L}_{0}})}{log(1-\frac{(\overline{c}-\underline{c})}{I\overline{c}})}\label{eq:14}
\end{equation}

\noindent yields that $P(E_{\underline{c},\overline{c}N,K,I}^{2})\leq\frac{\epsilon}{4}.$ 
\end{proof}
\vspace{0.3cm}

\begin{proof}[\textbf{Proof.} $P(E_{J,K}^{3})<\frac{\epsilon}{4}$]
 From inequalities \eqref{eq:N<} and \eqref{eq:14} we have

\begin{equation}
\frac{log(\frac{\epsilon}{4}\frac{1}{I}\frac{\underline{c}}{\widetilde{L}_{0}})}{log(1-\frac{(\overline{c}-\underline{c})}{I\overline{c}})}\leq N<\frac{1}{2}(\frac{K\stackrel[k=1]{J}{\sum}\frac{1}{k}}{I}-1)-1,\label{eq:N IS BOUNDED}
\end{equation}

\noindent thus set $N=\frac{log(\frac{\epsilon}{4}\frac{1}{I}\frac{\underline{c}}{\widetilde{L}_{0}})}{log(1-\frac{(\overline{c}-\underline{c})}{I\overline{c}})}$
and notice that $\frac{(2N+3)I}{K}\underset{K\rightarrow\infty}{\longrightarrow0,}$
hence there exist sufficiently large $J_{3}$ and sufficiently large
$K_{3}$ which satisfy the right hand side of \eqref{eq:N IS BOUNDED},
that is,$\frac{(2N+3)I}{K_{3}}<\stackrel[k=1]{J_{3}}{\sum}\frac{1}{k}$,
and so there exists $J_{4}>J_{3}>0$ such that $\frac{(2N+3)I}{K_{3}}<\stackrel[k=1]{J_{3}}{\sum}\frac{1}{k}<\stackrel[k=1]{J_{4}}{\sum}\frac{\psi}{k}$(check
for $\frac{\psi}{k^{v}}$ for $v>1)$.\footnote{Notice that from inequalities \ref{eq:N IS BOUNDED} the choice of
$J>2N(I+1)+2$ is arbitrary.} Now applying Lemma \eqref{(Lemma-A.4.-FL)} with $\frac{\epsilon}{4},\ J_{4}$
there exists a sufficiently large $K\geq K_{3}>0$ such that

\noindent 
\begin{equation}
P(\{\omega:J_{K}(\omega)\geq J_{3}\}\cup\{\omega:\widetilde{L}_{K}(\omega)=0\})=P(\{J_{K}\geq J_{3}\}\ or\ \{\widetilde{L}_{K}=0\})\geq1-\frac{\epsilon}{4},\label{eq:Large K}
\end{equation}
and so 

\[
\begin{array}{l}
P(E_{J_{3},K}^{3})=P((\{J_{K}\geq J_{3}\}\ or\ \{\widetilde{L}_{K}=0\})^{c})=P(\{J_{K}<J_{3}\}\ and\ \{\widetilde{L}_{K}>0\})=\\
\\
P(\{\omega:J_{K}(\omega)<J_{3}\ and\ \widetilde{L}_{K}(\omega)>0\})<\frac{\epsilon}{4}.
\end{array}
\]
\end{proof}
\vspace{0.3cm}

The next theorem poses a uniform bound on the convergence rate for
a class of weakly active supermartingales. This extends an earlier
Theorem \ref{Th:  A.1 (F=000026L)} result. The underlying technique
deployed in the proof illustrates how the paths of any weakly active
supermartingale look like as inferred from the corresponding point
wise definition.

\vspace{0.5cm}

\begin{proof}[\textbf{\small{}Proof of Theorem }{\small{}\ref{Th: uniform rate for weak active supermartingale} }]
 Given $\psi,l_{0}\underset{\sim}{L,\epsilon\in(0,1),}$ we conclude
that by choosing the following judicious choice of $\underline{c},\overline{c},I,K,N,J$
: $\underline{c}=\frac{\epsilon}{4}\underset{\sim}{L},$ and $\overline{c}=(\frac{4}{\epsilon})l_{0}$,
and $I=\frac{2\overline{c}}{\underline{c}\psi}+2$ (large enough),
and $N=\frac{I\frac{4}{\epsilon}\frac{\widetilde{L}_{0}}{\underline{c}}}{ln(1+\psi')}$
and $J=2I(N+1)+I+1$ (large enough), and large enough $K$ which satisfies
the conditions of \eqref{eq:Large K}, we obtain that 

\[
(E_{\underline{c},K}^{4})^{c}\subset E_{\overline{c},K}^{1}\cup E_{\underline{c},\overline{c}N,K,I}^{2}\cup E_{J,K\ }^{3}and\ P(E_{\overline{c},K}^{1}),P(E_{\underline{c},\overline{c}N,K,I}^{2}),P(E_{J,K}^{3})<\frac{\epsilon}{4}.
\]

\noindent This yields that

\begin{equation}
P(\{\omega:\underset{k\leq K}{min}\widetilde{L}_{k}(\omega)\geq\underline{c}\})=P((E_{\underline{c},K}^{4})^{c})<\frac{3\epsilon}{4}\label{eq:11}
\end{equation}

\noindent and so 

\begin{equation}
P(\{\omega:\underset{k\leq K}{min}\widetilde{L}_{k}(\omega)<\underline{c}\})=P(E_{\underline{c},K}^{4})\geq1-\frac{3\epsilon}{4}.\label{eq:10}
\end{equation}

\noindent Now, define the following stopping time $v(\omega)=min\{0\leq k\leq K:\widetilde{L}_{k}(\omega)<\underline{c}\}$
and observe that $\{\omega:\underset{k\leq K}{min}\widetilde{L}_{k}(\omega)<\underline{c}\}=\stackrel[k=1]{K}{\bigcup}\{v=k\}.$
In addition, note that for each $0\leq k'\leq K,$ $\{v=k'\}$ is
$g_{k}-adaptive$ and hence it can be decomposed in to finite disjoint
cylinders, denoted $\omega_{\{v=k'\}}^{j},$ and so $\widetilde{L}$
is a constant strictly less than $\underline{c}$ on each one of them.
Applying Fact 1 for $\underset{\sim}{L}$, $\underline{c}=\frac{\epsilon}{4}\underset{\sim}{L}$,
and the following supermartingale 

\[
\widetilde{L}_{k,k'}(\omega)=\widetilde{L}_{k+k'}(\omega)\boldsymbol{1}_{\omega_{\{v=k'\}}^{j}}(\omega)
\]

\noindent yields that 

\begin{equation}
P(\{\omega:\underset{k>K}{sup}\widetilde{L}_{k}(\omega)\leq\underset{\sim}{L}\}|\omega_{\{v=k'\}}^{j})\leq min(1,\frac{\widetilde{L}_{0,k'}}{\underset{\sim}{L}})<min(1,\frac{\underline{c}}{\underset{\sim}{L}})=min(1,\frac{\frac{\epsilon}{4}\underset{\sim}{L}}{\underset{\sim}{L}})=\frac{\epsilon}{4}.\label{eq:8}
\end{equation}

\noindent Since the sets $\{v=k\},\ k\geq0$ are disjoint we can infer
by taking the average of \eqref{eq:8} on all cylinders $\omega_{\{v=k'\}}^{j}$
corresponding to all $\{v=k\},$ that

\begin{equation}
P(\{\omega:\underset{k>K}{sup}\widetilde{L}_{k}(\omega)\leq\underset{\sim}{L}\}|\{\omega:\underset{k\leq K}{min}\widetilde{L}_{k}(\omega)<\underline{c}\})<\frac{\epsilon}{4},\label{eq:9}
\end{equation}

\noindent and we can conclude by \eqref{eq:9}, \eqref{eq:10}, \eqref{eq:11}
that 

\[
\begin{array}{l}
P(\{\omega:\underset{k>K}{max}\widetilde{L}_{k}(\omega)\leq\underset{\sim}{L}\})=P(\{\omega:\underset{k>K}{max}\widetilde{L}_{k}(\omega)\leq\underset{\sim}{L}\}|\{\omega:\underset{k\leq K}{min}\widetilde{L}_{k}(\omega)<\underline{c}\})P(\{\omega:\underset{k\leq K}{min}\widetilde{L}_{k}(\omega)<\underline{c}\})+\\
\\
P(\{\omega:\underset{k>K}{max}\widetilde{L}_{k}(\omega)\leq\underset{\sim}{L}\}|\{\omega:\underset{k\leq K}{min}\widetilde{L}_{k}(\omega)\geq\underline{c}\})P(\{\omega:\underset{k\leq K}{min}\widetilde{L}_{k}(\omega)\geq\underline{c}\})<\\
\\
\text{\ensuremath{\frac{\epsilon}{4}P(\{\omega:\underset{k\leq K}{min}\widetilde{L}_{k}(\omega)<\underline{c}\})+P(\{\omega:\underset{k>K}{max}\widetilde{L}_{k}(\omega)\leq\underset{\sim}{L}\}|\{\omega:\underset{k\leq K}{min}\widetilde{L}_{k}(\omega)\geq\underline{c}\})}\ensuremath{\frac{3\epsilon}{4}}}\leq\frac{\epsilon}{4}\cdot1+1\cdot\frac{3\epsilon}{4}=\epsilon.\\
\\
\end{array}
\]
\end{proof}
\vspace{0cm}

\subsection{Linkage between distance of measures and weakly active supermartingales}
\begin{defn}
For $\omega,k>0,P_{0},P_{1}\in\Delta(\Omega^{\infty})$ denote by

\[
\Delta(\omega^{k-1}):=||P_{1}(\cdot|\omega^{k-1})-P_{0}(\cdot|\omega^{k-1})||=\underset{a\in A}{max}\{|P_{1}(a|\omega^{k-1})-P_{0}(a|\omega^{k-1})|\}
\]

\noindent to be the distance between the conditional distributions
$P_{0},P_{1}$ over outcomes corresponding to $\omega^{k-1}.$
\end{defn}
\vspace{0cm}

The next lemma shows that given any history, $\tilde{\omega}^{k-1},$
the likelihood ratio is likely to substantially fall whenever $\Delta(\tilde{\omega}^{k-1})>\frac{\psi}{k^{\nu}}$
\begin{lem}
\textup{\label{Lemma: distance implies probability}Let $\psi,\nu\in(0,1),k>0,\tilde{\omega},P_{0},P_{1}$
such that $P_{0}(\tilde{\omega}^{k-1})>0$ and $\Delta(\tilde{\omega}^{k-1})>\frac{\psi}{k^{\nu}},$
then }

\textup{
\[
P_{0}(\{\omega:\ \frac{\tilde{L}_{k}(\omega)}{\tilde{L}_{k-1}(\omega)}-1\leq-\frac{\psi}{k^{\nu}\#A}\}|\tilde{\omega}^{k-1})\geq\frac{\psi}{k^{\nu}\#A}.
\]
}
\end{lem}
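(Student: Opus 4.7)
The plan is to isolate a single action $a^-\in A$ that realizes the one-step drop, and then show that both the size of that drop and the $P_0$-probability of selecting $a^-$ are simultaneously controlled by $\Delta(\tilde{\omega}^{k-1})$. The conceptual starting point is the identification
\[\frac{\tilde{L}_k(\omega)}{\tilde{L}_{k-1}(\omega)} = \frac{P_1(a_k\mid \tilde{\omega}^{k-1})}{P_0(a_k\mid \tilde{\omega}^{k-1})},\]
coming from viewing $\tilde{L}_k$ as the Radon--Nikodym derivative of $P_1$ with respect to $P_0$ on the natural filtration, so that the event in the statement is a union of atoms $\{a_k = a\}$ selected by a condition on this conditional likelihood ratio.

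First I would extract, from the hypothesis $\Delta(\tilde{\omega}^{k-1}) > \psi/k^\nu$, an action $a^*$ witnessing $|P_1(a^*\mid\tilde{\omega}^{k-1}) - P_0(a^*\mid\tilde{\omega}^{k-1})| > \psi/k^\nu$, and then split on the sign. If $P_0(a^*\mid\tilde{\omega}^{k-1}) - P_1(a^*\mid\tilde{\omega}^{k-1}) > \psi/k^\nu$, I take $a^- = a^*$. Otherwise $P_1(a^*\mid\tilde{\omega}^{k-1}) - P_0(a^*\mid\tilde{\omega}^{k-1}) > \psi/k^\nu$, and since both conditional distributions sum to $1$, the complementary sum $\sum_{a \neq a^*}[P_0(a\mid\tilde{\omega}^{k-1}) - P_1(a\mid\tilde{\omega}^{k-1})]$ also exceeds $\psi/k^\nu$. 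Averaging over the $\#A - 1$ summands then produces some $a^- \neq a^*$ with $P_0(a^-\mid\tilde{\omega}^{k-1}) - P_1(a^-\mid\tilde{\omega}^{k-1}) > \psi/(k^\nu(\#A-1))\geq \psi/(k^\nu\#A)$. In either branch I have an action satisfying $P_0(a^-\mid\tilde{\omega}^{k-1}) - P_1(a^-\mid\tilde{\omega}^{k-1}) \geq \psi/(k^\nu\#A)$.

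Second, I would exploit the bound $P_0(a^-\mid\tilde{\omega}^{k-1}) \leq 1$ to conclude
\[1 - \frac{P_1(a^-\mid\tilde{\omega}^{k-1})}{P_0(a^-\mid\tilde{\omega}^{k-1})} = \frac{P_0(a^-\mid\tilde{\omega}^{k-1}) - P_1(a^-\mid\tilde{\omega}^{k-1})}{P_0(a^-\mid\tilde{\omega}^{k-1})} \geq P_0(a^-\mid\tilde{\omega}^{k-1}) - P_1(a^-\mid\tilde{\omega}^{k-1}) \geq \frac{\psi}{k^\nu\#A},\]
so on the atom $\{a_k = a^-\}$ the ratio satisfies $\tilde{L}_k/\tilde{L}_{k-1} - 1 \leq -\psi/(k^\nu\#A)$. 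Since $P_1(a^-\mid\tilde{\omega}^{k-1}) \geq 0$, the same gap gives $P_0(a^-\mid\tilde{\omega}^{k-1}) \geq \psi/(k^\nu\#A)$; as $\{a_k = a^-\}$ is contained in the event of the statement, the desired conditional probability bound follows.

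The main obstacle is not analytic but combinatorial bookkeeping: choosing the right dichotomy on the sign of $P_1(a^*) - P_0(a^*)$ and absorbing the $\#A - 1$ from the averaging step into the cleaner constant $\#A$, while making sure the single factor $P_0(a^-\mid\tilde{\omega}^{k-1})\leq 1$ simultaneously controls both the ratio drop and the probability of selecting $a^-$.
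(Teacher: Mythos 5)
Your proposal is correct and follows essentially the same route as the paper: identify the one-step ratio with the conditional likelihood ratio, split on the sign of the difference at the maximizing action, and in the adverse case use that both conditional distributions sum to one to transfer the gap to some other action with the $\#A$ (vs.\ $\#A-1$) slack, then bound both the drop size and the selection probability by the same difference. Your final step unifying the two bounds via $P_0(a^-\mid\tilde{\omega}^{k-1})\leq 1$ and $P_1(a^-\mid\tilde{\omega}^{k-1})\geq 0$ is a minor streamlining of the paper's two displayed divisions, not a different argument.
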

\begin{proof}
Denote $A=\{a_{1},a_{2},...,a_{M}\}$ and note that $\forall\omega,k>0$
with $\tilde{L}_{k-1}(\omega)>0$ one has $\frac{\tilde{L}_{k}(\omega)}{\tilde{L}_{k-1}(\omega)}=\frac{P_{1}(\omega_{k}|\omega^{k-1})}{P_{0}(\omega_{k}|\omega^{k-1})}$
and so conditional on $\tilde{\omega}^{k-1}$ the random variable
$\frac{\tilde{L}_{k}}{\tilde{L}_{k-1}}$ is distributed, with respect
to $P_{0},$ as follows:

\[
\frac{\tilde{L}_{k}}{\tilde{L}_{k-1}}|\tilde{\omega}^{k-1}\sim\begin{cases}
\begin{array}{l}
\frac{P_{1}(a_{1}|\tilde{\omega}^{k-1})}{P_{0}(a_{1}|\tilde{\omega}^{k-1})},\ \ \ P_{0}(a_{1}|\tilde{\omega}^{k-1})\\
\frac{P_{1}(a_{2}|\tilde{\omega}^{k-1})}{P_{0}(a_{2}|\tilde{\omega}^{k-1})},\ \ \ P_{0}(a_{2}|\tilde{\omega}^{k-1})\\
.\\
.\\
.\\
\frac{P_{1}(a_{M}|\tilde{\omega}^{k-1})}{P_{0}(a_{M}|\tilde{\omega}^{k-1})},\ \ \ P_{0}(a_{M}|\tilde{\omega}^{k-1}).
\end{array}\end{cases}
\]
Consequently, it suffices to show that for some $m\in\{1,...,M\}:$ 

\[
\frac{P_{1}(a_{m}|\tilde{\omega}^{k-1})}{P_{0}(a_{m}|\tilde{\omega}^{k-1})}\leq1-\frac{\psi}{k^{\nu}M}\ \ and\ \ P_{0}(a_{m}|\tilde{\omega}^{k-1})\geq\frac{\psi}{k^{\nu}M}.
\]

By hypothesis, $\Delta(\tilde{\omega}^{k-1})=\underset{a\in A}{max}\{|P_{1}(a|\tilde{\omega}^{k-1})-P_{0}(a|\tilde{\omega}^{k-1})|\}\geq\frac{\psi}{k^{\nu}}.$
\footnote{Lemma \ref{Lemma: distance implies probability} can be stated equivalently:
if there exists $m$ such that $|P_{1}(a_{m}|\tilde{\omega}^{k-1})-P_{0}(a_{m}|\tilde{\omega}^{k-1})|\geq\frac{\psi}{k^{\nu}}$
then $P_{0}(\{\omega:\ \frac{\tilde{L}_{k}(\omega)}{\tilde{L}_{k-1}(\omega)}-1\leq-\frac{\psi}{k^{\nu}\#A}\}|\tilde{\omega}^{k-1})\geq\frac{\psi}{k^{\nu}\#A}.$
Hence, the negation yields that for any $m:$ $|P_{1}(a_{m}|\tilde{\omega}^{k-1})-P_{0}(a_{m}|\tilde{\omega}^{k-1})|<\frac{\psi}{k^{\nu}}$.}Suppose without loss of generality that this maximum attains at $m=1,$
hence, $|P_{1}(a_{1}|\tilde{\omega}^{k-1})-P_{0}(a_{1}|\tilde{\omega}^{k-1})|\geq\frac{\psi}{k^{\nu}}.$ 

Case 1: If $P_{0}(a_{1}|\tilde{\omega}^{k-1})-P_{1}(a_{1}|\tilde{\omega}^{k-1})\geq\frac{\psi}{k^{\nu}}$
then

\[
\frac{\psi}{k^{\nu}M}\leq\frac{\psi}{k^{\nu}}+P_{1}(a_{1}|\tilde{\omega}^{k-1})\leq P_{0}(a_{1}|\tilde{\omega}^{k-1})
\]

\noindent where dividing the last inequality by $P_{0}(a_{1}|\tilde{\omega}^{k-1})$
yields that

\noindent 
\[
\frac{\psi}{k^{\nu}M}\leq\frac{\psi}{k^{\nu}}\leq\frac{\frac{\psi}{k^{\nu}}}{P_{0}(a_{1}|\tilde{\omega}^{k-1})}\leq1-\frac{P_{1}(a_{1}|\tilde{\omega}^{k-1})}{P_{0}(a_{1}|\tilde{\omega}^{k-1})}
\]

\noindent and we are done.

Case 2: If $P_{0}(a_{1}|\tilde{\omega}^{k-1})-P_{1}(a_{1}|\tilde{\omega}^{k-1})\leq-\frac{\psi}{k^{\nu}}$
then 

\[
\begin{array}{l}
0=1-1=\stackrel[m=1]{M}{\sum}P_{0}(a_{m}|\tilde{\omega}^{k-1})-\stackrel[m=1]{M}{\sum}P_{1}(a_{m}|\tilde{\omega}^{k-1})=\stackrel[m=1]{M}{\sum}(P_{0}(a_{m}|\tilde{\omega}^{k-1})-P_{1}(a_{m}|\tilde{\omega}^{k-1}))=\\
P_{0}(a_{1}|\tilde{\omega}^{k-1})-P_{1}(a_{1}|\tilde{\omega}^{k-1})+\stackrel[m=2]{M}{\sum}(P_{0}(a_{m}|\tilde{\omega}^{k-1})-P_{1}(a_{m}|\tilde{\omega}^{k-1}))\leq\\
-\frac{\psi}{k^{\nu}}+\stackrel[m=2]{M}{\sum}(P_{0}(a_{m}|\tilde{\omega}^{k-1})-P_{1}(a_{m}|\tilde{\omega}^{k-1})).
\end{array}
\]

Consequently,

\noindent 
\[
\begin{array}{l}
\frac{\psi}{k^{\nu}}\leq\stackrel[m=2]{M}{\sum}(P_{0}(a_{m}|\tilde{\omega}^{k-1})-P_{1}(a_{m}|\tilde{\omega}^{k-1}))\leq(M-1)\underset{2\leq m}{max}\{P_{0}(a_{m}|\tilde{\omega}^{k-1})-P_{1}(a_{m}|\tilde{\omega}^{k-1})\}\leq\\
M\underset{2\leq m}{max}\{P_{0}(a_{m}|\tilde{\omega}^{k-1})-P_{1}(a_{m}|\tilde{\omega}^{k-1})\},
\end{array}
\]
 and so $\frac{\psi}{k^{\nu}M}\leq\underset{2\leq m}{max}\{P_{0}(a_{m}|\tilde{\omega}^{k-1})-P_{1}(a_{m}|\tilde{\omega}^{k-1})\}.$
W.l.o.g assume that the maximum attains at $m=2.$ Hence, $\frac{\psi}{k^{\nu}M}\leq P_{0}(a_{2}|\tilde{\omega}^{k-1})-P_{1}(a_{2}|\tilde{\omega}^{k-1})$
which yields that

\[
\frac{\psi}{k^{\nu}M}\leq\frac{\psi}{k^{\nu}M}+P_{1}(a_{2}|\tilde{\omega}^{k-1})\leq P_{0}(a_{2}|\tilde{\omega}^{k-1}),
\]

\noindent where dividing the last inequality by $P_{0}(a_{2}|\tilde{\omega}^{k-1})$
we conclude once again that

\[
\frac{\psi}{k^{\nu}M}\leq\frac{\frac{\psi}{k^{\nu}}}{MP_{0}(a_{2}|\tilde{\omega}^{k-1})}\leq1-\frac{P_{1}(a_{2}|\tilde{\omega}^{k-1})}{P_{0}(a_{2}|\tilde{\omega}^{k-1})}.
\]
\end{proof}
\vspace{0cm}

Consider a general setting in which $\Omega=\{0,1\}$ where $(\Omega^{\infty},f,P_{0},P_{1})$
is equipped with a filtration $(f_{k})_{k\geq0}$ with $f=\sigma(\stackrel[k=0]{\infty}{\bigcup}f_{k}).$
\begin{lem}
\textup{\label{Lemma: L is supermartingale}The process $\tilde{L}_{0}\equiv1,\ \{\tilde{L}_{k}(\omega)$
$=\frac{P_{1}(\omega^{k})}{P_{0}(\omega^{k})}\}_{k>0}$, is a $P_{0}-$supermartingale. }
\end{lem}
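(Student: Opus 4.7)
The plan is to verify the three standard ingredients of a supermartingale for the ratio process $\tilde L_k(\omega)=P_1(\omega^k)/P_0(\omega^k)$: adaptedness, $P_0$-integrability, and the conditional inequality $E^{P_0}[\tilde L_{k+1}\mid\mathcal{F}_k]\le \tilde L_k$ holding $P_0$-almost surely. Adaptedness is immediate from the definition: $\tilde L_k$ depends only on the cylinder $\omega^k$, so it is $f_k$-measurable. Non-negativity is equally immediate since both numerator and denominator are probabilities. The one subtle point that must be handled throughout is that $\tilde L_k$ is only well defined on the $P_0$-full-measure set where $P_0(\omega^k)>0$, which I will adopt implicitly as the convention (on the null set $\{P_0(\omega^k)=0\}$ one may set $\tilde L_k$ to an arbitrary value without affecting $P_0$-a.s.\ statements).

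The core computation is the conditional expectation step. Fix $k\ge 0$ and a cylinder $\tilde\omega^k$ with $P_0(\tilde\omega^k)>0$. Writing $A$ for the one-step alphabet (here $\Omega=\{0,1\}$, but the argument needs no finiteness beyond being able to sum), I compute
\begin{equation*}
E^{P_0}\bigl[\tilde L_{k+1}\bigm|\tilde\omega^k\bigr]
=\sum_{a:\ P_0(a\mid\tilde\omega^k)>0}\frac{P_1(\tilde\omega^k,a)}{P_0(\tilde\omega^k,a)}\,P_0(a\mid\tilde\omega^k)
=\frac{1}{P_0(\tilde\omega^k)}\sum_{a:\ P_0(a\mid\tilde\omega^k)>0}P_1(\tilde\omega^k,a).
\end{equation*}
Since probabilities are non-negative, the sum on the right is at most $\sum_{a\in A}P_1(\tilde\omega^k,a)=P_1(\tilde\omega^k)$, and therefore
\begin{equation*}
E^{P_0}\bigl[\tilde L_{k+1}\bigm|\tilde\omega^k\bigr]\ \le\ \frac{P_1(\tilde\omega^k)}{P_0(\tilde\omega^k)}\ =\ \tilde L_k(\tilde\omega^k),
\end{equation*}
which is the supermartingale inequality. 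Integrability then follows from this same estimate by iteration: $E^{P_0}[\tilde L_{k+1}]\le E^{P_0}[\tilde L_k]\le \cdots\le E^{P_0}[\tilde L_0]=1<\infty$, so every $\tilde L_k$ is $P_0$-integrable.

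The only real point to watch is the restriction of the sum to actions $a$ with $P_0(a\mid\tilde\omega^k)>0$: when $P_1$ places mass on a continuation that $P_0$ does not, the terms with $P_0(a\mid\tilde\omega^k)=0$ are dropped (they correspond to $P_0$-null events and do not enter the conditional expectation), and it is exactly this dropping that produces the inequality rather than equality. If one assumed $P_1\ll P_0$ at each step one would recover a martingale; without this assumption, the process is a bona fide supermartingale, which is what the lemma claims. I do not anticipate any obstacle beyond being careful with the null-set conventions; the argument is a direct one-line application of the definition of conditional expectation under $P_0$ together with the basic identity $\tilde L_{k+1}/\tilde L_k = P_1(\omega_{k+1}\mid\omega^k)/P_0(\omega_{k+1}\mid\omega^k)$.
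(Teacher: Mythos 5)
Your proof is correct and follows essentially the same route as the paper: condition on a cylinder $\tilde\omega^{k}$ with positive $P_{0}$-mass, sum $\frac{P_{1}}{P_{0}}$ against $P_{0}$ over the continuations with $P_{0}(a\mid\tilde\omega^{k})>0$, and bound the resulting sum of $P_{1}$-masses by $P_{1}(\tilde\omega^{k})$, obtaining the supermartingale inequality (with equality, i.e.\ a martingale, under $P_{1}\ll P_{0}$), exactly as in the paper's one-step computation. The extra remarks on adaptedness, integrability, and the null-set convention are fine but add nothing beyond the paper's argument.
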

\begin{proof}
It follows that for every $\omega,k>0:$

\[
\begin{array}{l}
E^{P_{0}}[\tilde{L}_{k}|\omega^{k-1}]=E^{P_{0}}[\tilde{L}_{k-1}\frac{P_{1}(\cdot|\omega^{k-1})}{P_{0}(\cdot|\omega^{k-1})}|\omega^{k-1}]=\tilde{L}_{k-1}\cdot E^{P_{0}}[\frac{P_{1}(\cdot|\omega^{k-1})}{P_{0}(\cdot|\omega^{k-1})}|\omega^{k-1}]=\\
\\
=\tilde{L}_{k-1}\underset{a\in\Omega:\,P_{0}(a|\omega^{k-1})>0}{\sum}P_{0}(a|\omega^{k-1})\frac{P_{1}(a|\omega^{k-1})}{P_{0}(a|\omega^{k-1})}=\tilde{L}_{k-1}\underset{\leq1}{\cdot\underbrace{\underset{a\in\Omega:\,P_{0}(a|\omega^{k-1})>0}{\sum}P_{1}(a|\omega^{k-1})}\leq\tilde{L}_{k-1}},
\end{array}
\]

\noindent and as a result $E^{P_{0}}[\tilde{L}_{k}|f_{k-1}]\leq\tilde{L}_{k-1},\ P_{0}-a.s.$
It should be noted that if $P_{1}\ll P_{0}$ then

\noindent $\underset{a\in\Omega:\,P_{0}(a|\omega^{k-1})>0}{\sum}P_{1}(a|\omega^{k-1})=1$
result in $\tilde{L}_{k}$ being a martingale.
\end{proof}
\vspace{0cm}

\begin{proof}[\textbf{\small{}Proof of Theorem \ref{th: l_k is weakly active supermartingale}}]

The fact that $l_{k}$ is a supermartingale follows directly from
equation \eqref{eq:iteratively of ln} and Lemma \ref{Lemma: L is supermartingale}.
To verify that $l_{k}$ is weakly active supermartingale with activity
$\frac{\psi}{2}$ and rate $\nu$ let $\tilde{\omega}^{k-1}$ such
that $l_{k-1}(\tilde{\omega})>0$ for some $k>0$. Observe that, since
the thresholds $\{\bar{p}_{1}(l)\}_{l\geq0}$ are strictly decreasing
then the monotonic of $\Delta(p)$ together with \eqref{eq: condition (psi,v) - informative}
yield

\begin{equation}
\Delta(\bar{p}_{1}(l_{k}))\geq\Delta(\bar{p}_{1}(\tilde{l}_{k}))>\frac{\psi}{(k+1)^{\nu}}\label{eq: |Deta l_t| >=00003D |Delta l^tilda_t| >=00003D psi/(t+1)^v}
\end{equation}

\noindent for all $k\geq0.$ Furthermore, since ($F^{L},F^{H})$ is
{\em$(\psi,\nu)$ - informative}, inequality \eqref{eq: |Deta l_t| >=00003D |Delta l^tilda_t| >=00003D psi/(t+1)^v}
yields

\[
\rho(2|l_{k-1}(\tilde{\omega}),H)-\rho(2|l_{k-1}(\tilde{\omega}),L)=\Delta(\bar{p}_{1}(l_{k-1}(\tilde{\omega})))\geq\Delta(\bar{p}_{1}(\tilde{l}_{k-1}))>\frac{\psi}{k^{\nu}}
\]

\noindent and hence

\noindent 
\[
\underset{m\in\{1,2\}}{max}\{|P^{H}(m|\tilde{\omega}^{k-1})-P^{L}(m|\tilde{\omega}^{k-1})|\}=\underset{m\in\{1,2\}}{max\{}|\rho(m|l_{k-1}(\tilde{\omega}),H)-\rho(m|l_{k-1}(\tilde{\omega}),L)|\}>\frac{\psi}{k^{\nu}}.
\]

Applying Lemma \ref{Lemma: distance implies probability} with $P^{H},P^{L},\tilde{\omega},$
and $\#A=2$ we obtain

\[
P^{H}(\{\omega:|\frac{l_{k}(\omega)}{l_{k-1}(\omega)}-1|>\frac{\psi}{k^{\nu}}\}|\tilde{\omega}^{k-1})\geq P^{H}(\{\omega:\frac{l_{k}(\omega)}{l_{k-1}(\omega)}-1\leq-\frac{\psi}{k^{\nu}}\}|\tilde{\omega}^{k-1})\geq\frac{\psi}{2k^{\nu}}
\]
 and the result follows.
\end{proof}
\vspace{0cm}
The proof of Theorem \ref{Th: uniform bound for (psi,nu) informative}
is directly followed. 
\begin{proof}[\textbf{\small{}Proof of Theorem }\ref{Th: uniform bound for (psi,nu) informative}\textbf{\small{} }]
\noindent Let $\psi,\nu\in(0,1).$ From Theorem \ref{th: l_k is weakly active supermartingale}
the induced process $l_{k}$ associated with every pair $(F^{L},F^{H})$
which is $(\psi,\nu)$ - informative is weakly active with the same
activity $\frac{\psi}{2}$ and rate $\nu$ started at $l_{0}=1.$
Hence, from the uniformity Theorem \ref{Th: uniform rate for weak active supermartingale}
for all $\bar{L}<1$ we are provided with a uniform finite time $K=K(\psi,\nu,\epsilon,\bar{L}),$
which solely depends on these variables, such that for all pairs $(F^{L},F^{H})$
which are $(\psi,\nu)$ - informative there is $P^{H}$ - probability
of at least $(1-\epsilon)$ that
\[
P(\underset{k>K}{sup}l_{k}\leq\bar{L})=P(\{\omega:\underset{k>K}{sup}l_{k}(\omega)\leq\bar{L}\})\geq1-\epsilon
\]

\noindent and the result follows.
\end{proof}
\vspace{0cm}

\section{Extracting weakly active supermartingales}

\vspace{0cm}

\subsection{\label{subsec:Construction}Construction }

Let $\psi,\nu\in(0,1)$ and define an increasing sequence of stopping
times $\{\tau_{k}\}_{k=0}^{\infty}$ relative to $\{L_{t}\}_{t\geq0}$
and $\{\epsilon_{t.}=\frac{\psi}{(t+1)^{\nu}}\}_{t\geq0}$ inductively
as follows: first set $\tau_{0}\equiv0$ and for all $k>0$ if $\tau_{k-1}(\omega)=\infty$
set $\tau_{k}(\omega)=\infty$ where for $\tau_{k-1}(\omega)<\infty$
define $\tau_{k}(\omega)$ to be the smallest integer $t>\tau_{k-1}(\omega)$
such that either 

\begin{equation}
P(\{\bar{\omega}:\,\frac{L_{t}(\bar{\omega})}{L_{t-1}(\bar{\omega})}-1<-\frac{\epsilon_{t}}{\#A}\}|\ \omega^{t-1})>\frac{\epsilon_{t}}{\#A}\label{condition1}
\end{equation}
or 

\noindent 
\begin{equation}
\frac{L_{t}(\omega)}{L_{\tau_{k-1}(\omega)}(\omega)}-1>\frac{\epsilon_{t}}{2\#A}.\label{condition 2-1}
\end{equation}
If there is no such $t,$ set $\tau_{k}(\omega)=\infty.$\footnote{Note that whenever condition \eqref{condition1} holds, $\tau_{k}(\hat{\omega})=t$
for all $\hat{\omega}\in\omega^{t-1}.$} 

Define the faster process: $\{\tilde{L}_{k}\}_{k=0}^{\infty}$ relative
to the process $\{L_{t}\}_{t\geq0}$ by

\[
\tilde{L}_{k}(\omega)=\begin{cases}
\begin{array}{l}
L_{\tau_{k}(\omega)}(\omega),\\
0,
\end{array} & \begin{array}{l}
\tau_{k}(\omega)<\infty\\
\tau_{k}(\omega)=\infty.
\end{array}\end{cases}
\]

\vspace{0cm}

Note that $\{\tilde{L}_{k}\}_{k=0}^{\infty}$ is defined with respect
to the stopping times $\{\tau_{k}\}_{k=0}^{\infty}$ which is adapted
to an associated filtration whose events are denoted by $\tilde{\omega}^{k}.$\footnote{$k$ is interpreted as the first time that $L$ satisfies one of the
three rules, where $t$ denotes a general time index.}

\vspace{0cm}

\begin{lem}
\textup{\label{Lemma: the faster process is weakly active supermartingale}For
all $t>0,$ the faster process $\{\tilde{L}_{k}\}_{k\geq0}$ relative
to the process $\{L_{t}\}_{t\geq0}$ and $\{\epsilon_{t}=\frac{\psi}{(t+1)^{\nu}}\}_{t\geq0}$
is weakly active supermartingale (under $P$) with activity $\frac{\psi}{2\#A}$
and rate $\nu.$}
\end{lem}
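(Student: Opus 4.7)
The plan is to establish two separate claims: that $\{\tilde{L}_k\}$ is a non-negative supermartingale, and that it satisfies the weakly active inequality with the stated parameters. The supermartingale part is the easier half and rests on optional sampling applied to $L$ at the increasing sequence of stopping times $\{\tau_k\}$, with the convention $\tilde{L}_k \equiv 0$ on $\{\tau_k = \infty\}$ (consistent with $L_\infty := 0$ for a non-negative supermartingale). Granted that $L$ is itself a supermartingale under $P$ (as in Lemma~\ref{Lemma: L is supermartingale}), this yields $E^{P}[\tilde{L}_{k+1} \mid \mathcal{F}_{\tau_k}] \leq \tilde{L}_k$ on $\{\tau_k < \infty\}$.

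For the weakly active property, fix a history $\tilde{\omega}^k$ with $\tilde{L}_k(\tilde{\omega}) > 0$, so $\tau_k(\tilde{\omega}) < \infty$. The argument proceeds by a case analysis on the mechanism triggering $\tau_{k+1}$: (a) condition \eqref{condition1} triggers at $t=\tau_{k+1}$; (b) condition \eqref{condition 2-1} triggers at $t=\tau_{k+1}$; or (c) neither ever triggers, so $\tau_{k+1}=\infty$ and $\tilde{L}_{k+1}=0$ automatically supplies a downward jump of magnitude $1$. Case (b) immediately yields the pathwise upward jump $\tilde{L}_{k+1}/\tilde{L}_k - 1 > \epsilon_{\tau_{k+1}}/(2\#A)$ on the corresponding continuations, with full $P$-conditional mass on those paths.

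The heart of the proof is case (a). The key observation is that condition \eqref{condition 2-1} failed at every intermediate time $\tau_k < s < \tau_{k+1}$; this forces the pathwise bound $L_{\tau_{k+1}-1}/L_{\tau_k} \leq 1 + \epsilon_{\tau_{k+1}-1}/(2\#A)$, i.e., $L$ has not drifted substantially upward between $\tau_k$ and $\tau_{k+1}-1$. Combining this with \eqref{condition1}, which gives conditional probability $> \epsilon_{\tau_{k+1}}/\#A$ of the event $L_{\tau_{k+1}}/L_{\tau_{k+1}-1} - 1 < -\epsilon_{\tau_{k+1}}/\#A$, the ratio $\tilde{L}_{k+1}/\tilde{L}_k$ on this sub-event lies below $1 - \epsilon_{\tau_{k+1}}/(2\#A)$ after absorbing the small cross-term into the halving factor.

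Aggregating the contributions over a partition by $\tau_{k+1}$ and converting the absolute-time parameter $\epsilon_{\tau_{k+1}}$ into the fast-index rate $\psi/(k+1)^\nu$ (using $\tau_{k+1} \geq k+1$ and the slack supplied by the $1/(2\#A)$ constant) delivers the weakly active inequality with activity $\psi/(2\#A)$ and rate $\nu$. The main obstacle is twofold: first, the careful fusion in case (a) of a pathwise bound on $L_{\tau_{k+1}-1}/L_{\tau_k}$ with a conditional probability bound at the final step, producing a single $\mathcal{F}_{\tau_k}$-measurable statement about $\tilde{L}_{k+1}/\tilde{L}_k$; second, the constant-tracking in the index translation, where the apparent loss incurred by $\tau_{k+1}$ possibly exceeding $k+1$ is compensated by the halving factor in condition \eqref{condition 2-1} and the denominator $\#A$.
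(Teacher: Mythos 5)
Your proposal follows essentially the same route as the paper: optional stopping for the supermartingale part, then a three-way case split (rule \eqref{condition1}, rule \eqref{condition 2-1}, and $\tau_{k+1}=\infty$), with the crux being the rule-\eqref{condition1} case where the failure of \eqref{condition 2-1} at time $\tau_{k+1}-1$ caps the upward drift and the product $(1-\frac{\epsilon}{\#A})(1+\frac{\epsilon}{2\#A})<1-\frac{\epsilon}{2\#A}$ delivers the downward jump of half size with conditional probability at least $\frac{\epsilon}{\#A}$. This is exactly the paper's argument.

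One caveat concerns your final ``index translation'' step. You claim that converting $\epsilon_{\tau_{k+1}}$ into the fast-index rate $\frac{\psi}{(k+1)^{\nu}}$ works ``using $\tau_{k+1}\geq k+1$ and the slack supplied by the $1/(2\#A)$ constant,'' but the inequality runs the other way: since $\epsilon_t=\frac{\psi}{(t+1)^{\nu}}$ is decreasing and $\tau_{k+1}\geq k+1$, one has $\epsilon_{\tau_{k+1}}\leq\epsilon_{k+1}<\epsilon_{k}$, so the jump size and probability you establish at the stopping time are \emph{smaller} than the thresholds $\frac{\psi}{2\#A(k+1)^{\nu}}$ demanded by the weakly active definition at fast index $k$; the halving factor cannot compensate because it is already spent in the stated activity $\frac{\psi}{2\#A}$. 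To be fair, the paper's own proof does not resolve this either: it silently passes from the fast-index inequality (its eq.~(B.1)) to bounds expressed in terms of $\epsilon_s$, $\epsilon_t$ at the stopping times, so your bounds match what the paper actually proves; but the compensation mechanism you invoke does not justify the translation, and if you want the lemma exactly as stated you would need either $\tau_{k}=k$ or some additional control relating $\tau_k$ to $k$.
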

\begin{proof}
Since $\{\tau_{k}\}_{k=0}^{\infty}$ are stopping times and $\{L_{t}\}_{t\geq0}$
is a supermartingale, it follows that $\{\tilde{L}_{k}\}_{k=0}^{\infty}$
is a supermartingale. Now, let $\tilde{\omega}\in\Omega^{\infty},k>0$
such that $\tilde{L}_{k-1}(\tilde{\omega})>0$, we need to show that 

\begin{equation}
P(\{\omega:|\frac{\tilde{L}_{k}(\omega)}{\tilde{L}_{k-1}(\omega)}-1|>\frac{\epsilon_{k}}{2\#A}\}|\,\tilde{\omega}^{k-1})>\frac{\epsilon_{k}}{2\#A}.\label{eq:1}
\end{equation}

To see this, let $s=\tau_{k-1}(\tilde{\omega})$ and note that $\tau_{k-1}(\omega)=s,\ \forall\omega\in\tilde{\omega}^{s}.$
Since $\tilde{\omega}^{s}$ is cylinder it follows that $L_{s}(\omega)\equiv L_{s}(\tilde{\omega})>0,\ \forall\omega\in\tilde{\omega}^{s},$
and in addition, $\tau_{k}(\cdot|\tilde{\omega}^{s})$ is a random
variable which denotes the first time after $k-1$ conditional on
the event which consists all $\omega's$ for which $\tau_{k-1}(\omega)=s$
and $\omega^{s}\in\tilde{\omega}^{s}.$\footnote{$0<\tilde{L}_{k-1}(\tilde{\omega}):=L_{\tau_{k-1}(\tilde{\omega})}(\tilde{\omega}):=L_{s}(\tilde{\omega})$}
Hence, in terms of equation \eqref{eq:1} it is enough to show that 

\begin{equation}
P(\{\omega:|\frac{L_{\tau_{k}(\omega)}(\omega)}{L_{s}(\omega)}-1|>\frac{\epsilon_{t}}{2\#A}\}|\,\tilde{\omega}^{s})>\frac{\epsilon_{t}}{\#A}.\label{eq:2}
\end{equation}

\noindent where $L_{\infty}$ is defined to be the constant zero.
Now note that the sets:

\[
\begin{array}{l}
\{rule\ 1\}:=\{\omega:\tau_{k}(\omega)\ satisfies\ rule\ 1\},\\
\{rule\ 2\setminus1\}:=\{\omega:\tau_{k}(\omega)\ satisfies\ rule\ 2\ and\ does\ not\ satisfy\ rule\ 1\},\\
\{rule\ 3\}:=\{\omega:\tau_{k}(\omega)\ satisfies\ rule\ 3\},
\end{array}
\]

\noindent are disjoint, and that for any $\omega,$ one of these three
sets must be used to choose $\tau_{k}(\omega).$ Hence, it is enough
to show that equation \eqref{eq:2} holds conditional on each set
and thus it holds averaging over all of them. 

\noindent Conditional on $\{rule\ 3\}:$ 

\noindent 
\[
\begin{array}{l}
P(\{\omega:|\frac{L_{\tau_{k}(\omega)}(\omega)}{L_{s}(\omega)}-1|>\frac{\epsilon_{t}}{2\#A}\}|\,\tilde{\omega}^{s}\cap\{rule\ 3\})=\\
P(\{\omega:|\frac{0}{L_{s}(\omega)}-1|>\frac{\epsilon_{t}}{2\#A}\}|\,\tilde{\omega}^{s}\cap\{\omega:L_{\infty}(\omega)=0\})=1.
\end{array}
\]
 Conditional on $\{rule\ 2\setminus1\}:$ 

\[
\begin{array}{l}
1=P(\{\omega:\frac{L_{\tau_{k}(\omega)}(\omega)}{L_{s}(\omega)}-1>\frac{\epsilon_{t}}{2\#A}\}|\,\tilde{\omega}^{s}\cap\{rule\ 2\setminus1\})=\\
P(\{\omega:|\frac{L_{\tau_{k}(\omega)}(\omega)}{L_{s}(\omega)}-1|>\frac{\epsilon_{t}}{2\#A}\}|\,\tilde{\omega}^{s}\cap\{rule\ 2\setminus1\}).
\end{array}
\]

\noindent Conditional on $\{rule\ 1\}:$ since condition 1 holds we
obtain that $\tau_{k}(\hat{\omega})=t$ for all $\hat{\omega}\in\omega^{t-1}$
and thus

\begin{equation}
P(\{\omega:\frac{L_{\tau_{k}(\omega)}(\omega)}{L_{\tau_{k}(\omega)-1}(\omega)}-1<-\frac{\epsilon_{t}}{\#A}\}|\,\tilde{\omega}^{s}\cap\{rule\ 1\})\geq\frac{\epsilon_{t}}{\#A}.\label{eq:3}
\end{equation}

\noindent Now, note that if condition 2 was satisfied at time $\tau_{k}(\omega)-1$
then $\tau_{k}(\omega)$ would not be the infimum on all $t>\tau_{k-1}(\omega)$
such that condition 1 holds at time $t=\tau_{k}(\omega)$. Therefore,
condition 2 was not used at time $\tau_{k}(\omega)-1$ just before
time $\tau_{k}(\omega),$ which yields that

\[
\frac{L_{\tau_{k}(\omega)-1}(\omega)}{L_{s}(\omega)}-1\leq\frac{\epsilon_{s}}{2\#A},\ \forall\omega\in\tilde{\omega}^{s}.
\]

\noindent Since for all $\omega$ that satisfies $\frac{L_{\tau_{k}(\omega)}(\omega)}{L_{\tau_{k}(\omega)-1}(\omega)}-1<-\frac{\epsilon_{s}}{\#A}$
and $\frac{L_{\tau_{k}(\omega)-1}(\omega)}{L_{s}(\omega)}-1\leq\frac{\epsilon_{s}}{2\#A}$
we have

\noindent 
\[
\frac{L_{\tau_{k}(\omega)}(\omega)}{L_{\tau_{k}(\omega)-1}(\omega)}\frac{L_{\tau_{k}(\omega)-1}(\omega)}{L_{s}(\omega)}=\frac{L_{\tau_{k}(\omega)}(\omega)}{L_{s}(\omega)}<(1-\frac{\epsilon_{s}}{\#A})(1+\frac{\epsilon_{s}}{2\#A})=1+\frac{\epsilon_{s}}{2\#A}-\frac{\epsilon_{s}}{\#A}-\frac{\epsilon_{s}^{2}}{2(\#A)^{2}},
\]

\noindent it follows that

\noindent 
\begin{equation}
\frac{L_{\tau_{k}(\omega)}(\omega)}{L_{s}(\omega)}-1<-\frac{\epsilon_{s}}{2\#A}-\frac{\epsilon_{s}^{2}}{2(\#A)^{2}}.\label{eq:4}
\end{equation}

Combining \eqref{eq:3} and \eqref{eq:4} shows that 

\noindent 
\[
P(\{\omega:\frac{L_{\tau_{k}(\omega)}(\omega)}{L_{s}(\omega)}-1<-(\frac{\epsilon_{s}}{2\#A}+\frac{\epsilon_{s}^{2}}{2(\#A)^{2}})\}|\,\tilde{\omega}^{s}\cap\{rule\ 1\})\geq\frac{\epsilon_{s}}{\#A},
\]

\noindent where $-(\frac{\epsilon_{s}}{2\#A}+\frac{\epsilon_{s}^{2}}{2(\#A)^{2}})<-\frac{\epsilon_{s}}{2\#A}$
yields that 

\[
\begin{array}{l}
P(\{\omega:|\frac{L_{\tau_{k}(\omega)}(\omega)}{L_{s}(\omega)}-1|>\frac{\epsilon_{s}}{2\#A}\}|\,\tilde{\omega}^{s}\cap\{rule\ 1\})=\\
P(\{\omega:\frac{L_{\tau_{k}(\omega)}(\omega)}{L_{s}(\omega)}-1<-\frac{\epsilon_{s}}{2\#A}\}\cup\{\omega:(\frac{L_{\tau_{k}(\omega)}(\omega)}{L_{s}(\omega)}-1>\frac{\epsilon_{s}}{2\#A}\}|\,\tilde{\omega}^{s}\cap\{rule\ 1\})\geq\\
P(\{\omega:\frac{L_{\tau_{k}(\omega)}(\omega)}{L_{s}(\omega)}-1<-\frac{\epsilon_{s}}{2\#A}\}|\,\tilde{\omega}^{s}\cap\{rule\ 1\})\geq\frac{\epsilon_{s}}{\#A}>\frac{\epsilon_{s}}{2\#A}
\end{array}
\]

\noindent and the result follows.
\end{proof}
\vspace{0cm}

\vspace{0cm}

The proof of Theorem \ref{Th: relexing informativeness} is generalized
to the case where the number of elements, $|A|$, is arbitrary and
it is relied on achieving a uniform bound on the up-crossing probability
of any non-negative supermartingale which admits sufficiently (finite)
many fluctuations. 
\begin{proof}[\textbf{Proof of Theorem \ref{Th: relexing informativeness}}]
 Let $\epsilon,\psi,\nu\in(0,1)$. We will show that there exists
a uniform constant $K=K(\epsilon,\psi,\nu)$ such that on the set
of histories $\omega^{t}$ of $P^{H}-probability-(1-\epsilon)$, and
for all $n>0,$ only two scenarios are possible; if there exists a
subsequence of times $(k_{i})_{i=1}^{K+1}\subset\{1,...,n\}$, and
there exists a subsequence of corresponding outcomes $(a_{k_{i}})_{i=1}^{K+1}\subset A^{K+1}$
such that $|\rho(a_{k_{i}}|l_{k_{i}}(\omega),H)-\rho(a_{k_{i}}|l_{k_{i}}(\omega),L)|\geq\epsilon$
for all $1\leq i\leq K+1$, then, the value $l_{n}(\omega)$ is strictly
less than $\epsilon$, and more importantly, it remains below $\epsilon$
for all future periods $k$ from time $n$ onward. In all other scenarios,
the distance between the corresponding transitions satisfies $|\rho(a_{k}|l_{k}(\omega),H)-\rho(a_{k}|l_{k}(\omega),L)|\geq\epsilon$
for all $a_{k}\in A$ in all but $K$ periods $k$ in $\{1,...,n\}.$

As in the construction stated in \ref{subsec:Construction}, define
the increasing sequence of stopping times $\{\tau_{k}\}_{k=0}^{\infty}$
relative to $l:=\{l_{k}\}_{k\geq0}$ and $\epsilon$ inductively.
Let $\tilde{l}:=\{\tilde{l}_{k}\}_{k\geq0}$ be the result in faster
process. From Lemma \ref{Lemma: L is supermartingale} $l$ is a supermartingale;
hence from a standard result, $\tilde{l}$ is a supermartingale. Furthermore,
by Theorem \ref{th: l_k is weakly active supermartingale}, $\tilde{l}$
is weaky active supermartingale with activity $\frac{\psi}{2|A|}$
and rate $\nu.$

Applying Theorem \ref{Th: uniform rate for weak active supermartingale}
we are provided with an integer $K=K(\epsilon,\psi,\nu)>0$ (depending
only on these variables) such that for any weakly active supermartingale
$\tilde{l}$ with activity $\frac{\psi}{2|A|}$ rate $\nu,$ started
at $\tilde{l}_{0}\equiv0$ one has

\begin{equation}
P^{H}(\underset{k>K}{sup}\tilde{l}_{k}<\epsilon)>1-\epsilon.\label{eq:supL-1}
\end{equation}

In addition, By Lemma \ref{Lemma: distance implies probability},
whenever $|\rho(a|l_{k}(\omega),H)-\rho(a|l_{k}(\omega),L)|\geq\epsilon$
for some $k$ and $a\in A,$ then condition \eqref{condition1} holds.
Consequently, the process $\text{\ensuremath{\tilde{l}}}$ takes into
account all these observations and omits only observations where $|\rho(a|l_{k}(\omega),H)-\rho(a|l_{k}(\omega),L)|\leq\epsilon$
for all $a\in A$ (although, by condition \eqref{condition 2-1},
not necessarily all of them). 

As a result, conditional on the state being $H$ there exists a universal
constant $K=K(\epsilon,\psi,\nu)$, which does not depend on the pair
$(F^{L},F^{H})$, so that on the set of histories, $\omega^{t},$
of probability $(1-\epsilon)$ under $P^{H}$, in all but $K$ periods
either $|\rho(a|l_{k}(\omega),H)-\rho(a|l_{k}(\omega),L)|\leq\epsilon$
for all $a\in A$ or $l_{k}(\omega)<\epsilon.$ 

Now assume that there exist $\text{\ensuremath{K+1} periods }(k_{i})_{i=1}^{K+1}\subset\{1,...,n\}$
and $(a_{k_{i}})_{i=1}^{K+1}\subset A^{K+1}$ such that $|\rho(a_{k_{i}}|l_{k_{i}}(\omega),H)-\rho(a_{k_{i}}|l_{k_{i}}(\omega),L)|\geq\epsilon$
for all $1\leq i\leq K+1$, with $P^{H}(\omega^{k_{i}-1})>0$ and
let $n>K+1.$ Then inequality \eqref{eq:supL-1} ensures us that with
$P^{H}-probability-(1-\epsilon)$ 

\begin{equation}
\tilde{l}_{K+1}=l_{\tau_{K+1}}<\epsilon\label{eq:9-1}
\end{equation}

\noindent where by condition \eqref{eq:2} for any $k\geq n\geq\tau_{K+1}$
we obtain that either $\tilde{l}_{k}$ drops below $\epsilon$ or 

\begin{equation}
l_{k}(\omega)<l_{\tau_{K+1}}(\omega)(1+\frac{\epsilon}{2})<\epsilon(1+\epsilon)\label{eq:10-1}
\end{equation}

\noindent and hence it cannot exceed $\epsilon(1+\epsilon$). 

We conclude that there exists a constant $K$, which does not depend
on any pair $(F^{L},F^{H}),$ such that for any sufficiently large
$n>K$, with $P^{H}$ - probability - $(1-\epsilon);$ if there exist
$K+1$ periods in which $P^{H}$ and $P^{L}$ are slightly different
above $\frac{\psi}{k^{\nu}}$ in periods $k$ then the likelihood
ratio at any time after $n$ never exceeds $\epsilon(1+\epsilon).$ 

\vspace{0cm}
\end{proof}
\vspace{0cm}

\section{Efficiency}

\vspace{0.5cm}

\begin{proof}[\textbf{\small{}Proof of Theorem \ref{Th: finite expected time of the fist correct action},
part a}]

Let $\psi,\nu\in(0,1)$ and observe that for all $k\geq1$one has

\[
\begin{array}{l}
P^{H}(\tau=k)=(1-F^{H}(\bar{p}_{1}(\tilde{l}_{k})))\stackrel[t=1]{k-1}{\prod}F^{H}(\bar{p}_{1}(\tilde{l}_{t}))\\
\leq\stackrel[t=1]{k-1}{\prod}F^{H}(\bar{p}_{1}(\tilde{l}_{t}))\leq\stackrel[t=1]{k-1}{\prod}F^{L}(\bar{p}_{1}(\tilde{l}_{t}))-\frac{\psi}{(t+1)^{\nu}})\leq\stackrel[t=1]{k-1}{\prod}(1-\frac{\psi}{(t+1)^{\nu}}).
\end{array}
\]

\noindent In addition, since the exponential term $(1-\frac{\psi}{(k+1)^{\nu}})^{k}$
is dominated by the polynomial one $\frac{1}{k^{3}}$ for all sufficiently
large $k,$  there exist $N>0$ and $K:=K(\psi,\nu)=\stackrel[t=2]{N}{\sum}t(1-\frac{\psi}{t^{\nu}})^{t-1}+\stackrel[t=N+1]{\infty}{\sum}\frac{1}{t^{2}}$
such that 

\[
\begin{array}{l}
E^{H}[\tau]=\stackrel[t=1]{\infty}{\sum}tP^{H}(\tau=t)\leq\stackrel[t=2]{\infty}{\sum}t\stackrel[k=1]{t-1}{\prod}(1-\frac{\psi}{(k+1)^{\nu}})\leq\stackrel[t=2]{\infty}{\sum}t(1-\frac{\psi}{t^{\nu}})^{t-1}=\\
=\stackrel[t=2]{N}{\sum}t(1-\frac{\psi}{t^{\nu}})^{t-1}+\stackrel[t=N+1]{\infty}{\sum}t(1-\frac{\psi}{t^{\nu}})^{t-1}\leq\stackrel[t=2]{N}{\sum}t(1-\frac{\psi}{t^{\nu}})^{t-1}+\stackrel[t=N+1]{\infty}{\sum}\frac{1}{t^{2}}=K<\infty,
\end{array}
\]

\noindent and the result follows.
\end{proof}
\vspace{0cm}

\end{titlepage}
\end{document}